\DeclareTextSymbolDefault{\DH}{T1}
\setlist{nolistsep, noitemsep}
\definecolor{GrayTableHighlight}{gray}{0.9}
\crefname{line}{line}{lines}
\newcommand{\colourfakemessage}[1]{\textcolor{BrickRed}{#1}}
\newcommand{\mmessage}[1]{{\color{black}\normalfont\ensuremath{\langle #1 \rangle}}}
\newcommand{\smessage}[2]{{\color{black}\normalfont\ensuremath{\langle #1 \rangle_{\sigma_{#2}}}}}
\newcommand{\tsmessage}[3]{\textnormal{\color{black}\ensuremath{\langle \textsf{#1}, #2 \rangle_{\sigma_{#3}}}}}
\newcommand{\tsmessageinnsmessage}[4]{\textnormal{\color{black}\ensuremath{\langle\langle \textsf{#1}, #2 \rangle_{\sigma_{#3}}, #4\rangle}}}
\newcommand{\tsmessageinnnsmessage}[3]{\color{black}\textnormal{\ensuremath{\langle\langle \textsf{#1}, #2 \rangle, #3\rangle}}}
\newcommand{\tnsmessage}[2]{\textnormal{\color{black}\ensuremath{\langle \textsf{#1}, #2 \rangle}}}
\newcommand{\tfakensmessage}[2]{\textnormal{\colourfakemessage{\ensuremath{\langle \textsf{#1}, #2 \rangle}}}}
\def\myoverline#1{\ThisStyle{%
		\setbox0=\hbox{$\SavedStyle#1$}%
		\stackengine{1.4\LMpt}{$\SavedStyle#1$}{\rule{\wd0}{.8\LMpt}}{O}{c}{F}{F}{S}%
	}}
\newcommand{\fv}[1]{{\myoverline{#1}}}
\newcommand{\kec}[1]{\textnormal{\ensuremath{\mathnormal{\texttt{KEC}(#1)}}}}
\newcommand{\cs}[1]{\ensuremath{\textit{CS}(#1)}}
\newcommand{\quorum}[1]{\ensuremath{\textit{Quorum}(#1)}}
\newcommand{\fb}{\ensuremath{\mathit{FB}}}
\newcommand{\pb}[1]{\ensuremath{\mathit{PB}}}
\newcommand{\eb}{\ensuremath{\mathit{EB}}}
\newcommand{\nprepared}{\emph{prepared}}
\newcommand{\nprepare}[1]{\emph{prepare#1}}
\newcommand{\ibfp}[1]{IBFT-block-finalisation-protocol}
\newcommand{\ibftmone}[1]{IBFT-M1}
\newcommand{\ibfpmone}[1]{\ibfp{}-M1}
\newcommand{\ibftmonepmone}[1]{IBFT-M1-\ibfpmone{}}
\newcommand{\ibfpmtwo}[1]{\ibfp{}-M2}
\newcommand{\ibftmonepmtwo}[1]{IBFT-M1-\ibfpmtwo{}}
\newcommand{\ibftmonepmthree}[1]{IBFT-M1-\ibfpmthree{}}
\newcommand{\ibfpmthree}[1]{IBFT-2.0-block-finalisation-protocol}
\newcommand{\citewithauthor}[1]{\citeauthor*{#1}~\cite{#1}}
\newcommand{\devptwop}[1]{\textsf{\DH}\ensuremath{\mathsf{ \Xi Vp2p}}}
\newcommand{\bconcat}{\mathop{\kern 0.05em||}}
\newtheorem{theorem}{Theorem}
\newtheorem{lemma}{Lemma}
\newtheorem{corollary}{Corollary}
\numberwithin{corollary}{lemma}
\theoremstyle{definition}
\newtheorem{definition}{Definition}
\newtheorem{assumption}{Assumption}
\newcounter{improvement}
\DeclareRobustCommand{\improvement}{\refstepcounter{improvement}Improvement~\theimprovement}
\crefname{improvement}{Improvement}{Improvements}
\newcommand{\iref}[1]{\hyperref[#1]{\ref*{#1}}}
\title{IBFT 2.0: A Safe and Live Variation of the IBFT Blockchain Consensus Protocol for Eventually Synchronous Networks}
\author{Roberto Saltini
		\and 
		David Hyland-Wood
	}
\date{\textit{PegaSys (ConsenSys)} \\
	\vspace{1em}
	\today}
\begin{document}

\SetKwBlock{Init}{Initialisation:}{}
\SetKwBlock{Functions}{Functions:}{}
\SetKwBlock{Expansions}{Macro Expansions:}{}
\SetKwBlock{Procedures}{Procedures:}{}
\SetKwBlock{UponRules}{Upon Blocks:}{}
\SetKwData{Reception}{reception of}
\SetKwData{HasReceived}{has received}
\SetKwFunction{Proposer}{proposer}
\newcommand{\fairproposer}[1]{\ensuremath{FairProposer(#1)}}
\SetKwFunction{Sign}{sign}
\SetKwFunction{EcRecover}{EthAddressRecover}
\SetKwFunction{VerifySignature}{verifySignature}
\SetKw{Let}{let}
\SetKwFunction{AV}{validators}
\SetKwFunction{Valid}{isValidBlock}
\SetKwFunction{ValidNonFinalised}{isValidNonFinalisedBlock}
\SetKwFunction{N}{n}
\SetKwFunction{NumOf}{numOfReceived}
\SetKwFunction{Received}{received}
\SetKwFunction{TimeoutForRoundZero}{timeoutForRoundZero}
\SetKwFunction{ReceivedMessages}{messagesReceivedLike}
\SetKwFunction{SetOfReceivedMessages}{setsOfMessagesReceivedLike}
\SetKw{OfSize}{of size}
\SetKw{Broadcast}{broadcast}
\SetKw{NotIncluding}{excluding}
\SetKw{Multicast}{multicast}
\SetKw{Send}{send}
\SetKw{True}{true}
\SetKw{False}{false}
\SetKwFunction{WellFormedToAddFinalisationProof}{wellFormedToAddFinalisationProof}
\SetKwFunction{RoundTimerTimeout}{roundTimerTimeout}
\SetKwFunction{CreateNewProposedBlock}{createNewProposedBlock}
\SetKwFunction{BlockHeight}{blockHeight}
\SetKwFunction{ExtractBlock}{extractBlock}
\SetKwFunction{ExtractFinalisationProof}{extractFinalisationProof}
\SetKwFunction{ExtractRound}{extractRound}
\SetKw{From}{\:from\:}
\SetKw{With}{\:such that\:}
\SetKw{IsSuchThat}{\:is such that\:}
\SetKw{IsSuchThatc}{\:is such that:\:}
\SetKw{Withc}{\:such that:\:}
\SetKw{Stop}{stop}
\SetKwFunction{AllSubsetsOf}{allSubSetsOf}
\SetKwFunction{UnionOfAllSetsIn}{unionOfAllSetsIn}
\SetKw{Any}{any\:}
\SetKw{Start}{start}
\SetKw{Set}{set}
\SetKw{Expiry}{expiry of}
\SetKw{And}{\:and\:}
\SetKw{Or}{\:or\:}
\SetKw{In}{\:in\:}
\SetKw{Not}{\:not\:}
\SetKw{For}{for\:}
\SetKw{Ifk}{if\:}
\SetKw{Choose}{choose any in\:}
\SetKw{ChooseSet}{choose any set in\:}
\SetKw{Implies}{implies\:}
\SetKw{ThereNotExists}{there not exists\:}
\SetKw{ThereExists}{there exists\:}
\SetKw{Nil}{null}
\def\uponcolor {Blue}
\SetKwFor{Upon}{upon}{do}{}
\SetKwProg{Fn}{}{$\equiv$}{}
\SetKwIF{FnInline}{FnInline2}{FnInline3}{}{$\equiv$}{}{}{}
\SetKwProg{FnPhy}{def}{:}{}
\SetKwProg{Proc}{def}{:}{}
\SetKwFunction{SizeOf}{sizeOf}
\SetKwIF{Expand}{Expand2}{Expand3}{}{expands to:}{}{}{}
\newcommand{\avibfp}[1][]{\AV{\ensuremath{{chain}_v[0:h_{#1}-1]}}}
\newcommand{\nmacro}[1][]{\N{\ensuremath{{chain}_v[0:h_{#1}-1]}}}
\newcommand{\nhmacro}[1][]{\ensuremath{n_{h,v}}}
\newcommand{\avhmacro}[1][]{\ensuremath{{\AV}_{h,v}}}
\newcommand{\fairproposermacro}[1][]{\fairproposer{\ensuremath{{chain}_v[0:h-1],  r_{h,v}}}}
\newcommand{\proposermacro}[1][]{\Proposer{\ensuremath{{chain}_v[0:h-1],  r_{h,v}}}}
\newcommand{\proposermacrop}[1]{\Proposer{\ensuremath{{chain}_v[0:h-1],  #1}}}
\newcommand{\hproposer}[1][]{\ensuremath{{\Proposer}_{h,v}(r_{h,v})}}
\newcommand{\hproposerp}[1]{\ensuremath{{\Proposer}_{h,v}(#1)}}
\LinesNumbered

\SetAlgoNoLine
\SetAlgoHangIndent{2em}
\SetNlSty{textbf}{}{:}
\newcommand\mycommfont[1]{\ttfamily\textcolor{OliveGreen}{#1}}
\SetCommentSty{mycommfont}
\DontPrintSemicolon

\maketitle

\begin{abstract}
In this work, we present IBFT 2.0 (Istanbul BFT 2.0), which is a Proof-of-Authority (PoA) Byzantine-fault-tolerant (BFT) blockchain consensus protocols that (i) ensures immediate finality, (ii) is robust in an eventually synchronous network model and (iii) features a dynamic validator set.
IBFT 2.0, as the name suggests, builds upon the IBFT blockchain consensus protocol  retaining all of the original features while addressing the safety and liveness limitations described in one of our previous works. 
In this paper, we present a high-level description of the IBFT 2.0 protocol and related robustness proof.
Formal specification of the protocol and  related formal proofs will be subject of a separate body of work.
We also envision a separate work that will provide detailed implementation specifications for IBFT 2.0.
\end{abstract}

\section{Introduction}

%
%

\vspace{1em}
\vspace{1em}
IBFT 2.0 is Proof-of-Authority (PoA) Byzantine-Fault-Tolerant (BFT) blockchain consensus protocol that enables consortium network to leverage on the capabilities of Ethereum smart contracts, ensures immediate finality, is robust in an eventually synchronous network and features a dynamic validator set.
As the name suggests, the IBFT 2.0 protocol is based on the IBFT protocol that was was developed around early 2017 by AMIS Technologies \cite{eip650} and was fully implemented in Quorum \cite{jpmorgangithub} by around November 2017.
The IBFT protocol features all of the properties of the IBFT 2.0 protocol mentioned above except for robustness in eventually synchronous networks as identified by \citewithauthor{IBFT1-Analysis}.
IBFT 2.0 addresses the robustness issues of the IBFT protocol while maintaining all of its original properties.
In the following section we describe each of these properties in more detail.

\subsection{Properties of the IBFT 2.0 protocol}

\paragraph{Blockchain consensus protocol.}
Blockchains are the most widely adopted implementations of \emph{distributed ledgers} which are append-only databases of transactions that are replicated across multiple participants, hereafter called \emph{nodes}.
The trust and responsibility for maintaining the database is spread across all of the nodes or a subset of them. 
This is in contrast to a traditional centralised system where full trust is given to a central authority responsible for maintaining the database.
One of the issues with this traditional approach is that the central authority has the power to alter the database unilaterally.
The decentralisation aspect of distributed ledger technology makes it well suited to any use case where the need for a central authority is either adding costs to the system or undermining the trust in the system itself.
Blockchains implements distributed ledgers by batching transactions into blocks and cryptographically linking each block to the previous one forming a chain of blocks, which is where the technology takes its name from.
Consensus protocols play a fundamental role in the blockchain technology as they have the responsibility to ensure that the chain of blocks replicated amongst the nodes is consistent.
The type of network and environment assumptions made when designing a consensus protocol influence how the blockchain performs once deployed in a real environment and network.
Some of the key performance metrics that are heavily influenced by consensus protocols are: (i) throughput or number of transactions per second, (ii) latency or time taken from when a transaction is submitted to the system to when the transaction is included in a block and (iii) robustness or what type of attacks the protocol can withstand.

\paragraph{Ethereum smart contracts.}
Compared to Bitcoin, which was the first widely adopted blockchain and mainly allows transferring values between nodes, the Ethereum blockchain specifies a Turing-complete language that can be used to  build small distributed programs, called \emph{smart contracts}, that are executed in a sandboxed runtime by each node.
The runtime, called the Ethereum Virtual Machine (EVM) modifies the Ethereum global state maintained by each node. 
This means that any user of the Ethereum blockchain has the capability to create \emph{smart contracts} that can govern the interaction between the different users of the system and automate value transfer in a decentralised fashion.
One of the first use cases for Ethereum was the creation of escrow smart contracts eliminating the need for a trusted 3rd party.

\paragraph{Byzantine-fault-tolerant (BFT).}
Byzantine-fault-tolerant, or BFT, specifies the type of node fault mode that the consensus protocol can cope with.
Specifically, BFT identifies a class of blockchain consensus protocols that ensure blockchain consistency despite some of the nodes, referred to as Byzantine, being malicious and acting arbitrarily.
The usage of the word Byzantine to identify malicious nodes dates back to the paper \citetitle{Lamport:1982:BGP:357172.357176} by \citewithauthor{Lamport:1982:BGP:357172.357176}.
The Byzantine failure mode is the strongest failure mode considered in the consensus protocol literature.
Another common but weaker failure mode is \emph{fail-stop} failure mode which only considers nodes stopping communicating but never acting maliciously.

\paragraph{Proof-of-Authority (PoA).}
Another way to classify consensus protocols is by the technique used to prevent an attacker from conducting a \emph{Sybil attack} which consists in one node being able to gain power in the system by creating multiples pseudonymous identities.
Typically, creating a new digital identity that can be used to interact with a blockchain is quite cheap as it just requires generating a random private key and the related public key which can be done in a matter of few seconds on any modern personal computer.
One the most widely used and famous techniques for preventing Sybil attacks is \emph{proof of work} which was originally pioneered by \citewithauthor{Dwork:1992:PVP:646757.705669} and gained subsequent publicity by its employment in Bitcoin. 
Proof-of-work (PoW) requires node to spend compute effort in solving a hard cryptographic puzzle before being able to propose a block to be added to the blockchain.
Proof-of-Stake (PoS) is another quite well known technique where the right to propose new blocks is given according to the amount of stake owned. 
In contrast, in Proof-of-Authority, or PoA, Sybil attacks are prevented by conferring the right to create new blocks only  to a defined set of nodes.
Within the IBFT 2.0 protocol, the nodes with the right to create new blocks and ensuring blockchain consistency are called \emph{validators}.

\paragraph{Consortium Blockchain.}
Compared to permissionless, or public, blockchains, like Bitcoin or Ethereum, where anybody can join the network and participate in the protocol, in consortium networks there exists some level of permissioning which enables only a set of nodes in proposing new blocks and participating in the consensus protocol.
It should be quite evident why PoA type consensus protocols, like the IBFT protocol family, are well suited to this type of permissioning.
It should be noted that while not every node can propose new blocks, some  consortium blockchains allow any node to read data from the blockchain.
The IBFT protocol family also affords for this type of configuration.

\paragraph{Immediate Finality.}
A transaction is defined as final once it is included in the blockchain and it can not be removed from it or changed of position except if the environment is compromised, which, for example, can occur if the number of Byzantine nodes is higher than the maximum number that the protocol can withstand.
Immediate finality means that as soon as a transaction is included in a block, the protocol guarantees that it will not be removed or changed of position.
As comparison, the PoW consensus protocol in Bitcoin and Ethereum only guarantees eventual probabilistic finality where the deeper a transaction is in the blockchain, the less probable is that the transaction can be removed or changed of position.
As further comparison, Casper FFG \cite{CasperFFG}, the Ethereum 2.0 PoS consensus protocol, provides eventual ``non-probabilistic'' finality, where transaction will eventually reach a state where they cannot be removed or moved of position but this does not necessarily happen at every block.
In IBFT 2.0 finality is immediate.


\paragraph{Robustness.}
Our definition of robustness for the IBFT protocol family is based on the definition of robustness for public transaction ledgers provided in \citewithauthor{GKL:2018}.\\
For the purpose of this definition, the position of a transaction within the transaction ledger implemented by the IBFT 2.0 protocol is defined as a pair with the first component corresponding to the height of the block including the transaction and the second component corresponding to the position of the transaction within the block.
\begin{definition}\label{def:robustnes}
	A blockchain consensus protocol implements a robust distributed permissioned transaction ledger with immediate finality and $t$-Byzantine-fault-tolerance  if, provided that no more than $t$ validators are Byzantine, it guarantees the following two properties:
	\begin{itemize}
		\item \textbf{Persistence.} If an honest node adds transaction $ T $ in position $ i $ of its local transaction ledger, then (i) $ T $ is the only transaction that can ever be added in position $i$ by any other honest node, (ii) $T$ will eventually be added to the local transaction ledger of any other honest node.
		
		\item \textbf{Liveness.} Provided that a transaction is submitted to all honest validators, then the transaction will eventually be included in the local transaction ledger of at least one honest node.
	\end{itemize}
\end{definition}

\paragraph{Eventually synchronous network.}
In the consensus protocol literature there are three main network models that have been considered which differ on the assumption made regarding transmission latency:
\begin{itemize}
	\item Synchronous network: the maximum latency (time required for a message to reach the recipient) is bounded and known;
	\item Asynchronous network: the maximum latency is unknown and messages may never be delivered;
	\item Partially synchronous network: this model, which was first introduced by \citewithauthor{Dwork:1988:CPP:42282.42283}, lies in between the other two.
	Specifically, there are two possible definitions of partial synchrony:
	\begin{itemize}		
		\item Messages are guaranteed to be delivered but the maximum latency, while finite, is unknown. To the best of our knowledge, no specific name has been defined for this model;		
		\item Eventually synchronous network: there exists a point in time, called global stabilisation time, or GST, after which the message delay is bounded by a finite and constant value;
	\end{itemize}
\end{itemize}
The model with the weakest assumptions is the asynchronous network model, followed by the partially synchronous network model and the synchronous network model in this order.
Between the two definitions of partial synchrony, eventual synchrony is the one with the weakest assumptions.
As proved by \citewithauthor{FLP} in \citeyear{FLP}, no consensus protocol that aims to tolerate at least one fail-stop node is guaranteed to terminate in the asynchronous model (the one with the weakest assumption).
There exist solutions that operate in the asynchronous network model, but the termination is only guaranteed probabilistically \cite{Bracha:1985:Asynchronous-Consensus-and-Broadcast-Protocols, honeybadger}.
The IBFT 2.0 protocol guarantees deterministic termination of the sub-protocol that has the responsibility to decide on the blocks to be added to the blockchain which means that IBFT 2.0 guarantees that the blockchain does not stop growing.
However, as discussed in the robustness proof section, a probabilistic assumption is required to show that for any block there is eventually a block created successively to it which is not empty. 
This second condition is important for the Liveness property stated above.
In \cref{sec:imp:remove-prob-assumption}, we discuss how the IBFT 2.0 protocol can be modified to drop any probabilistic assumption and achieve deterministic robustness under the eventually synchronous network model which is the weakest assumption where termination can be deterministically guaranteed.

\paragraph{Dynamic validator set.}
Compared to classic (non-blockchain) consensus protocols like PBFT \cite{Castro:1999:PBF:296806.296824}, where the set of protocol nodes is known in advance and never changes, IBFT 2.0, like Clique, allows the nodes to add and remove validators by a voting mechanism.

\subsection{Our Contribution}
As mentioned above, IBFT 2.0 builds upon the IBFT consensus protocol \cite{eip650,jpmorgangithub}  addressing the following limitations of IBFT as described by \citewithauthor{IBFT1-Analysis}.
\begin{itemize}
	\item Persistence is not guaranteed.
	\begin{itemize}
		\item One Byzantine validator can potentially be able to remove or change the position of a transaction that has already been finalised.
		\item In a network of six validators, even if all validators are honest, a network partitioning can cause the blockchains maintained by two sets of three validators each to diverge. Once the partitioning is resolved, validators have to choose one chain which means removing or reordering the transactions of the other chain. 
		\item IBFT does not guarantee that a transaction added to the local blockchain of one validator is eventually added to the local blockchain of all other nodes.
	\end{itemize}
	\item Liveness is not guaranteed. 
	Specifically, the IBFT protocol may reach a state where no new blocks can be added to any local blockchain which means that no new transactions can be added to the distributed ledger.
\end{itemize}

While \citetitle{IBFT1-Analysis}~\cite{IBFT1-Analysis} proposes a clear solution for addressing the issues with the Persistence property of the IBFT protocol, it only sketches two possible solutions to the Liveness issue without providing a full description of the resulting protocol.
As a contribution, this work defines a complete solution to the Persistence and Liveness issues identified in \cite{IBFT1-Analysis} and provides related robustness proof.

In this work we present a protocol-level model of the IBFT 2.0 consensus protocol well suited for reasoning about the robustness of the protocol by abstracting the details of the actual implementation \cite{pantheongithub}.
We envision releasing a future body of work describing the details of the implementation (e.g. precise encoding of the messages, precise definition of the header structure) which can be used to create interoperable implementation of the protocol in any Ethereum client.
We also envision to produce a more formal definition of the protocol using formal languages (e.g. TLA+, Verdi, Alloy) and use formal proof systems to prove the properties of the protocol.


This paper is organised as follows. 
In Section \ref{sec:system-model} we present our analysis model. 
In \cref{sec:protocol-description} we  describe the IBFT 2.0 protocol as implemented in Pantheon \cite{pantheongithub}.
In \cref{sec:robustness-analysis} we present the robustness analysis of the IBFT 2.0 protocol and in \cref{sec:improvements} we present a series of improvements that can applied to the IBFT 2.0 protocol.

\newcommand{\bytes}[1]{\ensuremath{\mathbb{B}_{#1}}}
\newcommand{\nil}[1]{\ensuremath\bot}

\section{System Model}\label{sec:system-model}
The system model considered for IBFT 2.0 is the same system model considered in the analysis of IBFT \cite{IBFT1-Analysis}.
We re-state the properties of the model here for convenience.

\paragraph{Asynchronous nodes.}
We consider a system composed of an unbounded number of asynchronous nodes, each of them maintaining a local copy of the blockchain obtained by adding blocks to it as specified by the IBFT 2.0 protocol.
We assume that all nodes have the same genesis block.

\paragraph{Network Model.}
The IBFT 2.0 protocol relies on the Ethereum \devptwop{} protocol for the delivering of all protocol messages.
We model the gossip network as an eventually synchronous network, as defined in \citewithauthor{Dwork:1988:CPP:42282.42283}, where there exists a point in time called global stabilisation time (GST), after which the message delay is bounded by a constant, $\Delta$.
Before GST there is not bound on the message delay and we admit messages being lost.

\paragraph{Failure Model.}
We consider a Byzantine failure mode system, where Byzantine nodes can behave arbitrarily. In contrast, honest nodes never diverge from the protocol definition.
We denote the maximum number of Byzantine nodes that an eventually synchronous network of $n$ nodes participating in the consensus protocol can be tolerant to with $f(n)$.
As proved in \citewithauthor{Dwork:1988:CPP:42282.42283}, the relationship between the total number of nodes, $n$, and the maximum number of Byzantine nodes can be expressed as follows:
\begin{equation}
f(n) \equiv \left \lfloor \frac{n-1}{3} \right \rfloor
\end{equation}

\paragraph{Cryptographic Primitives.}
The IBFT 2.0 protocol uses the Keccak hash function variant as per Ethereum Yellow Paper \cite{yellowpaper} to produce digests of blocks.
We assume that the Keccak hash function is collision-resistant.\\
The IBFT 2.0 protocol relies on the Elliptic Curve Digital Signature scheme already used in the Ethereum protocol to sign transactions.
We assume that this signature scheme ensures uniqueness and unforgeability.
Uniqueness means that the signatures generated for two distinct messages are different with high probability.
The unforgeability property ensures that Byzantine nodes, even if they collude, cannot forge digital signatures produced by honest nodes.\\
We use \smessage{m}{v} to denote a message $m$ signed by validator $v$.

\section{Protocol Description}\label{sec:protocol-description}

In this section we provide a description of the IBFT 2.0 protocol.
We write ``IBFT 2.0 (IBFT)'' to indicate sections of the descriptions that apply to both IBFT and IBFT 2.0. 

As common to any blockchain implementation, each IBFT 2.0 (IBFT) node maintains a local copy of the blockchain where the first block, called \emph{genesis block}, is the same for all nodes.
Each block $B$ added to the blockchain must be cryptographically linked to another block in the blockchain, $B_p$ which is commonly defined as the \emph{parent} of block $B$, and, conversely, $B$ is defined as the child of $B_p$.
In IBFT 2.0 (IBFT), starting from the genesis block, the next block to be added to the local blockchain maintained by a node is the child of the latest block that was added to the blockchain.
As it may be evident by now, the IBFT 2.0 (IBFT) blockchain can be modelled as a linked list of blocks, rather than  a tree like the public Ethereum blockchain.
In alignment with the terminology used in literature, the height of a block is defined as the number of parent links separating the block from the genesis block which has height 0.

The IBFT 2.0 (IBFT) protocol can be modelled as running sequential instances of what we call the \emph{\ibfpmthree{}},
where the objective of the $h$-th instance of the \ibfpmthree{} is to decide which Ethereum block, and consequently which set of transactions, are to be added at height $h$ of the blockchain maintained by any IBFT 2.0 (IBFT) node.
Only a subset of the entire set of IBFT 2.0 (IBFT) nodes can participate in the $h$-th instance of the block finalisation protocol. 
We call this set of nodes the \emph{validators for height/instance $h$} and refer to each member of this set as a \emph{validator for height/instance $h$}.
We also refer to all of the nodes not included in the validator set for height/instance $h$ as \emph{standard nodes}.
We often omit \emph{for height/instance $h$} when this is clear from the context.
The set of validators for each instance $h$ of the \ibfpmthree{} is deterministically computed as function of the chain of blocks from the genesis block until the block with height $h-1$.\\
As explained in more detail in the following section, each instance of the \ibfpmthree{} is organised in rounds and in each round one of the validators is given the responsibility to propose an Ethereum block for the height associated with the specific instance of the \ibfpmthree{} that the validator is running.
Once agreement is reached, the \ibfpmthree{} creates a \emph{finalised block} which includes the Ethereum block and additional information that allows any node, even nodes that did not participate in the \ibfpmthree{}, to verify that agreement on the Ethereum block included in the finalised block was correctly reached.

In practice, each IBFT 2.0 (IBFT) node adds finalised blocks to its local blockchain, not only the Ethereum blocks included in them.
In this way, any node joining the network at any point in time, when synching its local blockchain with its peers, receives all the information required to verify that agreement was indeed reached correctly on each block that it receives, even on those created before it joined the network.
Each IBFT 2.0 finalised block \fb{} can be modelled by the tuple $(\fb_{EB},\fb_{FP})$ where $\fb_{EB}$ is the Ethereum block to be added to the blockchain, $\fb_{FP}$ is the proof that  agreement was correctly reached on the position in the chain of the block $\fb_{EB}$.
\\
Each finalisation proof $FP$ can be in turn modelled by the tuple $(FB_{r}, FB_{CS})$ where $FB_{r}$ is the round number of the \ibfpmthree{} during the execution of which agreement on the block inclusion in the blockchain was reached and $\fb_{CS}$ is a list of signatures on both the Ethereum block and the round proving that agreement was indeed reached by a correct execution of the \ibfpmthree{}. 
More detail on how this list of signatures, called \emph{commit seals}, are computed is presented in \cref{sec:description-of-the-ibfpmthree-protocol}.

Each Ethereum block can carry a vote, cast by the proposer of that block, to add a validator to or remove a validator from the validator set.
Once more than half of the validators cast a consistent vote to add or remove a validator to/from the validator set, the validator is added or removed from the validator set starting from the next block and all of the votes targeting this validator are discarded.
In this paper we do not provide a pseudocode description of this algorithm, but we may add it to a future revision of this work.

The IBFT 2.0 consensus protocol is described by the pseudocode in \Cref{algo:ibft-protocol,algo:ibpfpmone} where:
\begin{itemize}
	\item Statements are expressed in a mathematical form but with standard mathematical symbols replaced by their equivalent English version, e.g we write \In rather than $\in$, $\And$ rather than $\land$, \ThereExists rather than $\exists$ and so on.
	Our intent is to provide an unambiguous definition of the protocol which can be understood by people that are not familiar with standard mathematical notation.
	Also, comments identified by \CommentSty{text in typewriter font and green colour} are used to provide natural language description of pseudocode statements which may not be immediately obvious;  
	\item For brevity of notation, we avoid using individual existential quantifiers (i.e $\exists$ in mathematical notation and \ThereExists in ``English notation'') for message fields but rather express existential quantifier on the entire message.
	We use an overhead line (e.g $\fv{var}$), to indicate message fields that, if the extensive notation was used, then they should be expressed via an existential quantifier.
	For example $\ThereExists \mmessage{f_1,\fv{f_2}} \In receivedMessages_v $ stands for $\ThereExists f_2 \With \ThereExists \mmessage{f_1,f_2} \in receivedMessages_v $;	
	\item Each of the \textbf{upon} blocks in the pseudocode is assumed to be executed atomically when the condition specified after the \textbf{upon} keyword is satisfied;
	\item All functions in \FuncSty{typewriter font} are defined in the remainder of this section, whereas all functions in \textit{italic font} are defined in the pseudocode;
	\item $ receivedMessages_v $ corresponds to the set of all messages received by node $v$;
	\item $ \FuncSty{peers}_v  $ corresponds to the set of \devptwop{} Gossip protocol peers of $v$;
	\item $ \{ m \in V \With P(m) \} $ corresponds to the set of all the elements of $V$ for which predicate $P$ is true;
	\item $ \{ F(m) \With m \in V \And P(m) \} $ corresponds to the set obtained by applying the function $ F $ to all the elements of $ V $ for which predicate $P$ is true;
	\item $\AllSubsetsOf{M}$ corresponds to the set of all of the subsets of $M$ which is normally called the power set of $M$.
	For example, $\AllSubsetsOf{\ensuremath{\{m_1,m_2,m_3\}}}$ corresponds to the set $\{ \{\}, \allowbreak \{m_1\},\allowbreak \{m_2\}, \allowbreak\{m_3\}, \allowbreak\{m_1,m_2\}, \allowbreak\{m_1,m_3\}, \allowbreak\{m_2,m_3\}, \allowbreak \{m_1,m_2,m_3\}  \}$;	
	\item The symbol $*$ denotes any value;
	\item \colourfakemessage{Dark red} colour denotes messages used only for modelling purposes and that do not have an immediate one-to-one relationship with the messages of the ETH sub-protocol;
	\item Black colour when applied to messages denotes IBFT 2.0 specific messages not included in the current ETH sub-protocol;
	\item We use the notation $T: (t_1, \ldots, t_n)$ to indicate a tuple $(t_1, \ldots, t_n)$ that we successively refer to as $T$; 
	\item $\pi_m(T)$ corresponds to the $m$-th element of the tuple T where the first element has index 1.
	For example, $\pi_2((t_1, t_2, t_3))$ corresponds to $t_2$.
	\item $\BlockHeight{\eb}$ is defined as the height of the Ethereum block $\eb$;
	\item For clarity of notation, we use \SizeOf{$ M $} to indicate the size of the set $M$, i.e $\SizeOf{\ensuremath{M}} \equiv \|M\|$;
	\item \EcRecover{\ensuremath{H,signature}} corresponds to the Ethereum address whose signature of the hash $H$ corresponds to $signature$;
	\item Each validator $v$ stores its local blockchain in ${chain}_v$;
	\item ${chain}_v[n]$ corresponds to the finalised block with height $n$, while ${chain}_v[n:m]$ corresponds to a sub-chain including all of the finalised blocks from height $n$ to height $m$ included;
	\item ${chain_{EB}}_v[n]$ corresponds to the Ethereum block included in the finalised block with height $n$, while ${chain_{EB}}_v[n:m]$ corresponds to a sub-chain including all of the Ethereum blocks included in the finalised blocks from height $n$ to height $m$ included;
	\item The blockchain height is defined as the height of the last finalised block added to the blockchain;
	\item  \AV{${chain}_v[0:h-1]$} represents the set of authorised validators for instance $h$ of the \ibfpmthree{};
	\item  $\N{\ensuremath{{chain}_v[0:h-1]}}$ represents the number of validators for instance $h$ of the \ibfpmthree{}, i.e. $\N{\ensuremath{{chain}_v[0:h-1]}} \equiv \SizeOf{\AV{\ensuremath{{chain}_v[0:h-1]}}} $.
	\item The function $\Valid{\ensuremath{\eb,\eb_{p}}}$ is defined to be true if and only if block $\eb$ is a valid Ethereum block with parent $\eb_{p}$. 
	For the purpose of this work, we consider that \linebreak $\Valid{\ensuremath{\eb,\eb_{p}}}$ only verifies the following fields of the standard Ethereum header: parentHash, stateRoot, transactionsRoot, receiptsRoot, logsBloom, number, gasLimit, gasUsed.
	These fields are verified as specified in \cite{yellowpaper}.
	The IBFT 2.0 (IBFT) protocol implementation actually verifies also the other fields but in a different way than specified in \cite{yellowpaper}.
	We do not describe how these fields are verified as this is out of the scope of this work and does not affect our results.
	These details will be discussed in a future document describing the implementation details of the IBFT 2.0 protocol.
\end{itemize}

\begin{algorithm}[b!]
	\setcounter{AlgoLine}{0}
	\Functions{ 
		\lFnInline{\quorum{n}}{
			\begin{minipage}[h]{10cm}
				\begin{flalign*}
				\left \lceil \frac{2n}{3} \right \rceil&&
				\end{flalign*}
			\end{minipage}
		}
		\BlankLine
		\Fn{${isValidFinalisedBlock}(\fb{}:(\fb_{EB},\fb_{FP}:(\fb_{{FP}_{r}},\fb_{{FP}_{cs}})),v)$\label{ln:isValidFinalisedBlock}}{
			\mbox{\hspace{1.9em}\quad}$\SizeOf{\ensuremath{{FP}_{cs}}} \geq \quorum{\nhmacro[]} \And \newline
			\mbox{\quad}\Any cs \in {FP}_{cs} \Withc \newline
			\mbox{\quad\quad}\EcRecover{\ensuremath{\kec{(\fb_{EB},\fb_{{FP}_{r}})},cs}}\In \avibfp[v] \newline$
		}		
	}
	\BlankLine
	\Init{
		${chain}_v[0] \gets \textit{genesis block}$\;
		$h_v \gets 1 $\;
		\lIf{$v \In \avibfp[v] $}{\Start the $h_v$-th instance of the \ibfpmthree{}}
	}
	\BlankLine
	\UponRules{	 
		\Upon{\color{\uponcolor}\label{ln:finalise-block-start}
			$ \tfakensmessage{FINALISED-BLOCK}{\fv{\fb{}}:(\fv{\fb_{EB}},\fv{\fb_{FP}})} \In receivedMessages_v $ } {
			\If{$\BlockHeight{\ensuremath{\fv{\fb_{EB}}}} = h_v $}{
				\If{${isValidFinalisedBlock}(\fv{\fb{}}),v)$ } {
					${chain}_v[h_v] \gets \fv{\fb{}}$\;
					\lIf{$v \In \avibfp[v] $}{\Stop the $h_v$-th instance of the \ibfpmthree{}}
					$h_v \gets h_v + 1$\;
					\lIf{$v \In \avibfp[v] $}{\Start the $h_v$-th instance of the \ibfpmthree{}}
				}
			}
		}
		\Upon{\color{\uponcolor}\label{ln:request-blocks-when-message-with-higher-height-is-received}
			\begin{tabular}[t]{@{}l@{\quad}l}
				$ ( $\\
				&\tsmessageinnsmessage{PROPOSE}{\fv{h_m}, *, *}{\fv{sender}}{*,*} $\In receivedMessages_v $ \Or \\
				&\tsmessage{PREPARE}{\fv{h_m}, *, *}{\fv{sender}} $\In receivedMessages_v $ \Or \\
				&\tsmessage{COMMIT}{\fv{h_m}, *, *}{\fv{sender}} $\In receivedMessages_v $ \Or \vspace{0.3em}\\
				&\tsmessageinnsmessage{ROUND-CHANGE}{\fv{h_m}, *, *}{\fv{sender}}{*} $\In receivedMessages_v $\\
				\multicolumn{2}{@{}l}{$ ) \And $}\\
				\multicolumn{2}{@{}l}{$\fv{h_m} > h_{v} \And$}\\
				\multicolumn{2}{@{}l}{$ \fv{sender} \in \FuncSty{peers}_v \And$}\\
				\multicolumn{2}{@{}l}{$ \FuncSty{expectedHeight}_v[\fv{sender}] < \fv{h_m }$}\\
			\end{tabular}\\
		} {
		$ \FuncSty{expectedHeight}_v[\fv{sender}] \gets \fv{h_m }$\;
		\Send \tfakensmessage{GET-BLOCKS}{h_v,\fv{h_m}}\;
	}		
}
\label{ln:finalise-block-end}	
\caption{IBFT 2.0 protocol  for IBFT node $v$}
\label{algo:ibft-protocol}
\end{algorithm}
			
As described by \cref{algo:ibft-protocol}, the different instances of the \ibfpmthree{} are started and stopped in the same way that instances are started and stopped in the IBFT protocol.
Finalised blocks are delivered to nodes using the standard ETH \devptwop{} sub-protocol.
In the pseudocode we abstract the actual ETH sub-protocol and model the reception of a finalised block \fb{} with the reception of a \tfakensmessage{FINALISED-BLOCK}{\fb} message.
No such message really exists in the implementation of the IBFT 2.0 protocol \cite{pantheongithub}.
As per standard ETH sub-protocol, a block can be received either via an unsolicited NewBlock message or via the pair of messages GetBlockHeaders/BlockHeaders followed by the pair of messages GetBlockBodies/BlockBodies \cite{ethsubprotocol}.

As described by the \textbf{upon} block at \cref{ln:finalise-block-start}, when a new finalised block is received by a node $v$, $v$ executes the following operations:
\begin{enumerate}
	\item verifies if the finalised block received is for the next expected chain height, i.e $ h_v $;
	\item if so, it verifies if the finalised block received is a valid finalised block.
	\item if both verifications pass, then:
	\begin{enumerate}[label*=\arabic*.]
		\item $v$ adds the finalised block to its local blockchain;
		\item if $v$ is a validator for the current instance of the \ibfpmthree{}, then $v$ stops that instance;
		\item $v$ advances the next expected block height, $h_v$, by 1;
		\item if $v$ is a validator for the \ibfpmthree{} instance for the new value of $h_v$, then $v$ starts that instance.
	\end{enumerate}
\end{enumerate}

As described by the function $isValidFinalisationBlock(\fb,v)$ at \cref{ln:isValidFinalisedBlock}, an IBFT 2.0 finalised block \fb{} is defined valid if and only if all of the following conditions are met:
\begin{itemize}
	\item it contains at least $\quorum{n}\equiv \left \lceil \frac{2n}{3} \right \rceil$ different commit seals, where $n$ is the number of validators for instance $h$ of the \ibfpmthree{}, i.e. $n \equiv \nmacro[]$;
	\item each commit seal corresponds to the signature of one of the validators over the Ethereum block and the round number included in the finalisation proof.
\end{itemize}

Compared to IBFT, IBFT 2.0 adds the \textbf{upon} block at \cref{ln:request-blocks-when-message-with-higher-height-is-received} to address the Persistence issue identified in Lemma 6 of \cite{IBFT1-Analysis}.
All of the IBFT 2.0 specific messages (i.e Proposal, Prepare, Commit and Round-Change) include the height of the \ibfpmthree{} instance that they relate to.
When a node $v$ receives one of these messages including a height $h_m$ with value $\geq$ than the next expected height $h_v$, if the sender of the message is one of the \devptwop{} peers of $v$, then $v$ starts asking the peer for finalised blocks with height between the $v$'s current height $h_v$ and the height $h_m$ included in the IBFT 2.0 message received.
We model this with the transmission of a \tfakensmessage{GET-BLOCKS}{h_v,h_m} message and model the peer's response to this request as a \tfakensmessage{FINALISE-BLOCK}{\fb} message.
As above, this is not an exact description of the real messages sent by the implementation.
It is a modelling of the \devptwop{} behaviour useful in this context for analysing the protocol.
$ \FuncSty{expectedHeight}_v[v']$ represents the blockchain height that node $v$ expects node $v'$ to have.
In our modelling of the protocol, we use $ \FuncSty{expectedHeight}_v[v']$ to express that Get-Blocks messages are sent only the first time that an IBFT 2.0 message with an height higher than $h_v$ is received.

\subsection{Description of the \ibfpmthree{}}\label{sec:description-of-the-ibfpmthree-protocol}
In this section we describe a generic $h$ instance of the \ibfpmthree{} for a validator $v$ as detailed in \cref{algo:ibpfpmone}.

While significant portions of the \ibfpmthree{} are similar to the \ibfp{} presented in Section 3 of \cite{IBFT1-Analysis}, for the sake of completeness, this section describes the full \ibfpmthree{}.
However, for those portions of the \ibfpmthree{} that are identical or very similar to the \ibfp{}, the description is taken almost verbatim from \cite{IBFT1-Analysis}.

As with the \ibfp{}, the \ibfpmthree{} is organised in rounds, starting from round 0, where validators progress to the next round once they suspect that in the current round they will not be able to decide on the block to be included at height $h$ of the blockchain.
Both in \Cref{algo:ibpfpmone} and here, the current round for the $h$-th instance of the \ibfpmthree{} for validator $v$ is denoted by $r_{h,v}$.

For each round, one of the validators is selected to play the role of block proposer. 
This selection is operated by the evaluation of \proposermacro[] where \Proposer{$ \cdot, \cdot $} is  a deterministic function of the chain of blocks from the genesis block until the block with height $h-1$ and the current round number.

The pseudocode at \cref{ln:avibfp,ln:hproposer,ln:nhmacro} introduces the following macros:
\begin{itemize}
	\item \nhmacro[]: number of validators for the $h$-th instance of the \ibfpmthree{} for validator $v$;
	\item \avhmacro[]: validators for the $h$-th instance of the \ibfpmthree{} for validator $v$;
	\item \hproposer[]: proposer for round $r_{h,v}$ of the $h$-th instance of the \ibfpmthree{} for validator $v$.
\end{itemize}
These macros are used both in the pseudocode and in this section to simplify the notation when describing the $h$-th instance of the \ibfpmthree{} for validator $v$. 
We use the term \emph{non-proposing validators for round $r$ and instance $h$} to indicate all of the validators for round $r$ and instance $h$ with the exclusion of the proposer for round $r$ and instance $h$.

For the purpose of this work, we do not define the proposer selection function, but we state that it ensures that all of the validators for the $h$-th instance of the \ibfpmthree{} are selected for any sequence of $\nhmacro[]$ consecutive rounds.
The IBFT 2.0 protocol retains the IBFT protocol capability to specify two alternative logics for selecting the proposer for round 0:
\begin{itemize}
	\item \textbf{Sticky Proposer.} The proposer for round 0 corresponds to the proposer of the block included in the previous finalised block;
	\item \hypertarget{def:round-roubin-selection-logic-definition}{\textbf{Round-Robin Proposer.}} The proposer for round 0 corresponds to the proposer of the proposer selection sequence that comes after the proposer of the block included in the previous finalised block.
\end{itemize}

Compared to IBFT, in IBFT 2.0 there is no block locking mechanism. 
The safety of the protocol is guaranteed by the round change protocol discussed below which is based on the view change protocol of PBFT~\cite{Castro:1999:PBF:296806.296824}.

As specified by the initialisation block (\cref{ln:initialisation}), if $v$ is the selected block proposer for the first round, i.e round $0$, then $v$ multicasts a Proposal message \tsmessageinnsmessage{PROPOSAL}{h, 0, \kec{\pb{}}}{v}{\pb{},\bot} to all validators (including itself) which comprises  the message \tsmessage{PROPOSAL}{h, 0, \kec{\pb{}}}{v} signed by $v$, the proposed block \pb{} and a Round-Change-Certificate which for round 0 is empty, i.e. $\bot$.
More detail on how the Round-Change-Certificate is assembled is provided further down in this section when discussing how validators move to a different round.
The proposed block \pb{} is modelled here as a tuple where the first element is a standard Ethereum block and the second element is the current round number at which the Ethereum block was created which at initialisation is 0.
\kec{\cdot} represents the Keccak hash function.
The pseudocode uses \CreateNewProposedBlock{h, v} to represent the creation of a new block with height $h$ by validator $v$. 
Honest validators employee a fair transaction selection algorithm to decide which transactions to include in the next block.
The definition of such algorithm is outside the scope of this work.

As specified by \crefrange{ln:upon-pre-prepare}{ln:upon-pre-prepare:end-accept}, a validator $v$ accepts a Proposal message \tsmessageinnnsmessage{PROPOSAL}{h_{pp}, r_{pp}, H}{\textit{PB}:(\eb,r_{\eb}),*} if and only if all of the following conditions are met:
\begin{itemize}
	\item $v$ is currently running the \ibfpmthree{} instance   $ h_{pp} $, i.e $h_{pp} = h$ ;
	\item $v$ is in round $0$, i.e. $r_{pp} = r_{h,v} = 0 $;
	\item the signed portion of the message, \tnsmessage{PROPOSAL}{h_{pp}, r_{pp}, H}, is signed by the selected proposer for round $r_{h,v} = 0$ and instance $h$ of the \ibfpmthree{};
	\item $v$ has not already accepted a Proposal message for round $ r_{h,v} = 0$ in the $h$-th instance of the \ibfp{}, i.e ${acceptedPB} = \bot$;
	\item the Ethereum block \eb{} included in the proposed block \pb{} is a valid block for height $h$;
	\item the round number included in the prepared block \pb{} matches the current round number, i.e. $r_{h,v} = r_{\eb} = 0$;
	\item $H$ corresponds to the Keccak hash of the proposed block \pb{}.
\end{itemize}
\begin{sloppypar}
	When a validator $v$ accepts a Proposal message:
\begin{itemize}
	\item  it marks the Proposal message as accepted by setting the state variable $ acceptedPB $ to the proposed block included in the Proposal message (\cref{ln:set-accepted-proposal});
	\item it multicasts a Prepare message \tsmessage{PREPARE}{h, r_{h,v}, H}{v} (see \cref{ln:broadcast-prepare}) to all validators (including itself).
\end{itemize}
\end{sloppypar}

\vspace{1em}
The upon block at \cref{ln:received-prepare} is executed the first time that all of the following conditions are met by validator $v$:
\begin{itemize}
	\item $v$ has accepted a Proposal message for the proposed block \pb{}, i.e. ${acceptedPB}_{h,v} = \pb{}$;
	\item $v$ has received, from non-proposing validators for the current round, at least $ \quorum{\nhmacro[]}-1 $ Prepare messages for the current instance of the \ibfpmthree{}, current round and with digest $H$ corresponding to the Keccak hash of the accepted proposed block $\pb{}$.
\end{itemize}

When all of the conditions listed above are met for the first time, then:
\begin{itemize}
	\item $v$ multicasts a Commit message \tsmessage{COMMIT}{h,r_{h,v},H,\cs{\pb{},v}}{v} to all validators (including itself), where \cs{\pb{},v}, called \emph{commit seal}, corresponds to the signature of $v$ over the proposed block $\pb{}$ and the current round number;
	\item $v$ sets ${latestPreparedProposedBlock}_{h,v}$ to the proposed block $\pb{}$;
	\item $v$ sets ${latestPC}_{h,v}$ to a set including the signed portion of the accepted Proposal message and all of the Prepare messages sent by non-proposing validators for the current round targeting the current instance of the \ibfpmthree{}, current round and with digest $H$ corresponding to the Keccak hash of the accepted proposed block $\pb{}$.
\end{itemize}
The pseudocode uses the state variable ${commitSent}_{h,v}$ to indicate that the Commit message is sent only the first time that all of the conditions listed above are met. Indeed, ${commitSent}_{h,v}$ is set to $true$ at \cref{ln:set-commit-sent} and reset to $false$ in the ${StartNewRound}$ procedure at \cref{ln:reset-commit-sent}.
By borrowing from the PBFT terminology, when a validator meets the conditions indicated in the \textbf{upon} block at \cref{ln:received-prepare}, then $v$ is said to be \nprepared{} at round $r_{h,v}$.
By borrowing again from the PBFT terminology, ${latestPC}_{h,v}$ is called latest Prepared-Certificate and the protocol is designed so that ${latestPC}_{h,v}$ always holds at least the minimum number of messages required to prove that $v$ \nprepared{} in round $r_{h,v}$ on a proposed block with Keccak hash $H$.
We say that a Prepared-Certificate is for round $r$ and instance $h$ if and only if the Prepared-Certificate only includes signed Proposal and Prepare messages for round $r$ and height $h$.
${latestPC}_{h,v}$ always holds the Prepared-Certificate for latest round in the current instance $h$ where $v$ is \nprepared{}.
${latestPC}_{h,v} = \bot$ only if $v$ has never \nprepared{} in the current instance $h$.

\vspace{1em}
The upon block at \cref{ln:received-commit} is executed the first time that all of the following conditions are met by validator $v$:
\begin{itemize}
	\item $v$ has accepted a Proposal message for the proposed block \pb{}, i.e. ${acceptedPB}_{h,v} = \pb{}$;
	\item $v$ has received, from at least different $ \quorum{\nhmacro[]}$ validators for the current instance of the \ibfpmthree{}, a Commit message for the current instance of the \ibfpmthree{}, current round, with digest $H$ corresponding to the Keccak hash of the accepted proposed block $\pb{}$ and commit seal signed by the sender of the Commit message.
\end{itemize}
When all of the conditions listed above are met for the first time, then:
\begin{itemize}
	\item $v$ creates a block finalisation proof modelled as a tuple comprising the current round number and the commit seals included in all the Commit messages that satisfy the condition on Commit messages stated above;
	\item $v$ creates a finalised block modelled as a tuple comprising the Ethereum block included in the proposed block and the finalisation proof;
	\item $v$ broadcasts the finalised block to all nodes;
\end{itemize}
The pseudocode uses the state variable ${finalisedBlockSent}_{h,v} $ to trigger the transmission of a finalised block only the first time that the conditions listed above are met.
${finalisedBlockSent}_{h,v} $ is set at \cref{ln:set-finalised-block-sent}  and reset in the ${StartNewRound}$ procedure at \cref{ln:reset-finalised-block-sent}.

In alignment with PBFT, IBFT 2.0 relies on a round change sub-protocol to detect whether the selected proposer may be Byzantine and causing the protocol to never terminate. 
As specified at \crefrange{ln:start-timer-begin}{ln:start-timer-end}, whenever a validator $v$ starts a new round, it starts a round timer with duration exponential to the round number (see \cref{ln:calcualate-round-timer-duration}).

When validator $v$'s round timer for the current round expires (\cref{ln:timer-expiry}), $v$ starts the round $r_{h,v}+1$ and multicasts a \tsmessageinnnsmessage{ROUND-CHANGE}{h, r_{h,v}+1, {latestPC}_{h,v}}{{latestPreparedProposedBlock}_{h,v}} message for the new round to all validators (including itself).
As it can be noted, the Round-Change message includes the latest Prepared-Certificate and the proposed block associated with the latest Prepared Certificate.

The \textbf{upon} block at \cref{ln:received-round-change} describes under which conditions a Proposal message for a new round is multicast and how the Proposal message is assembled.
Specifically, the \textbf{upon} block is executed only if $v$ has received at least \quorum{\nhmacro[]} Round-Change message for the current \ibfpmthree{} instance such that:
\begin{itemize}
	\item all messages are for the same round number $r_{rc}$ and $r_{rc}$ is higher than the current round;
	\item all messages are sent by distinct validators for the current instance of the \ibfpmthree{};
	\item all messages contain a valid Prepared-Certificate;\\
	A Prepared-Certificate is considered valid either if it is empty ($\bot$) or if it contains one Proposal message and at least $\quorum{\nhmacro[]} -1$ Prepare messages for the same round $r'$ such that:
	\begin{itemize}
		\item $r'<$ than the the round number of the Round-Change messages, $r_{rc}$;
		\item the Proposal message is signed by the selected proposer for round $r'$;
		\item the  Prepare messages are sent by non-proposing validators for the $h$-th instance of the \ibfpmthree{} and round $r'$ and they are all for the current instance $h$ of the \ibfpmthree{}, round $r'$ and same hash as the one included in the Proposal message.
	\end{itemize}
	\item the Keccak hash of the ${proposedBlock}$ included in each of the Round-Change messages considered matches the hash included in the Proposal and Prepare message of the Prepared-Certificate.
\end{itemize}
We say that any set meeting these conditions is a valid Round-Change-Certificate for round $r'$ where round $r'$ is the round included in all of the Round-Change messages included in the Round-Change-Certificate.
When all of the conditions listed above are met, then: 
\begin{itemize}
	\item $v$ moves to the round number included in one of the Round-Change-Certificates with the highest round number, let $RCC$ be the chosen Round-Change-Certificate and let  $r_h$ be the round number of $RCC$;
	\item if $v$ was not already in $r_h$, then $v$ starts the round timer for round $r_h$;
	\item if $v$ is the selected proposer for round $r_h$, then $v$ sends a 
	Proposal message for the new round including the selected Round-Change-Certificate ($RCC$) and a proposed block calculated as follows:
	\begin{itemize}
		\item if all of the Prepared-Certificates included in the Round-Change message are empty, i.e $= \bot$, then the proposed block must be a tuple including any valid Ethereum block for height $h$ and the current round number, which must match the round number of the Proposal message;
		\item otherwise, the Ethereum block included in the tuple constituting the proposed block must match the proposed block received as part of one of the Round-Change messages including a Prepared-Certificate for the highest round amongst the rounds of all the other Prepared-Certificates included in the Round-Change-Certificate.
	\end{itemize}
\end{itemize}

As specified by the \textbf{upon} block at \cref{ln:received-proposal-round-higher-than-0}, a Proposal message \tsmessageinnsmessage{PROPOSAL}{h, r_{pp}, H}{sender}{\pb{},RCC} for a round higher than 0 
is accepted only if all of the following conditions are met:
\begin{itemize}
	\item $v$ is currently running the \ibfpmthree{} instance   $ h_{pp} $, i.e $h_{pp} = h$ ;
	\item the round number $r_{pp}$ of the Proposal message is either higher than the current round or equal to the current round provided that $v$ has not accepted any Proposal message for the current round (i.e ${acceptedPB} = \bot$;);
	\item $H$ corresponds to the Keccak hash of the proposed block \pb{}.
	\item the signed portion of the message is signed by the selected proposer for round $r_{pp}$ and instance $h$ of the \ibfpmthree{};
	\item the Round-Change-Certificate $RCC$ includes at least \quorum{\nhmacro[]} Round-Change messages for round $r_{pp}$ and height $h$;
	\item if each of the Round-Change messages included in the Round-Change-Certificate $RCC$ includes either an invalid Prepared-Certificate or an empty Prepared-Certificate, then 
	\begin{itemize}
		\item the Ethereum block $\eb$ included in the proposed block \pb{} must be a valid block for height $h$;
		\item the round number $r_\eb$ included in the prepared block \pb{} must match the current round number, i.e. $r_{h,v} = r_{\eb}$;
	\end{itemize}
	\item otherwise, the Keccak hash of the tuple composed of the Ethereum block included in \pb{} and the highest round number of all Prepared-Certificated included in $RCC$ must match the hash included in any of the messages that are part of the Prepared-Certificates with the highest round number amongst all of the Prepared-Certificates included in the Round-Change-Certificate $RCC$.
\end{itemize}
The effect of accepting a Proposal message for a round $r_{pp}>0$ is essentially the same effect of accepting the Proposal message for round 0 with the addition of moving to round $r_{pp}$ , namely:
\begin{itemize}
	\item $v$ moves to round $r_{pp}$ and start related round timer if $v$ was not already in round $r_{pp}$;
	\item $v$ multicasts a Prepare message \tsmessage{PREPARE}{h, r_{h,v}, H}{v} to all validators (including itself);
	\item $v$ sets ${acceptedPB}_{h,v}$ to the proposed block \pb{} indicating that it accepted a Proposal message for \pb{}.
\end{itemize}

From here on the protocol proceeds as described above.

\afterpage{
\clearpage
\KOMAoptions{paper=A4,pagesize,DIV=50}
\recalctypearea
\begin{algorithm}[H]
	\caption{$h$-th instance of the \ibfpmthree{} for validator $v$.}
	\label{algo:ibpfpmone}
	\setcounter{AlgoLine}{0}
	\Expansions{
		\lExpand{\nhmacro[]}{\nmacro[]}\label{ln:nhmacro}
		\lExpand{\avhmacro[]}{\avibfp[]}\label{ln:avibfp}
		\lExpand{\hproposer[] } {\proposermacro[]}}\label{ln:hproposer}
	\BlankLine
	\Functions{ 
		\lFnInline{\cs{\pb{},v}}{\Sign{$ \kec{\pb{} }, v_{privKey}$}}
		\BlankLine
		\lFnInline{${validPC}(PC,r_{limit},h,v)$}
		{
			$\left \lbrace
			\begin{tabular}{lp{12cm}}			
			\True & if $PC = \bot \Or \newline
				(\newline
				\mbox{\quad}\text{\tcc{$PC$ contains at least \quorum{\nhmacro[]} messages}}\newline
				\mbox{\quad} \SizeOf(PC) \geq \quorum{\nhmacro[]} \And 	\vspace{0.3em} \newline	
				\mbox{\quad}\text{\tcc{$PC$ contains one and only one Proposal message}}\newline
				\mbox{\mbox{\quad}\SizeOf{\ensuremath{\{ m \in PC \With m = \tsmessage{PROPOSAL}{h,*,*}{*} \}}}=1 \And} \vspace{0.3em} \newline
				\mbox{\quad}\text{\tcc{the other messages in $ PC  $ are all Prepare messages}}\newline
				\mbox{\mbox{\quad}\SizeOf{\ensuremath{\{ m \in PC \With m = \tsmessage{PREPARE}{h,*,*}{*} \}}}$ = \SizeOf{PC} -1 $ }  \vspace{0.3em}\newline	
				\mbox{\quad}\text{\tcc{all the messages in $ PC $ are sent by different senders}}\newline				
				\mbox{\quad}\mbox{\ensuremath{\Any \smessage{m'}{sender'}, \smessage{m''}{sender''} \In PC \IsSuchThat m' \neq m'' \:\Implies\: sender' \neq sender'' \And }}\vspace{0.3em}\newline				
				\mbox{\quad}\text{\tcc{all the messages in $PC$ are for the same round and proposed block hash}}\newline				
				\mbox{\quad}\mbox{\ensuremath{\Any \tsmessage{*}{h,r',H'}{*}, \tsmessage{*}{h,r'',H''}{*} \In PC \IsSuchThat r' = r'' \And H'=H'' \And }	}\vspace{0.3em}\newline		
				\mbox{\quad}\text{\tcc{the round included in all the messages in $PC$ is lower than $r_{limit}$}}\newline				
				\mbox{\quad}\mbox{\ensuremath{\Any \tsmessage{*}{h,r',*}{*}\In PC \IsSuchThat r' < r_{limit} \And  }	}\newline
				\mbox{\quad}
				\begin{minipage}{15cm}
					{\tcc{Proposal messages in $PC$ are sent by the proposer for the round included in the Proposal message, whereas Prepare messages are sent by any non-proposing validator for the round included in the Prepare messsages}}
				\end{minipage}
				\newline												
				\mbox{\quad}(\newline
				\mbox{\qquad}\Any m \In PC \IsSuchThat \newline
				\mbox{\qquad\quad}m =\tsmessage{PROPOSAL}{h,\fv{r'},*}{sender} \And sender = \Proposer{\ensuremath{\fv{r'}}} \Or \newline
				\mbox{\qquad\quad}m =\tsmessage{PREPARE}{h,\fv{r'},*}{sender} \And sender \In (\avhmacro[] \,\NotIncluding\, \Proposer{\ensuremath{\fv{r'}}})\newline
				\mbox{\quad})\newline
				)$\\
			\False & otherwise
			\end{tabular}\right.$
		}
		\lFnInline{$ {RoundTimerTimeout}(r)$}{$\TimeoutForRoundZero \cdot 2^r$ \label{ln:calcualate-round-timer-duration}}		
	}
	\BlankLine	
	\Init{\label{ln:initialisation}
		${latestPC}_{h,v} \gets \bot $\;
		${latestPreparedProposedBlock}_{h,v} \gets \bot $\;
		${finalisedBlockSent}_{h,v} \gets {false}$\label{ln:reset-finalised-block-sent}\;
		$ StartNewRound(0,h,v) $\;
		\If{$v = \Proposer{\ensuremath{0}} $}{
			$\Let\: \pb{} \equiv (\CreateNewProposedBlock{\ensuremath{h,v}},0)$\;
			${acceptedPB}_{h,v} \gets \pb{}$\label{ln:set-accepted-proposal}\;	
			\Multicast{\tsmessageinnsmessage{PROPOSAL}{h, r_{h,v}, \kec{\pb{}}}{v}{\pb{},\bot}} to \avhmacro[]\label{ln:broadcast-prepare}\;	
		}		
	}
	\BlankLine	
	\Procedures{
		\Proc{$ StartNewRound(r,h,v) $}{
			\If{$r = 0 \Or r > r_{h,v}$\label{ln:start-timer-begin}}
			{
				\Set ${roundTimer}_{h,v}[r]$ to expire after ${RoundTimerTimeout}({\ensuremath{r}})$\;
				\label{ln:start-timer-end}
			}
			$r_{h,v} \gets r$\;
			${acceptedProposedPB}_{h,v} \gets \bot$\label{algo3:set-accept-pre-prepare-1}\;
			${commitSent}_{h,v} \gets {false}$\label{ln:reset-commit-sent}\;
		}
		\BlankLine
		
		\Proc{$StartNewRoundAndSendRoundChange(r, h, v)$}
		{
			$StartNewRound(r, h, v)$\;
			\tcc{$r_{h,v} = r $ from this point on}
			\BlankLine
			\Multicast{\tsmessageinnsmessage{ROUND-CHANGE}{h, r_{h,v}, {latestPC}_{h,v}}{v}{{latestPreparedProposedBlock}_{h,v}}} to \avhmacro[]\;			 	
		}
	}	
\end{algorithm}
\pagebreak
\addtocounter{algocf}{-1}
\begin{algorithm}[H]
	\caption{$h$-th instance of the \ibfpmthree{} for validator $v$ (continue).}
	\Indp		
	
	\UponRules{
	}
	\Indp
		\tcc{reception of PROPOSAL messages for round 0}
		\Upon{\label{ln:upon-pre-prepare}\color{\uponcolor}\ensuremath{{acceptedPB}_{h,v} = \bot \And \newline
					\mbox{\hspace{0.8em}}(\ThereExists \tsmessageinnsmessage{PROPOSAL}{h, \fv{r_{p}}, \fv{H}}{\fv{sender}}{\fv{\pb{}}:(\fv{EB},\fv{r_{EB}}),\bot} \In {receivedMessages}_v \With: \newline
			 		\mbox{\qquad}\fv{sender} = \hproposer[] \And \newline 
				 	\mbox{\qquad}\ensuremath{\fv{H} = \kec{\fv{\pb{}}} \And \newline
				 	\mbox{\qquad}\fv{r_{p}}=r_{h,v} = \fv{r_{EB}} = 0 \And \newline
				 	\mbox{\qquad}\Valid{\ensuremath{\fv{\eb},{{chain_{EB}}_v}[h-1]}}\newline
				 	\mbox{\hspace{1em}})}   \And \newline
					\mbox{\hspace{0.8em}} v  \neq \hproposerp{ 0 } }
			 }{
					${acceptedPB}_{h,v} \gets \pb{}$\label{ln:upon-pre-prepare:end-accept}\;
					\Multicast{\tsmessage{PREPARE}{h, r_{h,v}, \fv{H}}{v}} to \avhmacro[]\;				
		}
		\BlankLine		
		\tcc{reception of PREPARE messages}
		\Fn{$ validPrepareMessages(h,v) $}
		{
			\mbox{\hspace{0.8em}}\ensuremath{\{\tsmessage{PREPARE}{h, \fv{r_{c}}, \fv{H}}{\fv{sender}} \In {receivedMessages}_v \With \newline
				\mbox{\qquad}\fv{sender} \In  \avhmacro[] \And \newline 
				\mbox{\qquad}\fv{r_c} = r_{h,v} \And \newline
				\mbox{\qquad}\fv{H} = \kec{{acceptedPB}_{h,v}} 
				\} }
		}	
		\BlankLine			
		\Upon{\color{\uponcolor}\label{ln:received-prepare}${acceptedPB}_{h,v} \neq \bot \And \newline 
			   \mbox{\hspace{0.8em}}$\SizeOf{$ validPrepareMessages(h,v) $}  $\geq \quorum{\nhmacro[]}-1  \And \newline 
		       \mbox{\hspace{0.8em}}{commitSent}_{h,v} = {false}$ }{

				$\Let\: \pb{} \equiv {acceptedPB}_{h,v} $\;
				\Multicast{\tsmessage{COMMIT}{h, r_{h,v}, \kec{\pb{}},\cs{\pb{},v}}{v}} to \avhmacro[]\label{ln:algo3:send-commit}\;
				${commitSent}_{h,v} \gets {true}$\label{ln:set-commit-sent}\;	
				\tcc{$ latestPC_{h,v} $ is set to a set including the signed portion of the accepted Proposal message and all the valid Prepare messages received}		
				${latestPC}_{h,v} \gets $
				$\{  \tsmessage{PROPOSAL}{h,r_{h,v},\kec{\pb{}}}{\hproposer{}}\With \newline
				\mbox{\hspace{8em}}\ThereExists \mmessage{ \tsmessage{PROPOSAL}{h,r_{h,v},\kec{\pb{}}}{\hproposer{}}, *,*} \In receivedMessages_v  \} \newline
				\mbox{\hspace{5em}}\textbf{union with }  validPrepareMessages(h,v) $
				\;
				${latestPreparedProposedBlock}_{h,v} \gets \pb{} $\;				

		}	
	\BlankLine
	\tcc{reception of COMMIT messages}
	\Fn{$ validCommitMessages(h,v) $}
	{
		\mbox{\hspace{0.8em}}\ensuremath{\{\tsmessage{COMMIT}{h, \fv{r_{c}}, \fv{H}, \fv{cs}}{\fv{sender}} \In {receivedMessages}_v \With \newline
		\mbox{\qquad}\fv{sender} \In  \avhmacro[] \And \newline 
		\mbox{\qquad}\fv{r_c} = r_{h,v} \And \newline
		\mbox{\qquad}\fv{H} = \kec{{acceptedPB}_{h,v}} \And \newline
		\mbox{\qquad}\EcRecover{\ensuremath{\kec{{acceptedPB}_{h,v}},\fv{cs}}} = \fv{sender} \} }
	}
	\BlankLine
	\Upon{\label{ln:received-commit}\color{\uponcolor}${acceptedPB}_{h,v} \neq \bot \And \newline $
		\mbox{\hspace{0.8em}}\SizeOf{$ validCommitMessages(h,v) $} $\geq \quorum{\nhmacro[]} \And \newline
		{finalisedBlockSent}_{h,v} = {false}  $ 
		\label{ln:new-send-finalisation-proof-upon-condition}
		}{
			$\Let {commitSeals} \equiv \newline
			\{ cs \Withc \newline
			\mbox{\qquad\quad}\ThereExists\tsmessage{COMMIT}{h, *,*,cs}{*} \In validCommitMessages(h,v) 
			\}$\;
			$\Let\:{finalisationProof}_{h,v} \equiv (r_{h,v},commitSeals)$\; 
			$\Let\: \eb \equiv \pi_1(acceptedPB_{h,v})$\;
			$\Let \:{\fb} \equiv (\eb , {finalisationProof}_{h,v}) $\;
			\Broadcast{$ \tfakensmessage{FINALISED-BLOCK}{\fb} $} to all nodes\;
			${finalisedBlockSent}_{h,v} \gets {true}$\label{ln:set-finalised-block-sent}\;

	}
\end{algorithm}
\pagebreak
\addtocounter{algocf}{-1}
\begin{algorithm}[H]
	\caption{$h$-th instance of the \ibfpmthree{} for validator $v$ (continue).}
	\fontsize{9}{10}\selectfont
	\Indp	
	\tcc{round timer expiry}
	\Upon{\color{\uponcolor}\Expiry ${roundTimer}_{h,v}[r_{h,v}]$\label{ln:timer-expiry}} {
		$StartNewRoundAndSendRoundChange(r_{h,v}+1, h, v)$\;
	}
	\BlankLine	
	\tcc{reception of ROUND-CHANGE messages}
	\Fn{${setsOfValidRoundChangeCertificates}(h,v)$}
	{
		\ensuremath{\{RCC \In \AllSubsetsOf{\ensuremath{{receivedMessages}_v}} \Withc \newline	
			\SizeOf(RCC) = \quorum{\nhmacro[]} \And \vspace{0.3em}\newline
			\text{\tcc{all messages in $ RCC $ are Round-Change messages for instance $ h $}}\newline
			\Any m_{RCC} \In RCC \IsSuchThat m_{RCC} = \tsmessageinnsmessage{ROUND-CHANGE}{h,*, *}{*}{*}  \And \vspace{0.3em} \newline			
			\text{\tcc{all messages in $ RCC $ are sent by different senders}}\newline
			\Any \mmessage{\smessage{m_1}{sender},*},\mmessage{\smessage{m_2}{sender'}, *} \In RCC \IsSuchThat m_1 \neq m_2 \:\Implies sender \neq sender' \And \vspace{0.3em} \newline
			\text{\tcc{all messages in $ RCC $ are for the same round}}\newline
			\mbox{\ensuremath{\Any \tsmessageinnsmessage{ROUND-CHANGE}{*, r', *}{*}{*} ,\tsmessageinnsmessage{ROUND-CHANGE}{*, r'', *}{*}{*}  \In RCC \IsSuchThat r' = r'' \And }}\vspace{0.3em} \newline				
			\text{\tcc{any message in $RCC$ is such that:}}\newline
			\Any \tsmessageinnsmessage{ROUND-CHANGE}{h, \fv{r_{rc}}, \fv{PC}}{\fv{sender}}{\fv{\pb{}}}  \In RCC \IsSuchThatc \newline  
			{\qquad} \text{\tcc{the sender is one of the validators}} \newline
			{\qquad} \ensuremath{\fv{sender} \In \AV_{h} \And \vspace{0.3em} \newline
				{\qquad} 
				\begin{minipage}{16cm}
					\tcc{the round number of the Round-Change message is either higher than the current round number or equal to the current round number provided that no Proposal message for the round included in the Round-Change message has alrady been accepted}
				\end{minipage}
				\newline
				{\qquad}	(\fv{r_{rc}} > r_{h,v} \Or (\fv{r_{rc}} = r_{h,v} \And {acceptedPB} = \nil{})) \And \vspace{0.3em} \newline}  
			{\qquad} \text{\tcc{the Prepared-Certificate is valid}} \newline
			{\qquad}  \ensuremath{{validPC}(\fv{PC},\fv{r_{rc}},h,v) \And \vspace{0.3em} \newline
				{\qquad} 
				\begin{minipage}{14cm}		    	
					\tcc{the block hash included in all of the messages included in the Prepared certificate corresponds to the hash of the proposed block included in the Round-Change message}
				\end{minipage}
				\newline	
				{\qquad} \Any m_{PC} \In \fv{PC} \IsSuchThatc \newline
				{\qquad\quad} m_{PC} = \tnsmessage{PROPOSAL}{*,*,\kec{\fv{\pb{}}}} \Or m_{PC} = \tnsmessage{PREPARE}{*,*,\kec{\fv{\pb{}}}}
			}\}}\newline
	}
	\BlankLine	
	\Upon{\color{\uponcolor}\SizeOf{${setsOfValidRoundChangeCertificates}(h,v)$} $ \geq 1  $\label{ln:received-round-change}}{
				\BlankLine
				\tcc{\Let ${extendedRCC}$ be any valid Round-Change-Certificate for the highest round number}
				\Let ${extendedRCC} \equiv \newline \ChooseSet \newline
				\{eRCC \In {setsOfValidRoundChangeCertificates}(h,v) \Withc \newline
				{\quad}\Any \tsmessageinnsmessage{ROUND-CHANGE}{*, \fv{r}, *}{*}{*} \In eRCC \IsSuchThatc:\newline
				{\qquad} \fv{r} = \max(\{ r' \With \ThereExists \tsmessageinnsmessage{ROUND-CHANGE}{*, r', *}{*}{*} \In extendedRCC \}) $\; 
				\BlankLine
				\tcc{$\Let\: r_{rc} $ be the round number of the Round-Change certficate in ${extendedRCC}$ }
				$\Let\:r_{rc} \equiv \Choose \{r \With  \tsmessageinnsmessage{ROUND-CHANGE}{*, r, *}{*}{*} \In extendedRCC\} $\;
				$StartNewRound(r_{rc},h,v)$\;	
				\tcc{$r_{h,v} = r_{rc}$ from this point on}				
				\BlankLine							
				\If{$ v = \Proposer{\ensuremath{r_{rc}}} $}
				{
					\tcc{\Let $roundsAndPreparedBlocks$ be a set of all tuples where the the first element is the proposed block included in the Round-Change messages in $ extendedRCC $ including a non-empty Prepared-Certificate and the second element is the round number of those Prepared-Certificates}
					\Let ${roundsAndPreparedBlocks} \equiv  \{(\eb{},r) \Withc \newline
					\mbox{\hspace{12em}}\ThereExists\tsmessageinnsmessage{ROUND-CHANGE}{*, *, \fv{PC}}{*}{\fv{\pb{}}} \In {extendedRCC} \Withc \newline 
					\mbox{\hspace{12em}\quad}\fv{\pb{}} \neq \bot \And  \fv{\pb{}} = (\eb,*) \And \newline
					\mbox{\hspace{12em}\quad} \fv{PC }\neq \bot \And  \Any m \In \fv{PC} \IsSuchThat  m = \tnsmessage{*}{*,r,*}
					\}$\;		
					\BlankLine
					\eIf{${roundsAndPreparedBlocks} = \{\}$}
					{
						\tcc{If all of Prepared-Certificates included in the Round-Change messages included in $ extendedRCC $ are empty, then the Ethereum block included in the proposed block  must be any valid Ethereum block for height $h$}
						$\Let\: \eb{} \equiv \CreateNewProposedBlock{\ensuremath{h,v}}$ \;
					}
					{
						\tcc{If at least one of the Prepared-Certificates included in the Round-Change messages included in $ extendedRCC $ is not empty, then the Ethereum block included in the proposed block must match the Ethereum block included in the Prepared-Certificate with the highest round number }
						$\Let\: \eb{} \equiv \Choose \newline
						\{ \eb{} \With  \newline
						\mbox{\qquad}(\eb{},r) \In {roundsAndPreparedBlocks} \And \newline
						\mbox{\qquad}r = \max(\{ r' \With (*,r') \In roundsAndPreparedBlocks \}) \}$\;
					}
					\tcc{The proposed block is modelled as a tuple where the first element is the proposed Ethereum block and the second element is the current round}
					$\Let\: \pb{}\equiv (\eb,r_{h,v})$\;
					\BlankLine
					\tcc{$\Let\:RCC$ be the signed portion of the Round-Change messages included in ${extendedRCC}$ }
					$\Let\: \mathit{RCC} \equiv \{\smessage{RC}{sender} \With \mmessage{\smessage{RC}{sender},*} \In {extendedRCC}  \}$\;
					\Multicast{\tsmessageinnsmessage{PROPOSAL}{h, r_{h,v}, \kec{\pb{}}}{v}{\pb{},\mathit{RCC}}} to \avhmacro[]\; 
					${acceptedPB}_{h,v} \gets \pb{}$\;					
				}
	}
\end{algorithm}
\pagebreak
\addtocounter{algocf}{-1}
\begin{algorithm}[H]
	\caption{$h$-th instance of the \ibfpmthree{} for validator $v$ (continue).}
	\Indp
			\tcc{reception of PROPOSAL messages for rounds higher than 0}
			\Upon{\label{ln:received-proposal-round-higher-than-0}\color{\uponcolor}{\ensuremath{\ThereExists\tsmessageinnsmessage{PROPOSAL}{h, \fv{r_{pp}}, \fv{H}}{\fv{sender}}{\fv{\pb{}}:(\fv{EB},\fv{r_{EB}}),\fv{\mathit{RCC} }} \In {receivedMessages}_v \Withc \newline
						\mbox{\qquad}\fv{sender}= \Proposer{\ensuremath{\fv{r_{pp}}}} \And \newline
						\mbox{\qquad}\fv{H} = \kec{\fv{\pb{}}} \And \vspace{0.3em} \newline 
						\mbox{\qquad}
					 	\begin{minipage}{16cm}
					 		\tcc{the round number of the Proposal message is either higher than the current round number or equal to the current round number provided that no Proposal message for the round included in the Round-Change message has alrady been accepted}
					 	\end{minipage}		
					 	\newline				
						\mbox{\qquad}(\fv{r_{pp}}>r_{h,v} \Or (\fv{r_{pp}} = r_{h,v} \And {acceptedPB}_{h,v} = \bot)) \And \newline 
						\mbox{\qquad}\text{\tcc{the Round-Change-Certificate contains at least \quorum{\nhmacro[]} messages}}\newline
						\mbox{\qquad}\SizeOf{RCC} \geq \quorum{\nhmacro[]} \And \vspace{0.3em}\newline
						\mbox{\qquad}
						\begin{minipage}{16cm}
							{\tcc{any message in $RCC$ is a Round-Change message for height $h$ and round equal to the round of the Proposal message sent by a validator}}
						\end{minipage}
						\newline
						\mbox{\qquad}\Any m \In \fv{RCC} \IsSuchThat \newline
						\mbox{\qquad\quad}m = \tsmessage{ROUND-CHANGE}{h, \fv{r_{pp}},*}{\fv{sender'}} \And \newline
						\mbox{\qquad\quad}sender' \In \avhmacro[] \And \vspace{0.3em}	\newline				
						\mbox{\qquad}
						\begin{minipage}{16cm}
							{\tcc{any message in $RCC$ is signed by a different validator}}
						\end{minipage}
						\newline							
						\mbox{\qquad}\Any \smessage{m_1}{sender'},\smessage{m_2}{sender''} \In RCC \IsSuchThat m_1 \neq m_2 \:\Implies\: sender' \neq sender'' \And
						}\newline
						 $\mbox{\hspace{0.8em}}v \neq \Proposer{\ensuremath{\fv{r_{pp}}}} $  }}{
				
				\BlankLine
				\tcc{\Let $roundsAndPreparedBlockHashes$ be the set of all tuples where the the first element is the block hash included in  non-empty and valid Prepared-Certificates included in each of the Round-Change messages in $ RCC $ and the second element is the round number of those Prepared-Certificates  }
				\Let ${roundsAndPreparedBlockHashes} \equiv  \{( H,r) \Withc \newline
				\mbox{\quad}\ThereExists \tsmessage{ROUND-CHANGE}{h, \fv{r_{pp}},\fv{PC}}{*} \In RCC \Withc  \newline
				\mbox{\hspace{10.6em}\quad}\fv{PC} \neq \bot$ \And \newline
				$\mbox{\hspace{10.6em}\quad}{validPC}(\fv{PC},\fv{r_{pp}},h,v) \And \newline
				\mbox{\hspace{10.6em}\quad}\Any m \In \fv{PC} \IsSuchThat  m = \tnsmessage{*}{*,r,H}   \}$\;
				\BlankLine
				\If{${roundsAndPreparedBlockHashes} \neq \{\}$}
				{
					\tcc{If at least one of the Prepared-Certificates included in the Round-Change messages in $RCC$ is non-empty and valid then \:\Let $ maxR $ be the round number of the Prepared-Certificates included in the Round-Change messsages in $ RCC $ with the highest round number and \:\Let $ exptectedH $ be the block hash included in the  Prepared-Certificates with the highest round number}
					$\Let\: maxR \equiv \max(\{ r' \With (*,r') \In roundsAndPreparedBlockHashes \})$\;
					$\Let\: expectedH \equiv \Choose \newline
					\{ H \With  \newline
					\mbox{\qquad}(H,r) \In {roundsAndPreparedBlockHashes} \And \newline
					\mbox{\qquad}r = maxR \}$\;
				}
				\If{\tcc{there is no a valid and non-empty Prepared-Certificate included in the Round-Change messages in $RCC$ and the Ethereum block included in the proposed block is a valid Ethereum block for height $h$}
					 \mbox{\quad}$({roundsAndPreparedBlockHashes} = \{\} \And \Valid{\ensuremath{\fv{\eb},{{chain_{EB}}_v}[h-1]}} )$ \Or \vspace{0.3em} \newline
					 \tcc{the hash of the tuple composed of the Ethereum block included in the proposed block and the highest round number of the Prepapred-Certificates included in the Round-Change messages in $RCC$ matches the expected hash}
					 \mbox{\quad}$({roundsAndPreparedBlockHashes}\neq \{\} \And \kec{(\fv{\eb{}},maxR)} = {expectedH})$}
				{
					\vspace{0.3em}
					$StartNewRound(\fv{r_{pp}}, h, v)$\;
					\tcc{$r_{h,v} = r_{pp}  $ from this point on}
					\BlankLine
					${acceptedPB}_{h,v} \gets \fv{\pb{}}$\label{algo3:set-accept-pre-prepare-2}\;
					\Multicast{\tsmessage{PREPARE}{h, r_{h,v}, \fv{H}}{v}} to \avhmacro[]\label{ln:send-prepare-on-reception-of-proposal-for-round-higher-than-zero}\;
				}
			}
\end{algorithm}
\clearpage

}

\pagebreak
\KOMAoptions{paper=A4,pagesize,DIV=15}
\recalctypearea
\recalctypearea

\section{Robustness Analysis}\label{sec:robustness-analysis}

In this section we show that the IBFT 2.0 protocol is robust when operating in an eventually synchronous network.\\
As such the eventually synchronous network model assumption is assumed throughout this analysis.

\subsection{Definitions}
In this section we provide a few definitions that will be used in the following sections to draw the robustness analysis of the IBFT 2.0 protocol to conclusion.
Most of these definitions were first introduced in \cite{IBFT1-Analysis} and are re-stated here for completeness. 

$t$-Byzantine-fault-tolerant Persistence is defined as follows.
\begin{definition}[$t$-Byzantine-fault-tolerant Persistence.]
	The IBFT 2.0 protocol ensures $t$-Byzantine-fault-tolerant Persistence if and only if the following statement is true:
	provided that no more than $t$ validators are Byzantine, the IBFT protocol guarantees the Persistence property defined in \cref{def:robustnes}. 
\end{definition}

In the context of the \ibfpmthree{}, Safety is defined as follows:
\begin{definition}[$t$-Byzantine-fault-tolerant Safety for the \ibfpmthree{}]\label{def:t-tolerate-ibfpmthree}
	The \ibfpmthree{} ensures $t$-Byzantine-fault-tolerant Safety if and only if it guarantees the validity of the following statement:
	in the presence of no more than $t$ Byzantine validators, the protocol ensures that no two valid finalised blocks including different Ethereum blocks for the same height can ever be produced.
\end{definition}

In relation to the Safety property of the \ibfpmthree{}, we define Byzantine-fault-tolerant Safety threshold as follows.
\begin{definition}[Byzantine-fault-tolerant Safety threshold]
	Byzantine-fault-tolerant Safety threshold for a protocol that guarantees $t$-Byzantine-fault-tolerant Safety is defined as the maximum number of Byzantine nodes that the protocol can withstand while ensuring Safety, i.e. $t$.
\end{definition}

The following two definitions are related to the Liveness property of the IBFT 2.0 protocol.
\begin{definition}[$t$-Byzantine-fault-tolerant Liveness]
	The IBFT 2.0 protocol ensures $t$-Byzantine-fault-tolerant Liveness if and only if the following statement is true:
	provided that no more than $t$ validators are Byzantine, the IBFT 2.0 protocol guarantees the Liveness property defined in \cref{def:robustnes}. 
\end{definition}

\begin{definition}[$t$-Byzantine-fault-tolerant Weak-Liveness of the \ibfpmthree{}]
	The \ibfpmthree{} guarantees $t$-Byzantine-fault-tolerant Weak-Liveness if and only if, provided that no more than $t$ validators are Byzantine, it guarantees that for any $h$ instance of the \ibfpmthree{} at least one valid finalised block for height $h$ will eventually be produced.
\end{definition}

In relation to the Weak-Liveness property of the \ibfpmthree{}, we define Byzantine-fault-tolerant Weak-Liveness threshold as follows: 
\begin{definition}[Byzantine-fault-tolerant Weak-Liveness threshold]
	Byzantine-fault-tolerant Weak-Liveness threshold for a protocol that guarantees $t$-Byzantine-fault-tolerant Weak-Liveness is defined as the maximum number of Byzantine nodes that the protocol can withstand while ensuring Weak-Liveness, i.e. $t$.
\end{definition}

As proved in \citewithauthor{Dwork:1988:CPP:42282.42283}, when the network communication is eventually synchronous, consensus is deterministically possible in a network of $n$ nodes participating in the consensus protocol if and only if no more than  $f(n) \equiv \left \lfloor \frac{n-1}{3} \right \rfloor $ of these nodes are Byzantine.
The following definitions of optimal Byzantine-fault-tolerant Safety, Persistence, Weak-Liveness and Liveness follow directly from this known lower limit.
\begin{definition}[Optimal Byzantine-fault-tolerant Safety threshold for the \ibfpmthree{} ]\label{def:optimal-byz-fault-tol-safety-for-ibfpmthree}
	The \ibfpmthree{} guarantees optimal Byzantine-fault-tolerant Safety threshold provided that for any instance $h$ its Byzantine-fault-tolerant Safety threshold corresponds to $f(n_h)$ where $n_h$ is the number of validators for the $h$-th instance of \ibfpmthree{}.
\end{definition}

\begin{definition}[Optimal Byzantine-fault-tolerant Persistence threshold for the IBFT protocol ]\label{def:optimal-Byzantine-fault-tolerance-persistence}
	The IBFT 2.0 protocol guarantees optimal Byzantine-fault-tolerant Persistence threshold if and only if
	it guarantees the Persistence property defined in \Cref{def:robustnes} despite up to $f(n_h)$ validators being Byzantine for any instance $h$ of the \ibfpmthree{}  where $n_h$ is the number of validators for the $h$-th instance of \ibfpmthree{}.
\end{definition}

\begin{definition}[Optimal Byzantine-fault-tolerant Weak-Liveness threshold for the \ibfpmthree{} ]\label{def:optimal-byz-fault-tol-weak-liveness-for-ibfpmthree}
	The \ibfpmthree{} guarantees optimal Byzantine-fault-tolerant Weak-Liveness threshold provided that for any instance $h$ its Byzantine-fault-tolerant Weak-Liveness threshold corresponds to $f(n_h)$ where $n_h$ is the number of validators for the $h$-th instance of \ibfpmthree{}.
\end{definition}

\begin{definition}[Optimal Byzantine-fault-tolerant Persistence threshold for the IBFT protocol ]\label{def:optimal-Byzantine-fault-tolerance-liveness}
	The IBFT 2.0 protocol guarantees optimal Byzantine-fault-tolerant Liveness threshold if and only if
	it guarantees the Liveness property defined in \Cref{def:robustnes} despite up to $f(n_h)$ validators being Byzantine for any instance $h$ of the \ibfpmthree{} where $n_h$ is the number of validators for the $h$-th instance of \ibfpmthree{}..
\end{definition}

\subsection{Safety Analysis of the \ibfpmthree{}}
In this section we prove that the \ibfpmthree{} provides optimal Byzantine-fault-tolerant Safety.

We use the inductive assumption used by Lemma 3 in \cite{IBFT1-Analysis}, which states that the local blockchains of all honest nodes are identical until finalised block with height $h-1$.
Therefore, since the set of validators for the $h$-th instance of the \ibfpmthree{} is a function of the local blockchain until the block with height $h-1$, this set is identical amongst all honest validators.
We denote the total number of validators for the $h$-th instance of the \ibfpmthree{} with $n_h$.\\
We also  assume that for any instance $h$ of the \ibfpmthree{}, no more than $f(n_h)$ validators are Byzantine.

\begin{lemma}\label{lem:intersection-of-q-and-}
	The intersection of any two sets of $ \quorum{n_h} $ validators includes an honest validator.
\end{lemma}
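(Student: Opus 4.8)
The plan is to use a standard quorum--intersection (pigeonhole) argument. Let $Q_1$ and $Q_2$ be any two sets of validators for instance $h$, each of size $\quorum{n_h} = \lceil 2n_h/3 \rceil$. Since both are subsets of the full validator set, which has size $n_h$, the inclusion--exclusion principle gives the lower bound
\begin{equation*}
|Q_1 \cap Q_2| \;\geq\; |Q_1| + |Q_2| - n_h \;=\; 2\quorum{n_h} - n_h .
\end{equation*}
The goal is then to show that this lower bound strictly exceeds the maximum number of Byzantine validators, $f(n_h) = \lfloor (n_h-1)/3 \rfloor$. Once that is established, the assumption that at most $f(n_h)$ validators are Byzantine forces at least one validator in $Q_1 \cap Q_2$ to be honest, which is exactly the claim.

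First I would reduce the statement to the purely arithmetic inequality
\begin{equation*}
2\quorum{n_h} - n_h \;\geq\; f(n_h) + 1 .
\end{equation*}
A convenient simplification is the identity $f(n_h) + 1 = \lceil n_h/3 \rceil$, which is immediate from the definition of $f$, so the target inequality becomes $2\lceil 2n_h/3 \rceil - n_h \geq \lceil n_h/3 \rceil$. This rephrasing makes transparent why the quorum size $\lceil 2n_h/3 \rceil$ is chosen: it is the smallest threshold for which two quorums are guaranteed to overlap in more than $f(n_h)$ validators.

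The main obstacle is verifying this inequality rigorously in the presence of the floor and ceiling operators, since $n_h$ need not be divisible by $3$. I would handle this by a short case analysis on $n_h \bmod 3$: writing $n_h = 3k$, $3k+1$, or $3k+2$, the ceiling $\quorum{n_h}$ evaluates to $2k$, $2k+1$, $2k+2$ respectively, so $2\quorum{n_h} - n_h$ equals $k$, $k+1$, $k+2$, while $\lceil n_h/3\rceil$ equals $k$, $k+1$, $k+1$. In every case $2\quorum{n_h} - n_h \geq \lceil n_h/3 \rceil$ holds, with equality when $n_h \equiv 0,1 \pmod 3$ and strict inequality when $n_h \equiv 2 \pmod 3$. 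Combining this with the inclusion--exclusion bound yields $|Q_1 \cap Q_2| \geq f(n_h) + 1 > f(n_h)$, and since the Byzantine validators number at most $f(n_h)$, the intersection must contain at least one honest validator.
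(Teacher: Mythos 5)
Your proof is correct: the inclusion--exclusion bound $|Q_1 \cap Q_2| \geq 2\quorum{n_h} - n_h$, the identity $f(n_h)+1 = \lceil n_h/3 \rceil$, and the case analysis on $n_h \bmod 3$ all check out, yielding $|Q_1 \cap Q_2| \geq f(n_h)+1$. The paper itself does not spell out an argument here --- it simply defers to Lemma~23 of the cited IBFT analysis paper --- and your quorum-intersection argument is the standard one that result rests on, so you have in effect supplied the missing details rather than diverged from the paper's approach.
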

\begin{proof}
	See Lemma 23 of \cite{IBFT1-Analysis}.
\end{proof}

\begin{lemma}\label{lem:intersection-q-f-with-q}
	The intersection of any set of $ \quorum{n_h} - f(n_h) $ honest validators with any set of \quorum{n_h} validators includes an honest validator.
\end{lemma}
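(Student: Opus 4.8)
The plan is to reduce the claim to a counting argument and then to verify the single arithmetic inequality on which it rests. Write $Q \equiv \quorum{n_h}$ and $f \equiv f(n_h)$, and let $A$ be any set of $Q - f$ honest validators and $B$ any set of $Q$ validators for instance $h$ of the \ibfpmthree{}. Since every member of $A$ is honest, every member of the intersection $A \cap B$ is honest as well; hence it suffices to prove that $A \cap B$ is non-empty.

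First I would argue by complementation. Suppose, for contradiction, that $A \cap B = \emptyset$. Then $B$ is contained in the complement of $A$ within the full set of $n_h$ validators, so that $Q = |B| \le n_h - |A| = n_h - (Q - f)$, which rearranges to $2Q \le n_h + f$.

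The crux is therefore the inequality $2\,\quorum{n_h} > n_h + f(n_h)$, i.e.\ $2Q \ge n_h + f + 1$. I would establish this by a routine case analysis on $n_h \bmod 3$, substituting $\quorum{n_h} = \lceil 2 n_h / 3 \rceil$ and $f(n_h) = \lfloor (n_h - 1)/3 \rfloor$; in each of the three residue classes the left-hand side exceeds the right-hand side by at least one. This contradicts the inequality $2Q \le n_h + f$ derived above, so $A \cap B \ne \emptyset$, and since this intersection lies inside the all-honest set $A$, it contains an honest validator.

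I expect the only genuine work to be the residue computation, and it is light; the conceptual content is simply that a quorum of size $\quorum{n_h}$ is too large to fit inside the $n_h - (\quorum{n_h} - f(n_h))$ validators lying outside $A$. This is in fact the same quorum-intersection inequality that underlies \cref{lem:intersection-of-q-and-}, so an alternative route would be to pad $A$ up to a full quorum using the (at most $f$) Byzantine validators and then invoke that lemma directly. I would avoid this route, however, because the padding step requires extra care when strictly fewer than $f$ validators are actually Byzantine, whereas the direct complementation argument is uniform in the true number of faults.
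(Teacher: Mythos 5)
Your proof is correct. It differs from the paper's in presentation rather than in substance: the paper disposes of this lemma in one sentence, asserting it is ``obvious'' from \cref{lem:intersection-of-q-and-} (any two $\quorum{n_h}$-sets intersect in an honest validator) together with the fact that any $\quorum{n_h}$-set contains at least $\quorum{n_h}-f(n_h)$ honest members --- essentially the padding route you describe and decline to take. You instead argue directly by complementation and verify the underlying inequality $2\,\quorum{n_h} > n_h + f(n_h)$ by a residue check, which is the same arithmetic that powers \cref{lem:intersection-of-q-and-} itself. Your version is more self-contained and, as you correctly observe, avoids a looseness in the padding argument: as literally stated, \cref{lem:intersection-of-q-and-} only puts an honest validator somewhere in the intersection of the padded set with $B$, not necessarily inside the original all-honest set $A$ (one would additionally need that the intersection has size at least $f(n_h)+1$, so that it cannot consist solely of padded elements). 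The paper's terse proof glosses over exactly this point, so your direct route is the cleaner of the two; the only cost is redoing a small amount of arithmetic that the paper delegates to the cited Lemma 23 of the IBFT analysis.
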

\begin{proof}
	Obvious as the intersection of any two sets of \quorum{n_h} validators includes an honest validator and any set of \quorum{n_h} validators is guaranteed to include at least $ \quorum{n_h} - f(n_h) $ honest validators.
\end{proof}

\begin{lemma}\label{lem:honst-validator-send-only-one-propose-or-prepare-message-for-each-round}
	An honest validator sends either one and only one Proposal or one and only one Prepare message for a round $r$.
\end{lemma}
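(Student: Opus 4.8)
The plan is to split on whether the validator $v$ is the selected proposer for round $r$, i.e.\ whether $v = \hproposerp{r}$, and in each case to show that $v$ sends at most one message of the relevant type and never a message of the other type. First I would isolate where each message type is emitted in \cref{algo:ibpfpmone}. A Proposal message for round $r$ is multicast only at two points: in the initialisation block (\cref{ln:broadcast-prepare}) when $r=0$ and $v=\hproposerp{0}$, and in the Round-Change handler (\cref{ln:received-round-change}) when $r>0$ and $v=\hproposerp{r}$; both are reached only when $v = \hproposerp{r}$. A Prepare message for round $r$ is multicast only in the round-$0$ Proposal handler (\cref{ln:upon-pre-prepare}) and in the higher-round Proposal handler (\cref{ln:send-prepare-on-reception-of-proposal-for-round-higher-than-zero}), each guarded by $v \neq \hproposerp{r}$. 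This guard structure immediately yields mutual exclusivity: a proposer for round $r$ never sends a Prepare for round $r$, and a non-proposer for round $r$ never sends a Proposal for round $r$.

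The central tool is a monotonicity invariant: $r_{h,v}$ is non-decreasing throughout the execution of instance $h$. I would establish this by inspecting the four call sites of ${StartNewRound}$. The call in the initialisation block sets $r_{h,v}=0$; the call via ${StartNewRoundAndSendRoundChange}$ on timer expiry (\cref{ln:timer-expiry}) moves to $r_{h,v}+1$; and the calls in \cref{ln:received-round-change,ln:received-proposal-round-higher-than-0} move to a round $r'$ that, by the respective guard, satisfies $r' > r_{h,v}$ or ($r' = r_{h,v}$ and ${acceptedPB}_{h,v} = \bot$). Hence $r_{h,v}$ never decreases, and once it exceeds $r$ it never returns to $r$. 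Since ${acceptedPB}_{h,v}$ is reset to $\bot$ only inside ${StartNewRound}$ (\cref{algo3:set-accept-pre-prepare-1}), it follows that once ${acceptedPB}_{h,v}$ is set to a non-$\bot$ value while $r_{h,v}=r$, it remains non-$\bot$ for as long as $r_{h,v}=r$.

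With these facts, uniqueness in each case is short. If $v=\hproposerp{r}$: for $r=0$ the Proposal is emitted inside the initialisation block, which runs exactly once; for $r>0$ the handler at \cref{ln:received-round-change} sets ${acceptedPB}_{h,v}$ to the proposed block immediately after multicasting, and re-firing for the same round $r$ would require a valid Round-Change-Certificate for round $r$, whose validity condition demands $r > r_{h,v}$ or ($r = r_{h,v}$ and ${acceptedPB}_{h,v} = \bot$) --- both impossible once $r_{h,v}=r$ and ${acceptedPB}_{h,v}\neq\bot$. If $v\neq\hproposerp{r}$: each of the two Prepare-emitting handlers is guarded by ${acceptedPB}_{h,v}=\bot$ for the round in question and sets ${acceptedPB}_{h,v}$ to the accepted block on \cref{ln:upon-pre-prepare:end-accept,algo3:set-accept-pre-prepare-2}; by the monotonicity invariant $r_{h,v}$ never returns to $r$, so ${acceptedPB}_{h,v}$ is never reset while at round $r$ and the guard can be satisfied at most once for round $r$.

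I expect the main obstacle to be the ``no revisiting of a round'' argument, i.e.\ making the monotonicity invariant fully rigorous --- in particular handling the equal-round cases $r' = r_{h,v}$ that are permitted only when ${acceptedPB}_{h,v}=\bot$, and confirming that no control path resets ${acceptedPB}_{h,v}$ without simultaneously leaving $r_{h,v}$ non-decreasing. Once that invariant is pinned down, every uniqueness claim reduces to the observation that each emitting handler sets ${acceptedPB}_{h,v}$ and is itself guarded by a condition that can no longer hold for the same round.
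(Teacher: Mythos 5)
Your argument is correct and follows the same route the paper intends: the paper's entire proof is the single sentence ``This is obvious from the pseudocode,'' and what you have written is precisely the case analysis (proposer vs.\ non-proposer), the round-monotonicity invariant, and the $acceptedPB$ guard inspection needed to make that claim rigorous. No gaps; your version simply supplies the detail the paper omits.
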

\begin{proof}
	This is obvious from the pseudocode.
\end{proof}

\begin{corollary}\label{cor:an-honest-node-never-send-different-propose-or-prepare-message-for-the-same-round}
	An honest validator never sends two Proposal or Prepare messages for the same round $r$ with different block hashes.
\end{corollary}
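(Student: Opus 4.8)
The plan is to derive this corollary immediately from \cref{lem:honst-validator-send-only-one-propose-or-prepare-message-for-each-round}. That lemma already establishes the substantive fact: for any fixed round $r$, an honest validator emits \emph{exactly one} message of the relevant kind in that round, namely either a single Proposal (when it is the selected proposer for $r$) or a single Prepare (when it is a non-proposing validator for $r$), but never both and never more than one. The corollary is then a purely logical consequence, so I would not attempt any fresh reasoning about the pseudocode and would instead just read off the two readings of the statement.

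First I would fix an honest validator $v$ and a round $r$ and invoke \cref{lem:honst-validator-send-only-one-propose-or-prepare-message-for-each-round} to conclude that $v$ sends at most one Proposal message and at most one Prepare message for round $r$. This at once rules out $v$ sending two Proposal messages for round $r$, and likewise two Prepare messages, and hence in particular rules out two messages of the same type carrying distinct block hashes. For the cross-type reading—that $v$ does not emit a Proposal and a Prepare for the same round with mismatched hashes—I would note that the lemma's \emph{either\ldots or} is exclusive for a given round: an honest $v$ is, for round $r$, either the proposer (sending only the Proposal) or a non-proposer (sending only the Prepare), so no Proposal/Prepare pair is ever produced and the hashes cannot disagree. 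In every interpretation the conclusion is immediate.

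I do not anticipate any genuine obstacle, since all the work is front-loaded into \cref{lem:honst-validator-send-only-one-propose-or-prepare-message-for-each-round}, which itself reads directly off the pseudocode: the proposer branch in the initialisation block and in the Round-Change handler emits a single Proposal, while the non-proposer Prepare emissions are guarded by the state variable ${acceptedPB}_{h,v}$ (set once per round), so only one Prepare per round is sent. The only point I would state carefully is the mutual exclusivity of the proposer and non-proposer roles for a fixed round, which is exactly what the preceding lemma encodes; with that in hand the corollary follows with a single sentence.
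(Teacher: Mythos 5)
Your proposal is correct and matches the paper's approach: the paper states this corollary with no separate proof, treating it exactly as you do—an immediate logical consequence of \cref{lem:honst-validator-send-only-one-propose-or-prepare-message-for-each-round}, whose ``one and only one Proposal or one and only one Prepare'' already excludes both same-type duplicates and a cross-type Proposal/Prepare pair in the same round. Your additional care in spelling out the two readings and the exclusivity of the ``either\ldots or'' is consistent with, and slightly more explicit than, what the paper leaves implicit.
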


\begin{lemma}
	For each round $r$, if two honest validators send a Commit message, then the block hashes included in the two Commit messages are the same. 
\end{lemma}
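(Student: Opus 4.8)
The plan is to run the standard quorum-intersection argument. Fix a round $r$ and an instance $h$, and let $v_1$ and $v_2$ be honest validators that send Commit messages for round $r$, with block-hash fields $H_1$ and $H_2$ respectively. By the pseudocode at \cref{ln:algo3:send-commit}, the hash carried by the Commit message of an honest validator $v_i$ is exactly $\kec{{acceptedPB}_{h,v_i}}$, i.e. the Keccak hash of the proposed block on which $v_i$ became \nprepared{}. So it suffices to show $H_1 = H_2$.

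First I would extract, for each $v_i$, a set $S_i$ of $\quorum{\nhmacro[]}$ distinct validators each of which broadcast either a Proposal or a Prepare message for round $r$ and instance $h$ carrying hash $H_i$. This set comes directly from the guard of the Commit-sending \textbf{upon} block at \cref{ln:received-prepare}: to send its Commit, $v_i$ must have accepted a Proposal with hash $H_i$ from the round-$r$ proposer and must have collected at least $\quorum{\nhmacro[]}-1$ matching Prepare messages, all with hash $H_i$ and all for round $r$, from distinct non-proposing validators. Taking $S_i$ to be the proposer together with these Prepare senders yields $\quorum{\nhmacro[]}$ distinct validators that all support $H_i$ for round $r$.

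Then I would apply \cref{lem:intersection-of-q-and-} to the two quorums $S_1$ and $S_2$: their intersection contains an honest validator $w$. Being in $S_1$, $w$ sent a Proposal or Prepare for round $r$ with hash $H_1$; being in $S_2$, it sent a Proposal or Prepare for round $r$ with hash $H_2$. Since $w$ is honest, \cref{lem:honst-validator-send-only-one-propose-or-prepare-message-for-each-round} (equivalently \cref{cor:an-honest-node-never-send-different-propose-or-prepare-message-for-the-same-round}) guarantees it emits only a single such message per round, forcing $H_1 = H_2$, which is exactly the claim.

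I expect the only delicate point to be the distinctness used in the construction of $S_i$, specifically ensuring that the proposer is not silently double-counted among the $\quorum{\nhmacro[]}-1$ Prepare senders. This is immediate when the proposer is honest, since an honest proposer sends a Proposal and no Prepare by \cref{lem:honst-validator-send-only-one-propose-or-prepare-message-for-each-round}; in general it follows from restricting the counted Prepare messages to non-proposing validators, so that the single Proposal sender and the $\quorum{\nhmacro[]}-1$ Prepare senders are indeed $\quorum{\nhmacro[]}$ distinct validators. Once $S_1$ and $S_2$ are genuine quorums, everything else is an immediate consequence of the two cited lemmas.
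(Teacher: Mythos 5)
Your proof is correct and follows essentially the same route as the paper: both arguments extract a quorum of $\quorum{n_h}$ validators (one Proposal sender plus $\quorum{n_h}-1$ Prepare senders) supporting each hash, apply \cref{lem:intersection-of-q-and-} to find an honest validator in the intersection, and conclude via \cref{lem:honst-validator-send-only-one-propose-or-prepare-message-for-each-round} and \cref{cor:an-honest-node-never-send-different-propose-or-prepare-message-for-the-same-round}. The only difference is cosmetic (you argue directly rather than by contradiction, and you spell out the distinctness of the quorum members, which the paper leaves implicit).
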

\begin{proof}
	Assume that the Lemma is false.
	This implies that two honest validators send two Commit messages including different block hashes.
	This, in turn, implies that each validator received at least \quorum{n_h} messages (one Proposal message and $ \quorum{n_h}-1 $ Prepare messages) for each of the two different block hashes.
	Since two sets of size \quorum{n_h} validators always intersect in an honest validator, this implies that one honest validator sent a Proposal or Prepare message for one of the block hashes and a Proposal or Prepare message for the other block hash.
	This is a contradiction as \Cref{lem:honst-validator-send-only-one-propose-or-prepare-message-for-each-round,cor:an-honest-node-never-send-different-propose-or-prepare-message-for-the-same-round} show that an honest validator never sends two different Prepare messages or two different Proposal messages and it sends either a Prepare message or a Proposal message for a given round, but never both.
\end{proof}

\begin{lemma}\label{lem:no-two-valid-pc-can-be-created}
	No two Prepared-Certificates for the same round and for different block hashes can be created.
\end{lemma}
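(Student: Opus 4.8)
The plan is to argue by contradiction using quorum intersection, in exactly the spirit of the preceding lemma on conflicting Commit messages. Suppose two valid Prepared-Certificates $PC_1$ and $PC_2$ for the same round $r$ exist but carry distinct block hashes $H_1 \neq H_2$. By the definition of validity (the predicate ${validPC}$), each $PC_i$ is either $\bot$ or contains exactly one Proposal message together with at least $\quorum{n_h}-1$ Prepare messages, all for round $r$, all carrying hash $H_i$, and crucially all sent by distinct senders drawn from the validator set \avhmacro[]. Since $H_1 \neq H_2$, neither certificate is $\bot$, so each $PC_i$ contributes a set $S_i$ of at least $\quorum{n_h}$ distinct validators, every one of which is recorded as having sent a Proposal or Prepare message for round $r$ with hash $H_i$.

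Next I would invoke quorum intersection. By \cref{lem:intersection-of-q-and-}, the two sets $S_1$ and $S_2$, each of size $\quorum{n_h}$, intersect in at least one honest validator $v^{*}$. Because $v^{*}$ is honest, the signatures attributed to it inside $PC_1$ and $PC_2$ cannot have been forged (by the unforgeability assumption on the signature scheme), so $v^{*}$ genuinely sent a Proposal-or-Prepare message for round $r$ with hash $H_1$ and also a Proposal-or-Prepare message for round $r$ with hash $H_2$.

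Finally I would close the argument with the behaviour of honest validators. By \cref{lem:honst-validator-send-only-one-propose-or-prepare-message-for-each-round}, an honest validator sends either exactly one Proposal or exactly one Prepare message for round $r$ (never both and never two of the same type), and by \cref{cor:an-honest-node-never-send-different-propose-or-prepare-message-for-the-same-round} it never emits two such messages with different hashes. Hence $v^{*}$ cannot have sent messages carrying both $H_1$ and $H_2$, contradicting the previous paragraph. Therefore no two valid Prepared-Certificates for the same round with different block hashes can be created.

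The one point I would make explicit, rather than a genuine obstacle, is the bridge from ``a certificate is created'' to ``honest validators actually sent the constituent messages'': a Prepared-Certificate is merely a set of messages that could be assembled by any node, possibly Byzantine, so the argument must lean on ${validPC}$ forcing the senders to be distinct validators and on signature unforgeability to guarantee that the honest member $v^{*}$ of the intersection really did sign both conflicting messages. Everything else is a verbatim reuse of the quorum-intersection pattern already established for Commit messages.
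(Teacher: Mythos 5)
Your proposal is correct and follows essentially the same route as the paper's own proof: a contradiction via quorum intersection (\cref{lem:intersection-of-q-and-}) yielding an honest validator who must have sent Proposal-or-Prepare messages for both hashes, contradicting \cref{lem:honst-validator-send-only-one-propose-or-prepare-message-for-each-round} and \cref{cor:an-honest-node-never-send-different-propose-or-prepare-message-for-the-same-round}. Your explicit appeal to the distinct-senders clause of ${validPC}$ and to signature unforgeability merely spells out steps the paper leaves implicit.
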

\begin{proof}
	By, contradiction assume that the Lemma is false and two Prepared-Certificates, say $PC$ and $PC'$ are created for the same round $r$ but for different block hashes, say $H$ and $H'$ respectively.
	Since (i) a Prepared-Certificate includes at least \quorum{n_h} messages (between one Proposal message and $\quorum{n_h}-1$ Prepare messages) and (ii) according to 
	\cref{lem:intersection-of-q-and-} any two sets of \quorum{n_h} validators are guaranteed to intersect in at least one honest validator, this implies that at least one honest validator sent a Proposal or Prepare message for round $r$ and block hash $H$ and a Proposal or Prepare message for round $r$ and block hash $H'$.
	This is in contradiction with \Cref{lem:honst-validator-send-only-one-propose-or-prepare-message-for-each-round,cor:an-honest-node-never-send-different-propose-or-prepare-message-for-the-same-round}.	
\end{proof}

\begin{lemma}\label{lem:if-finalised-block-is-created-only-that-block-can-be-proposed-afterwards}
	If an honest validator $v$ creates a valid finalised block for the Ethereum block \eb{} while in round $r$, then the proposed block included in any valid
	Proposal message sent after round $r$ includes the same Ethereum block \eb{}. 
\end{lemma}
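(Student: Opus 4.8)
The plan is to prove the statement by strong induction on the round number of the Proposal message, using quorum intersection to force every later proposer to re-propose \eb{}. First I would unpack what finalisation in round $r$ provides: a valid finalised block for \eb{} carries at least \quorum{n_h} commit seals over the proposed block $(\eb{},r)$ from distinct validators, so at least $\quorum{n_h} - f(n_h)$ of the corresponding Commit senders are honest, and by the pseudocode each such honest validator became \nprepared{} in round $r$ on $(\eb{},r)$ before sending its Commit. Let $S$ denote this set of honest validators. At the moment it prepared, each member of $S$ set its latest Prepared-Certificate to a certificate for round $r$ and hash \kec{(\eb{},r)}, and since the latest Prepared-Certificate is only ever updated to strictly higher rounds, each member of $S$ permanently holds a Prepared-Certificate for some round $\geq r$ from then on.

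For the inductive step, consider a valid Proposal message for a round $r_{pp}>r$ and assume the claim for every round strictly between $r$ and $r_{pp}$. Being for a round $>0$, such a Proposal carries a Round-Change-Certificate $RCC$ containing \quorum{n_h} Round-Change messages for round $r_{pp}$ from distinct validators. By \cref{lem:intersection-q-f-with-q}, the senders of $RCC$ intersect $S$ in at least one honest validator $w$. Because an honest validator's round only advances, $w$ must have prepared in round $r$ before emitting its round-$r_{pp}$ Round-Change message, so that message carries a non-empty Prepared-Certificate for some round in $[r,r_{pp})$. Hence $RCC$ is not all-empty and the highest Prepared-Certificate round $maxR$ occurring in $RCC$ satisfies $r \leq maxR < r_{pp}$, which places the proposal in the branch that must copy the Ethereum block of the highest-round Prepared-Certificate rather than the branch allowing an arbitrary block.

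It then remains to show that the Prepared-Certificate of round $maxR$ inside $RCC$ is for \eb{}. If $maxR = r$, then \cref{lem:no-two-valid-pc-can-be-created} forces its hash to equal \kec{(\eb{},r)}, since a Prepared-Certificate for round $r$ with that hash already exists (built by any member of $S$). If instead $r < maxR < r_{pp}$, I would argue that the existence of a valid Prepared-Certificate for round $maxR$ implies that some honest validator sent a valid Proposal for round $maxR$ with the same hash: the certificate contains at least $\quorum{n_h} - f(n_h) \geq 1$ honest senders, and an honest non-proposing sender emits its Prepare only after accepting a valid round-$maxR$ Proposal, while an honest proposer would itself have produced one. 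The induction hypothesis applied to round $maxR$ then gives that this valid Proposal includes \eb{}, so the hash of the round-$maxR$ Prepared-Certificate is \kec{(\eb{},maxR)}. In either case the validity condition of the round-$r_{pp}$ Proposal requires the Keccak hash of the tuple formed by its Ethereum block and $maxR$ to match that hash, so collision resistance of \kec{} yields that its Ethereum block is \eb{}, completing the induction.

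The step I expect to be the main obstacle is the coupling between the syntactic notion of a Prepared-Certificate and the semantic event of a valid Proposal having actually been sent: a Prepared-Certificate only bundles signed Proposal and Prepare messages and does not by itself witness that the embedded Proposal passed the full round-$maxR$ acceptance check (which involves its own nested Round-Change-Certificate). The care here is to recover, through an honest preparer, that a genuinely valid Proposal existed so that the induction hypothesis applies, handling the honest-proposer and honest-non-proposing-preparer cases separately. A secondary delicate point is the temporal monotonicity argument pinning $w$'s Prepared-Certificate to a round $\geq r$, which hinges on an honest validator never regressing to an earlier round and therefore being unable to prepare in round $r$ after already sending its round-$r_{pp}$ Round-Change message.
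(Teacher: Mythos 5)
Your proof is correct and follows essentially the same route as the paper's: induction on the round number, anchored on the $\quorum{n_h}-f(n_h)$ honest committers whose $latestPC$ is pinned at a round $\geq r$, with \cref{lem:intersection-q-f-with-q} forcing one of them into every Round-Change-Certificate and \cref{lem:no-two-valid-pc-can-be-created} plus Keccak uniqueness pinning the proposed block. Your inductive step is organised a little differently (an explicit case split on whether $maxR = r$ or $maxR > r$, and an explicit bridge from a syntactically valid Prepared-Certificate to the semantic existence of a valid Proposal via an honest preparer or honest proposer), which is if anything slightly more careful than the paper's corresponding passage, but it is not a genuinely different argument.
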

\begin{proof}

	If an honest validator $v$ creates a valid finalised block including the Ethereum block \eb{} while in round $r$, then at least $ \quorum{n_h} - f(n_h) $ honest validators must have sent a valid Commit message for round $r$ and the same block hash $H = \kec{(\eb,r)}$.
	This, in turn, implies that all of these $ \quorum{n_h} - f(n_h) $ honest validators have set their respective $latestPC$ variable to a Prepared-Certificate for block hash $H$ at round $r$.
	
	The proof is by induction on the round number.
	For the base case we show that the Lemma holds for any valid Proposal message sent for round $r+1$.
	Since any valid Proposal message includes \quorum{n_h} Round-Change messages, \Cref{lem:intersection-q-f-with-q} implies that at least one of these contains a Prepared-Certificate sent by an honest validator, say $v'$, that also sent a valid Commit message for block hash $H$ at round $r$.
	Since $v'$ sent a Commit message for block hash $H$ at round $r$, $v'$ must have included a valid Prepared-Certificate for round $r$ and block hash $H$ in its Round-Change message.
	Since valid Round-Change messages for round $r+1$ can only contain Prepared-Certificates for round number up to $r$, and according to \cref{lem:no-two-valid-pc-can-be-created} no two valid Prepared-Certificates for the same round can be created, a valid Proposal message for round $r+1$ must include a proposed block including an Ethereum block $ \eb' $ such that $ \kec{(r,\eb')} = H $.
	This together with our assumption on the uniqueness property of the Keccak function imply that $ \eb' = \eb $.
	

    For the inductive step, we assume that the Lemma is valid for $r'>r$, and then show that the Lemma is also valid for $r''=r'+1$.
    Consider that based on our inductive assumption, for any round $r'''$ such that  $r<r''' \leq r'$, if a valid Proposal message is sent then this message includes a proposed block including \eb{}.
    This implies that if any honest validator created a Prepared-Certificate while in any round $r'''$, then this Prepared-Certificate must be for hash $ \kec{(r''',\eb)}$ as honest validators only include valid Proposal message in their Prepared-Certificates.
    Now consider that since any valid Proposal message for rounds higher than 0 includes \quorum{n_h} Round-Change messages, \Cref{lem:intersection-q-f-with-q} implies that at least one of these contains a Prepared-Certificate sent by an honest-validator, say $v'$, that also sent a valid Commit message for hash $ \kec{(r,\eb)} $ and round $r$. 
    If this is the Prepared-Certificate with the highest round number that is included in the Round-Change-Certificate included in the Proposal message for round $r''$ then the Lemma is proved.
    If not, our consideration above and \cref{lem:no-two-valid-pc-can-be-created} imply that any valid Prepared-Certificate for any round $r'''$, with $r<r'''\leq r'$, is for hash $ \kec{(r''',\eb)}$.
    This and our assumption on the uniqueness property of the Keccak hash function complete the proof.
\end{proof}

\begin{lemma}\label{lem:if-valid-fb-is-produced-then-only-valid-fb-for-the-same-block-can-be-produced}
	If a valid finalised block including the Ethereum block \eb{} is created in round $r$, then any other finalised block created at the same round or later round includes the same Ethereum block \eb. 
\end{lemma}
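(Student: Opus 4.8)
The plan is to split on whether the other finalised block is produced in the very same round $r$ or in a strictly later round $r'>r$, and in both cases to reduce the statement to facts about the honest validators that necessarily committed each block. The common starting observation is that whenever \emph{any} valid finalised block including an Ethereum block $\eb''$ is created in a round $\rho$, at least $\quorum{n_h}-f(n_h)$ honest validators must have sent a valid Commit message for round $\rho$ carrying block hash $\kec{(\eb'',\rho)}$: by the definition of \textsf{isValidFinalisedBlock} a valid finalised block carries $\quorum{n_h}$ commit seals over $\kec{(\eb'',\rho)}$, and by unforgeability at most $f(n_h)$ of these can originate from Byzantine validators, so the remainder (which is positive) are genuine honest commits. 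Crucially, this holds regardless of whether the creator of the finalised block is honest or Byzantine.

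For the same-round case ($r'=r$), I would apply this observation to both finalised blocks to obtain an honest validator that committed $\kec{(\eb,r)}$ and an honest validator that committed $\kec{(\eb',r)}$, both in round $r$. The (unnamed) lemma above stating that any two honest validators that send a Commit message in the same round agree on the block hash then yields $\kec{(\eb,r)}=\kec{(\eb',r)}$, and the uniqueness property of the Keccak hash function gives $\eb=\eb'$.

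For the later-round case ($r'>r$), the observation applied to the round-$r$ block supplies exactly the fact that drives the proof of \cref{lem:if-finalised-block-is-created-only-that-block-can-be-proposed-afterwards} --- namely that $\quorum{n_h}-f(n_h)$ honest validators committed $(\eb,r)$ --- so its conclusion is available: every valid Proposal message sent after round $r$ proposes the Ethereum block $\eb$. I would then apply the observation to the round-$r'$ block to get an honest validator $w$ that sent a valid Commit message for $(\eb',r')$ in round $r'$. Inspecting the pseudocode, an honest validator only emits such a Commit when its accepted proposed block equals $(\eb',r')$, i.e. when it has accepted a valid Proposal message for round $r'$ proposing $\eb'$ (the commit seal is taken over $\kec{(\eb',r')}$ with the accepted round matching $r_{h,v}=r'$). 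Since $r'>r$, \cref{lem:if-finalised-block-is-created-only-that-block-can-be-proposed-afterwards} forces that accepted Proposal to propose $\eb$, whence $\eb'=\eb$.

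The main obstacle is the mismatch between the hypothesis here, which only asserts that a valid finalised block \emph{is created} (possibly by a Byzantine node), and \cref{lem:if-finalised-block-is-created-only-that-block-can-be-proposed-afterwards}, which is phrased for an \emph{honest} creator. The resolution, and the key step to make explicit, is that both statements ultimately depend only on the derived fact that $\quorum{n_h}-f(n_h)$ honest validators committed the block in the relevant round, and this fact follows from the mere existence of a valid finalised block. A secondary point to verify carefully at the pseudocode level is that an honest validator's valid Commit for $(\eb',r')$ really certifies acceptance of a valid Proposal for $\eb'$ in round $r'$, tying the commit seal back to the accepted proposed block.
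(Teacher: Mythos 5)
Your proof is correct and rests on the same load-bearing facts as the paper's: extract honest committers from the commit-seal quorums and invoke \cref{lem:if-finalised-block-is-created-only-that-block-can-be-proposed-afterwards} to pin down what those honest validators could have accepted. The differences are in the decomposition. The paper does not split cases: it applies \cref{lem:intersection-of-q-and-} to the two commit-seal quorums to obtain a \emph{single} honest validator $v$ that sent Commit messages for both $\kec{(\eb,r)}$ and $\kec{(\eb',r')}$, and then argues that $v$ can only commit $\eb$ at any round $\geq r$. You instead use a counting argument (any quorum contains at least $\quorum{n_h}-f(n_h)\geq 1$ honest validators) to get a possibly different honest witness for each block, handle $r'=r$ via the unnumbered lemma on same-round Commit agreement, and reserve \cref{lem:if-finalised-block-is-created-only-that-block-can-be-proposed-afterwards} for $r'>r$. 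Both routes work; yours is slightly longer but arguably cleaner at $r'=r$, where the paper's appeal to ``only $\eb$ can be proposed for round $r'$'' is loose, since that lemma constrains Proposals \emph{after} round $r$, not at round $r$ itself. Your most valuable addition is the explicit repair of the hypothesis mismatch: \cref{lem:if-finalised-block-is-created-only-that-block-can-be-proposed-afterwards} is stated for an \emph{honest} creator, whereas the present lemma (and the paper's own proof of it) allows the finalised block to be assembled by anyone; your observation that the lemma's proof only consumes the derived fact that $\quorum{n_h}-f(n_h)$ honest validators committed $(\eb,r)$ --- a fact that follows from the mere existence of a valid finalised block --- is exactly the right way to close that gap, and the paper leaves it implicit.
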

\begin{proof}
	The proof is by contradiction.
	 Hence, we assume that a finalised block including an Ethereum block $\eb'$, with $\eb\neq \eb'$, is produced at round $r' \geq r$.
	\cref{lem:intersection-of-q-and-} implies that an honest validator, say $v$, sent both a Commit message for hash $ \kec{(r,\eb)} $ and round $r$ and a Commit message for block hash $ \kec{(r',\eb')} $ and round $r'$.
	However, according to \cref{lem:if-finalised-block-is-created-only-that-block-can-be-proposed-afterwards}, only block $\eb$ can be proposed in a valid Proposal message for round $r'$.
	Hence, since honest validators only send Commit messages matching valid Proposal messages, this implies that $v$ can only send Commit messages for block hash  \kec{(r'',\eb)} for any round $r'' \geq r$ which leads to a contradiction.
\end{proof}

\begin{theorem}\label{lem:no-two-differnt-block-can-be-produced}
	The \ibfpmthree{} achieves optimal Byzantine-fault-tolerance threshold.
\end{theorem}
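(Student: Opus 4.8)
The plan is to derive the theorem almost directly from \Cref{lem:if-valid-fb-is-produced-then-only-valid-fb-for-the-same-block-can-be-produced}, after unpacking what ``optimal Byzantine-fault-tolerance threshold'' means here. By \Cref{def:optimal-byz-fault-tol-safety-for-ibfpmthree} together with \Cref{def:t-tolerate-ibfpmthree}, establishing the claim amounts to proving two things: (i) the \ibfpmthree{} guarantees Safety whenever at most $f(n_h)$ of the $n_h$ validators for instance $h$ are Byzantine, and (ii) that $f(n_h)$ is the largest such bound, i.e. no protocol can tolerate more. For (ii) I would simply invoke the lower bound of \citewithauthor{Dwork:1988:CPP:42282.42283} recalled above, which states that in an eventually synchronous network consensus is deterministically possible if and only if at most $f(n)$ of the participating nodes are Byzantine; this pins the optimal threshold at $f(n_h)$ and leaves only the achievability direction to argue.

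For (i), recall that throughout this section I work under the inductive assumption that all honest nodes share the same local blockchain up to height $h-1$, so that the validator set for instance $h$, and hence $n_h$, is common to all honest validators, and I assume at most $f(n_h)$ of them are Byzantine. First I would suppose, for contradiction, that two valid finalised blocks for the same height $h$ are produced, one containing the Ethereum block \eb{} and the other containing $\eb'$ with $\eb \neq \eb'$. Each valid finalised block carries, as part of its finalisation proof, the round number in which it was finalised; call these $r$ and $r'$ respectively, and assume without loss of generality that $r \le r'$.

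Next I would apply \Cref{lem:if-valid-fb-is-produced-then-only-valid-fb-for-the-same-block-can-be-produced}: since a valid finalised block containing \eb{} is created in round $r$, every finalised block created in round $r' \ge r$ must also contain \eb{}. In particular the second finalised block contains \eb{}, so $\eb' = \eb$, contradicting $\eb \neq \eb'$. This establishes (i) and, combined with the lower bound in (ii), yields the theorem.

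The work is essentially done by the preceding lemmas, so I do not expect a genuine obstacle in this final step; the delicate part already lies upstream, in \Cref{lem:if-finalised-block-is-created-only-that-block-can-be-proposed-afterwards}, whose inductive argument over rounds guarantees that once a block is finalised only that Ethereum block can appear in any later valid Proposal, and hence in any later finalised block. The one point I would be careful to justify in the present proof is that the round attached to each finalised block is well defined, which it is since it is recorded in the finalisation proof, and that the symmetric ``same round or later round'' formulation of \Cref{lem:if-valid-fb-is-produced-then-only-valid-fb-for-the-same-block-can-be-produced} legitimises the without-loss-of-generality ordering $r \le r'$.
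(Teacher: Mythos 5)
Your proposal is correct and takes essentially the same route as the paper, which likewise derives the theorem as an immediate consequence of \cref{lem:if-valid-fb-is-produced-then-only-valid-fb-for-the-same-block-can-be-produced}; your additional unpacking of the definitions, the WLOG ordering of rounds, and the appeal to the lower bound of Dwork et al.\ for the optimality direction merely make explicit what the paper treats as obvious.
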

\begin{proof}
	It is an obvious consequence of \cref{lem:if-valid-fb-is-produced-then-only-valid-fb-for-the-same-block-can-be-produced} that no two valid finalised blocks including two different Ethereum blocks can be produced by the \ibfpmthree{}.
\end{proof}

\newcommand{\sih}[1]{\ensuremath{{si}_{h,#1}}}
\newcommand{\cround}[1]{\ensuremath{{cr}_{h,#1}(t)}}
\newcommand{\croundt}[2]{\ensuremath{{cr}_{h,#1}(#2)}}
\newcommand{\sround}[2]{\ensuremath{{sr}_{h,#1}(#2)}}
\newcommand{\snround}[2]{\ensuremath{{nonForcedRoundStart}_{h,#1}(#2)}}
\newcommand{\snroundinv}[2]{\ensuremath{{startNaturalRound}_{h,#1}^{-1}(#2)}}
\newcommand{\enround}[2]{\ensuremath{{endNaturalRound}_{h,#1}(#2)}}
\newcommand{\roundstartedat}[2]{\ensuremath{{roundStartedAt}_{h,#1}(#2)}}
\subsection{Weak-Liveness Analysis of the \ibfpmthree{}}



In this section we show that the \ibfpmthree{} ensures optimal Byzantine-fault-tolerant Weak-Liveness.
We analyse the generic $h$-th instance of the \ibfpmthree{} and, as in the Section above, we assume that, for any instance $h$ of the \ibfpmthree{}, no more than $f(n_h)$ validators are Byzantine.

\begin{lemma}\label{lem:move-to-higher-round-if-round-timer-expires-or-receive-new-round-for-later-round}
	Honest validators running the $h$-th instance of the \ibfpmthree{} move to a round higher than the current one when one of the following events occurs:
	\begin{itemize}
		\item the round timer for the current round expires;
		\item they receive \quorum{n_h} Round-Change messages for a round higher than the current one sent by distinct validators;
		\item they receive a valid Proposal message for a round higher than the current one.
	\end{itemize}
\end{lemma}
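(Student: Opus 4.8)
The plan is to prove this lemma by direct inspection of the control flow of \Cref{algo:ibpfpmone}, treating each of the three triggering events separately and following the execution path down to the assignment $r_{h,v} \gets r$ performed inside the procedure ${StartNewRound}$, checking in each case that the new round $r$ strictly exceeds the round held immediately before the event. The single observation underpinning every case is that ${StartNewRound}(r,h,v)$ unconditionally sets $r_{h,v} \gets r$; hence it suffices to show that in each case the control flow reaches a call ${StartNewRound}(r,\cdot,\cdot)$, either directly or via ${StartNewRoundAndSendRoundChange}$, with $r > r_{h,v}$.

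For the round-timer expiry I would point to the \textbf{upon} block at \cref{ln:timer-expiry}: its guard is just the expiry of ${roundTimer}_{h,v}[r_{h,v}]$ and its body calls ${StartNewRoundAndSendRoundChange}(r_{h,v}+1,h,v)$, which in turn calls ${StartNewRound}(r_{h,v}+1,h,v)$; since $r_{h,v}+1 > r_{h,v}$, this case is immediate and unconditional. For the receipt of a valid Proposal message for a higher round I would invoke the \textbf{upon} block at \cref{ln:received-proposal-round-higher-than-0}: a Proposal message that is \emph{valid}, in the sense of satisfying all the acceptance conditions listed for that block (sender equal to the proposer for $\fv{r_{pp}}$, the hash match, a Round-Change-Certificate with at least \quorum{n_h} well-formed Round-Change messages, and the final block-validity / hash-matching test), causes the guarded body to execute ${StartNewRound}(\fv{r_{pp}},h,v)$; since validity here requires $\fv{r_{pp}} > r_{h,v}$, the validator moves to the strictly higher round $\fv{r_{pp}}$.

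For the receipt of \quorum{n_h} Round-Change messages for a higher round from distinct validators I would argue that such a set satisfies every clause of the predicate defining ${setsOfValidRoundChangeCertificates}(h,v)$, so that this set is non-empty and the \textbf{upon} block at \cref{ln:received-round-change} fires; the validator then selects ${extendedRCC}$ as a certificate of maximal round $r_{rc}$ and calls ${StartNewRound}(r_{rc},h,v)$ with $r_{rc}$ at least the (higher-than-current) round of the received messages, hence $r_{rc} > r_{h,v}$.

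I expect the main obstacle to lie precisely in this last case: possessing \quorum{n_h} Round-Change messages for a higher round by distinct senders is not by itself enough to make ${setsOfValidRoundChangeCertificates}(h,v)$ non-empty, because the predicate additionally demands that every message carry a Prepared-Certificate satisfying ${validPC}$ and that the proposed-block hashes be consistent with those certificates. I would therefore read the statement as referring to Round-Change messages that jointly constitute a \emph{valid} Round-Change-Certificate — equivalently, I would fold into the argument the fact that honest validators only ever emit Round-Change messages with valid Prepared-Certificates and matching hashes — and make explicit that this is what licenses the guard. The analogous but milder care is needed in the Proposal case, where I must ensure the word ``valid'' is taken to subsume the final conditional test, so that the round-changing body, and not merely the \textbf{upon} guard, is actually reached.
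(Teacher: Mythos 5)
Your proposal is correct and follows essentially the same route as the paper, whose entire proof of this lemma is the single line ``Obvious from the pseudocode''; your case-by-case trace through ${StartNewRoundAndSendRoundChange}$, the \textbf{upon} block at \cref{ln:received-proposal-round-higher-than-0}, and the \textbf{upon} block at \cref{ln:received-round-change} is precisely the verification the paper leaves implicit. Your observation that the third bullet must be read as requiring the Round-Change messages to form a valid Round-Change-Certificate (i.e.\ to pass the ${validPC}$ and hash-consistency clauses) is a genuine refinement of the statement that the paper glosses over, and it is the right reading given how the lemma is later applied.
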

\begin{proof}
	Obvious from the pseudocode.
\end{proof}

\begin{lemma}\label{lem:validator-moves-to-higher-round-by-time-r+RoundTimerTimeout}
	If validator $v$ starts round $r$ at time $t$, then $v$ will move to a round $r'>r$ by the time $t+ {RoundTimerTimeout}(r)$.
\end{lemma}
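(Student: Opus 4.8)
The plan is to argue directly from the pseudocode by a two-way case analysis, invoking \cref{lem:move-to-higher-round-if-round-timer-expires-or-receive-new-round-for-later-round}. First I would observe that ``$v$ starts round $r$ at time $t$'' means $v$ executes $StartNewRound(r,h,v)$ while entering $r$ from a strictly lower round (or with $r=0$), so the guard at \cref{ln:start-timer-begin} holds and $v$ arms ${roundTimer}_{h,v}[r]$ to fire after ${RoundTimerTimeout}(r)$, i.e.\ at absolute time $t+{RoundTimerTimeout}(r)$.

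Next I would split on whether $v$ is still in round $r$ at that deadline. If it is (so that $r_{h,v}=r$), then ${roundTimer}_{h,v}[r]$ is precisely the timer for the current round, so its expiry triggers the \textbf{upon} block at \cref{ln:timer-expiry}, which calls $StartNewRoundAndSendRoundChange(r_{h,v}+1,h,v)$ and moves $v$ to round $r+1>r$ at time $t+{RoundTimerTimeout}(r)$; this is exactly the first bullet of \cref{lem:move-to-higher-round-if-round-timer-expires-or-receive-new-round-for-later-round}. Otherwise $v$ has already left round $r$ strictly before the deadline, and since by inspection of every call site of $StartNewRound$ the current round $r_{h,v}$ is non-decreasing, leaving $r$ can only mean moving to some $r'>r$. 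Hence in either case $v$ reaches a round $r'>r$ by time $t+{RoundTimerTimeout}(r)$, which is what the lemma asserts.

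I do not expect a genuine obstacle, as the statement is essentially a bookkeeping consequence of the timer logic together with \cref{lem:move-to-higher-round-if-round-timer-expires-or-receive-new-round-for-later-round}. The only points needing care are confirming that ``starting round $r$'' always arms the timer for $r$ — which follows because the guard $r=0 \lor r>r_{h,v}$ at \cref{ln:start-timer-begin} holds on every path that genuinely enters a new round — and confirming that the other transitions available to an honest validator (receiving \quorum{n_h} Round-Change messages for a higher round, or a valid Proposal for a higher round) always target a strictly higher round, so that the second case really does yield $r'>r$ rather than leaving $v$ in round $r$.
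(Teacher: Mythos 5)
Your proof is correct and follows essentially the same route as the paper, which simply lists the three facts you use (timer expiry forces a move to a higher round, the timer for the current round is never restarted, and the current round never decreases) and declares the claim obvious from them. Your explicit two-case split merely unpacks that same reasoning in more detail, including the useful observation that the guard at \cref{ln:start-timer-begin} guarantees the timer is actually armed on entry to round $r$.
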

\begin{proof}
	Obvious from the following properties deductible from the pseudocode:
	\begin{itemize}
		\item on the expiry of the round timer for the current round, honest validators move to a higher round;
		\item the round timer for the current round is never restarted;
		\item validator $v$ never moves to a round lower than the current round.
	\end{itemize}
\end{proof}

Let \sround{v}{r} denote the time at which validator $v$ starts round $r$ of instance $h$.
\begin{lemma}\label{lem:first-validator-to-start-the-h-instance-is-the-first-to-start-any-round-in-that-instance}
	Let $v_f$ be the first honest validator of instance $h$ to start instance $h$.
	The following relation is verified for any honest validator $v$ of instance $h$ and round $r$ such that both $v_f$ and $v$ start round $r$ at some point while in instance $h$:
\begin{equation*}
	\sround{v_f}{r} \leq \sround{v}{r}
\end{equation*}

\end{lemma}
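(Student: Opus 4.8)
The plan is to prove the statement by relating, for every round $r$, the earliest time any honest validator can possibly reach round $r$ to the time $v_f$ reaches it. Writing $T_r \equiv \sum_{i=0}^{r-1} {RoundTimerTimeout}(i)$, I would establish two bounds: a \emph{lower} bound showing that no honest validator starts round $r$ before time $\sround{v_f}{0}+T_r$, and an \emph{upper} bound showing that $v_f$ itself reaches round $r$ by time $\sround{v_f}{0}+T_r$. Since by definition of $v_f$ we have $\sround{v_f}{0}=\min_{v}\sround{v}{0}$, chaining the two bounds yields $\sround{v_f}{r}\leq \sround{v_f}{0}+T_r\leq \sround{v}{r}$ for every honest $v$ that starts round $r$, which is exactly the claim.

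For the lower bound I would argue by strong induction on $r$ that every honest validator that starts round $r$ does so no earlier than $\sround{v_f}{0}+T_r$. The base case $r=0$ is immediate from the definition of $v_f$. For the inductive step it suffices to bound the earliest honest validator to reach round $r$, say $v^\ast$, and examine how it got there using \cref{lem:move-to-higher-round-if-round-timer-expires-or-receive-new-round-for-later-round}. If $v^\ast$ reached round $r$ by expiry of its round-$(r-1)$ timer, then $\sround{v^\ast}{r}=\sround{v^\ast}{r-1}+{RoundTimerTimeout}(r-1)$, and the induction hypothesis applied to round $r-1$ closes the case. If instead $v^\ast$ reached round $r$ through a quorum of Round-Change messages or through a valid Proposal message for round $r$, I would use that such a set contains at least $\quorum{n_h}-f(n_h)\geq 1$ honest senders; any honest sender $u$ emits a Round-Change message for round $r$ only upon its own round-$(r-1)$ timer expiry, so $u$ itself reaches round $r$ via its timer at a time $\sround{u}{r}\leq\sround{v^\ast}{r}$. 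Minimality of $v^\ast$ forces $\sround{u}{r}=\sround{v^\ast}{r}$, reducing this case to the timer case already handled.

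For the upper bound I would show, by a separate induction on $r$ using \cref{lem:validator-moves-to-higher-round-by-time-r+RoundTimerTimeout}, that $v_f$ is in some round $\geq r$ by time $\sround{v_f}{0}+T_r$: whenever $v_f$ first reaches a round equal to exactly $r$ it leaves it within ${RoundTimerTimeout}(r)$, while if it has already jumped past $r$ the bound holds trivially. Because a validator's current round never decreases and $v_f$ does start round $r$ by hypothesis, being at a round $\geq r$ by time $\sround{v_f}{0}+T_r$ forces $\sround{v_f}{r}\leq \sround{v_f}{0}+T_r$. The main obstacle throughout is round-skipping: a validator can be catapulted several rounds forward by messages rather than advancing one round at a time, so neither induction may assume consecutive round starts. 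This is precisely what the argument resolves by the observation that every honest Round-Change message for round $r$ is emitted only on a round-$(r-1)$ timer expiry, which anchors every message-driven advance to an honest timer-driven one.
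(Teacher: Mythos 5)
Your argument is correct, and it rests on the same key mechanism as the paper's proof: every message-driven round advance (quorum of Round-Change messages, or a valid Proposal) must contain at least $\quorum{n_h}-f(n_h)\geq 1$ Round-Change message from an honest validator, and an honest validator emits a Round-Change for round $r$ only upon expiry of its round-$(r-1)$ timer, so every advance is anchored to an honest timer expiry of uniform duration. Where you differ is in the packaging. The paper proves the lemma directly by induction and contradiction, comparing $v_f$ and $v$ round by round; you instead prove a stronger quantitative sandwich around the cumulative timeout $T_r=\sum_{i=0}^{r-1}{RoundTimerTimeout}(i)$, namely $\sround{v}{r}\geq\sround{v_f}{0}+T_r$ for every honest $v$ and $\sround{v_f}{r}\leq\sround{v_f}{0}+T_r$ for $v_f$, and read the lemma off as a corollary. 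That sandwich is precisely the content of the paper's two subsequent lemmas ($\snround{v_f}{r}=\sround{v_f}{r}$ and the explicit formula $\snround{v}{r}=\sih{v}+\TimeoutForRoundZero\cdot(2^r-1)$), so you have effectively front-loaded those results and obtained this lemma for free. Your version is also somewhat more careful than the paper's on the round-skipping issue: by taking $v^\ast$ to be the \emph{earliest} honest validator to reach round $r$ and showing its anchoring honest sender $u$ must tie with it, you avoid the implicit assumption that validators pass through consecutive rounds, which the paper's inductive-contradiction argument glosses over. The one point worth stating explicitly in your write-up is that a validator's round-$(r-1)$ timer, once set at $\sround{u}{r-1}$, is never restarted (the paper records this as an invariant in \cref{lem:validator-moves-to-higher-round-by-time-r+RoundTimerTimeout}), since your timer-case computation $\sround{u}{r}=\sround{u}{r-1}+{RoundTimerTimeout}(r-1)$ depends on it.
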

\begin{proof}
	The proof is by induction on the round number.
	The Lemma is obviously verified for the base case ($r=0$) as we assume that $v$ starts instance $h$ before any other validator.
	
	For the inductive case, we show by contradiction that if Lemma holds for $r'$ then it also holds for $r = r'+1$.
	Assume that the Lemma holds for $r'$, but not for $r$. 
	Since 
	\begin{enumerate}[label=(\roman*)]
		\item $v_f$ started round $r'$ no later than when $v$ started round $r'$,
		\item the length of the round timer for round $r'$ is identical between $v_f$ and $v$, 
		\item validators move to a higher round either if the round timer for the current round expires, they receive \quorum{n_h} Round-Change messages, sent by distinct validators, for a round higher than the current one or if they receive a valid Proposal message with round number higher than the current one, 
	\end{enumerate}
	the assumption that  $v$ moves to $r$ before $v_f$ does implies that $v$ received either \quorum{n_h} Round-Change messages, sent by distinct validators, for round $r$ or  a valid Proposal message for round $r$ before $v_f$ moves to round $r$.
	This, in turn, implies that at least one honest validator different from $v$, say $v'$, sent a Round-Change message for round $r$, as at least one of the Round-Change messages received or included in the Proposal message has been sent by an honest validator. This implies that $v'$'s round timer for round $r'$ expired before the $v_f$'s round timer for round $r'$.
	This is in clear contradiction with conditions (i) and (ii) above.
\end{proof}

Let \snround{v}{r} denote the time at which round $r$ (and its related round timer) is started if $v$ has moved to a higher round in each round $ < r $ only as effect of the round timer expiry.

\begin{lemma}\label{lem:snround=sround-for-furst-validator}
	Let $v_f$ be the first honest validator of instance $h$ to start instance $h$.
	The following condition is always verified:
\begin{equation*}
	\snround{v_f}{r} = \sround{v_f}{r}
\end{equation*}
\end{lemma}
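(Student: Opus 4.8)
The plan is to prove the statement by induction on the round number $r$, showing that the first honest validator $v_f$ advances through rounds driven \emph{solely} by round-timer expiry, so that its actual round-start times $\sround{v_f}{\cdot}$ coincide with the timer-only times $\snround{v_f}{\cdot}$.

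For the base case $r=0$, the equality $\snround{v_f}{0} = \sround{v_f}{0}$ holds by definition, since both denote the instant at which $v_f$ starts instance $h$, equivalently round $0$.

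For the inductive step I would assume $\snround{v_f}{r} = \sround{v_f}{r}$ and establish $\snround{v_f}{r+1} = \sround{v_f}{r+1}$. By \cref{lem:move-to-higher-round-if-round-timer-expires-or-receive-new-round-for-later-round}, $v_f$ can leave round $r$ for a higher round only through one of three events: expiry of its round timer for round $r$, reception of \quorum{n_h} Round-Change messages from distinct validators for some round $\rho>r$, or reception of a valid Proposal message for some round $\rho>r$. The heart of the argument is to rule out the latter two \emph{before} the timer fires. Any set of \quorum{n_h} Round-Change messages for round $\rho$ from distinct validators contains at least one from an honest validator $v'$, since the number of Byzantine validators is at most $f(n_h)<\quorum{n_h}$; and an honest validator emits a Round-Change message for round $\rho$ only upon starting round $\rho$, i.e.\ at time $\sround{v'}{\rho}$. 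Likewise, a valid Proposal for round $\rho>0$ carries such a Round-Change-Certificate, so its assembly again requires at least one honest Round-Change message for round $\rho$. By \cref{lem:first-validator-to-start-the-h-instance-is-the-first-to-start-any-round-in-that-instance} we have $\sround{v_f}{\rho} \leq \sround{v'}{\rho}$, so this honest message is sent no earlier than $v_f$ itself starts round $\rho$; since a message cannot be received before it is sent, $v_f$ cannot receive either triggering artifact strictly before reaching round $\rho$ on its own. Hence $v_f$ is never forced to a higher round by received messages, and it advances out of round $r$ only when its round timer expires.

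It then remains to note that timer-driven progress always moves to the \emph{consecutive} round: the round-timer-expiry block invokes $StartNewRoundAndSendRoundChange(r_{h,v}+1,h,v)$, and since $v_f$ was not forced to skip ahead it is still in round $r$, so it moves from round $r$ to round $r+1$. Together with \cref{lem:validator-moves-to-higher-round-by-time-r+RoundTimerTimeout}, which guarantees this transition does occur, $v_f$ leaves round $r$ exactly at time $\sround{v_f}{r}+{RoundTimerTimeout}(r)$. Combining this with the inductive hypothesis and the definition of $\snround{v_f}{r+1}$ as $\snround{v_f}{r}+{RoundTimerTimeout}(r)$ yields $\sround{v_f}{r+1} = \snround{v_f}{r+1}$, closing the induction. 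The main obstacle is precisely the causality/timing argument that excludes forced round changes: it requires combining the ``first to start any round'' property of $v_f$ with the observation that the message that would force a change into round $\rho$ cannot exist before $v_f$ has itself entered round $\rho$. A minor care point is the degenerate case of zero-latency delivery, handled by noting that even then the honest sender enters round $\rho$ no earlier than $v_f$, so $v_f$ gains no temporal advantage from the message.
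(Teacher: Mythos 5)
Your proof is correct and follows essentially the same route as the paper's: induction on the round number, with the inductive step ruling out message-forced round changes via \cref{lem:move-to-higher-round-if-round-timer-expires-or-receive-new-round-for-later-round} and \cref{lem:first-validator-to-start-the-h-instance-is-the-first-to-start-any-round-in-that-instance}, and the transition time pinned down by \cref{lem:validator-moves-to-higher-round-by-time-r+RoundTimerTimeout}. The paper merely packages the inductive step as a two-case contradiction ($\snround{v_f}{r} > \sround{v_f}{r}$ versus $\snround{v_f}{r} < \sround{v_f}{r}$) rather than your direct elimination of the forced triggers, which is a cosmetic difference.
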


\begin{proof}
	The proof is by induction.
	The Lemma is obviously verified for the base case $r=0$.
	
	For the inductive case, we show by contradiction that if Lemma holds for $r'$ then it also holds for $r = r'+1$.\\
	\textbf{Case 1:} $\snround{v_f}{r} > \sround{v_f}{r}$. This and \cref{lem:move-to-higher-round-if-round-timer-expires-or-receive-new-round-for-later-round} imply that $v_f$ receives either \quorum{n_h} Round-Change messages, sent by distinct validators, for round $r$ or  a valid Proposal message for round $r$  before the expiry of the round timer for round $r'$.
	For this to happen, as argued in the proof of \Cref{lem:first-validator-to-start-the-h-instance-is-the-first-to-start-any-round-in-that-instance}, there must exist an honest validator different from $v_f$, say $v$, that sends a Round-Change message for round $r$, and therefore starts round $r$ before $v_f$ moves to round $r$.
	This is in contradiction with \cref{lem:first-validator-to-start-the-h-instance-is-the-first-to-start-any-round-in-that-instance}.\\
	\textbf{Case 2: }$\snround{v_f}{r} < \sround{v_f}{r}$. This implies that round $r$ is started after the expiry of round time for round $r'$.
	This is in contradiction with \Cref{lem:validator-moves-to-higher-round-by-time-r+RoundTimerTimeout}.
\end{proof}

\begin{lemma}\label{lem:snround-geq-sround-for-last-validator}
	Let $v_\ell$ be the last honest validator to start instance $h$ of the \ibfpmthree{}.
	The following relation is verified for any honest validator $v$ of instance $h$ and round $r$ such that both $v_\ell$ and $v$ start round $r$ at some point while in instance $h$:
\begin{equation*}
	\snround{v_\ell}{r} \geq \sround{v}{r}
\end{equation*}
\end{lemma}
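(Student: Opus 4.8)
The plan is to express both sides of the inequality in terms of the instance-start time of the relevant validator plus a common cumulative round-timer duration, and then exploit the fact that $v_\ell$ is, by definition, the validator that starts the instance last. Concretely, I would first establish the closed form $\snround{v}{r} = \sround{v}{0} + \sum_{k=0}^{r-1} {RoundTimerTimeout}(k)$ for every honest validator $v$ of instance $h$. This follows directly from the definition of \snround{v}{r}: under the non-forced scenario $v$ leaves each round $k<r$ solely by the expiry of its round timer, and the timer-expiry \textbf{upon} block advances the round by exactly one (it invokes $StartNewRoundAndSendRoundChange(r_{h,v}+1,h,v)$), so $v$ traverses rounds $0,1,\dots,r$ one at a time, each round $k$ lasting exactly ${RoundTimerTimeout}(k)$, starting from the actual instance-start time $\sround{v}{0}$.

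With this formula in hand, the argument splits into two inequalities that I would chain together. First, since $v_\ell$ is the last honest validator to start instance $h$ --- equivalently, the last to start round $0$ --- we have $\sround{v_\ell}{0} \geq \sround{v}{0}$ for every honest validator $v$; because the two non-forced starts share the identical cumulative timer sum, this immediately gives $\snround{v_\ell}{r} \geq \snround{v}{r}$. Second, I would prove the bound $\snround{v}{r} \geq \sround{v}{r}$, i.e. that being forced by Round-Change or Proposal messages can only make $v$ reach round $r$ sooner than the purely timer-driven schedule; this is essentially the $\geq$ direction already isolated in Case~2 of \cref{lem:snround=sround-for-furst-validator}, now applied to an arbitrary honest validator rather than to $v_f$. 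Chaining $\snround{v_\ell}{r} \geq \snround{v}{r} \geq \sround{v}{r}$ then yields the claim.

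I would prove $\snround{v}{r} \geq \sround{v}{r}$ by induction on $r$, the base case $r=0$ being an equality. For the inductive step I would let $r'$ be the largest round below $r$ that $v$ actually starts --- which exists because $v$ starts round $0$ --- and observe that, since nothing in the open interval $(r',r)$ is started by $v$ while $v$ does start $r$, the validator must move from $r'$ directly into $r$. \Cref{lem:validator-moves-to-higher-round-by-time-r+RoundTimerTimeout} then bounds $\sround{v}{r} \leq \sround{v}{r'} + {RoundTimerTimeout}(r')$, and combining this with the inductive hypothesis $\sround{v}{r'} \leq \sround{v}{0} + \sum_{k=0}^{r'-1}{RoundTimerTimeout}(k)$ and the non-negativity of the remaining timer terms gives $\sround{v}{r} \leq \sround{v}{0} + \sum_{k=0}^{r-1}{RoundTimerTimeout}(k) = \snround{v}{r}$.

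The main obstacle I anticipate is exactly this round-skipping phenomenon in the inductive step: a forced validator may jump over several intermediate rounds, so the naive ``one timer per round'' accounting cannot be applied step by step to every index between $0$ and $r$. The fix is to anchor the estimate at the largest actually-started round $r'<r$ and to use that the omitted timer terms ${RoundTimerTimeout}(r'+1),\dots,{RoundTimerTimeout}(r-1)$ are non-negative, so that the partial sum up to $r'$ is dominated by the full sum up to $r-1$; \cref{lem:validator-moves-to-higher-round-by-time-r+RoundTimerTimeout} then supplies the single per-round bound needed to close the induction. Everything else reduces to the definitional closed form for \snround{v}{r} and the defining maximality of $v_\ell$.
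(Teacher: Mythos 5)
Your proof is correct, and it takes a genuinely different route from the paper's. The paper inducts directly on the claimed inequality $\snround{v_\ell}{r} \geq \sround{v'}{r}$: assuming it fails at $r=r'+1$ while holding at $r'$, it derives that $v'$ started round $r$ later than $\sround{v'}{r'}+{RoundTimerTimeout}(r')$, contradicting \cref{lem:validator-moves-to-higher-round-by-time-r+RoundTimerTimeout}. That argument implicitly applies the inductive hypothesis to $v'$ at round $r'$, i.e.\ it tacitly assumes $v'$ actually starts round $r-1$, which need not hold since a forced validator can jump over intermediate rounds. You instead factor the claim into $\snround{v_\ell}{r} \geq \snround{v}{r}$ (immediate from the closed form for non-forced starts --- which the paper only proves in the \emph{following} lemma --- plus the maximality of $\sih{v_\ell}$, which only concerns round $0$) and $\snround{v}{r} \geq \sround{v}{r}$ (forced progression is never slower than timer-only progression), and you prove the second inequality by a strong induction anchored at the largest round $r'<r$ that $v$ actually starts, absorbing the skipped timer terms by non-negativity. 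What your decomposition buys is exactly the robustness to round skipping that the paper's one-step induction glosses over; what the paper's version buys is brevity, and its Case~2 of \cref{lem:snround=sround-for-furst-validator} is indeed the special case of your second inequality for $v_f$, as you note. Both arguments ultimately rest on the same ingredient, \cref{lem:validator-moves-to-higher-round-by-time-r+RoundTimerTimeout}.
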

\begin{proof}
	The proof is by induction.
	The Lemma is obviously verified for the base case $r=0$ as we assume that $v_\ell$ is the last honest validator to start instance $h$, and therefore round 0.
	
	For the inductive case, we show by contradiction that if Lemma holds for $r'$ then it also holds for $r = r'+1$.
	Assume that there exists a validator $v'$ for which $\snround{v_\ell}{r} < \sround{v'}{r}$. 
	Since the Lemma is verified for $r'$, this implies that $v'$ started round $r$ after $\sround{v'}{r'}+\TimeoutForRoundZero{\ensuremath{r'}}$ which contradicts \Cref{lem:validator-moves-to-higher-round-by-time-r+RoundTimerTimeout}.
\end{proof}

Let $ \sih{v} $ be the time at which validator $v$ starts the $h$-th instance of the \ibfpmthree{}.
\begin{lemma}
	The following equation holds for any validator $v$ of the $h$-th instance of the \ibfpmthree{} provided that validator $v$ starts round $r$ at some point while in instance $h$.
	\begin{equation*}
	\snround{v}{r} \equiv \sih{v} + \TimeoutForRoundZero \cdot \left( 2^r - 1\right)
	\end{equation*}
\end{lemma}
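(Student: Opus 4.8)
The plan is to prove the claim by a straightforward induction on the round number $r$, using only the definition of $\snround{v}{r}$ together with the round timer duration set at \cref{ln:calcualate-round-timer-duration}. First I would unfold the definition of $\snround{v}{r}$ into a recurrence. By definition $\snround{v}{r}$ is the time round $r$ is started when $v$ advances through every round below $r$ \emph{solely} because the round timer expired; hence round $0$ is started when the instance itself starts, giving $\snround{v}{0} = \sih{v}$, and round $r+1$ is started exactly when the timer for round $r$ — started at $\snround{v}{r}$ — expires, giving the recurrence $\snround{v}{r+1} = \snround{v}{r} + {RoundTimerTimeout}(r)$. The key fact, read off \cref{ln:calcualate-round-timer-duration}, is that ${RoundTimerTimeout}(r) = \TimeoutForRoundZero \cdot 2^r$ depends only on $r$ and not on $v$ or on which messages are received, which is what makes the closed form clean.

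With the recurrence in hand, the induction is immediate. For the base case $r=0$ both sides equal $\sih{v}$, since $\TimeoutForRoundZero \cdot (2^0 - 1) = 0$. For the inductive step, assuming $\snround{v}{r} = \sih{v} + \TimeoutForRoundZero \cdot (2^r - 1)$, I would substitute into the recurrence together with ${RoundTimerTimeout}(r) = \TimeoutForRoundZero \cdot 2^r$ to obtain
\begin{equation*}
	\snround{v}{r+1} = \sih{v} + \TimeoutForRoundZero \cdot (2^r - 1) + \TimeoutForRoundZero \cdot 2^r = \sih{v} + \TimeoutForRoundZero \cdot (2^{r+1} - 1),
\end{equation*}
using $(2^r - 1) + 2^r = 2^{r+1} - 1$. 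Equivalently, one can bypass the induction by summing the geometric series $\snround{v}{r} = \sih{v} + \sum_{k=0}^{r-1} {RoundTimerTimeout}(k) = \sih{v} + \TimeoutForRoundZero \sum_{k=0}^{r-1} 2^k = \sih{v} + \TimeoutForRoundZero \cdot (2^r-1)$.

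There is essentially no hard step here: the content lies entirely in correctly translating the informal definition of the non-forced round start into the recurrence $\snround{v}{r+1} = \snround{v}{r} + {RoundTimerTimeout}(r)$. The only point that deserves a sentence of care is the justification of this recurrence from the definition of $\snround{v}{r}$ — namely that ``moving to a higher round only as the effect of the round timer expiry'' means each lower-round timer runs for its full duration, so the start times accumulate exactly the sum of the durations ${RoundTimerTimeout}(0), \ldots, {RoundTimerTimeout}(r-1)$ with no early truncation caused by receiving Round-Change or Proposal messages.
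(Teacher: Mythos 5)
Your proof is correct and takes essentially the same route as the paper, which simply writes $\snround{v}{r} \equiv \sih{v} + \sum_{i=0}^{r-1}\TimeoutForRoundZero \cdot 2^i$ and evaluates the geometric sum; your induction is the unrolled version of that same computation, and you even note the direct summation as an alternative. Your explicit justification of the recurrence from the definition of the non-forced round start is more careful than what the paper provides, but it adds nothing that changes the argument.
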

\begin{proof}
	\begin{align*}
	\snround{v}{r} &\equiv \sih{v} + \sum_{i=0}^{r-1}\TimeoutForRoundZero \cdot 2^i \\
	&\equiv \sih{v} + \TimeoutForRoundZero \cdot \left( 2^r - 1\right)
	\end{align*}
	
	To be noted that $ \snround{v}{0}$ correctly corresponds to $\sih{v} $
\end{proof}

\newcommand{\minTimeAllHonestValidatorsInTheSameRound}[1]{\ensuremath{{minTimeAllHonestValidatorsAreInTheSameRound}(#1) }}

\begin{lemma}\label{lem:min-t-all-honest-validators-in-the-same-round}
	Let $v_f$ be the first honest validator of instance $h$ to start round $r$, $v_\ell$ be the last honest validator of instance $h$ to start round $r$. $\minTimeAllHonestValidatorsInTheSameRound{r}$, representing the length of the minimum time segment where all honest validators that started instance $h$ are in round $r$ at the same time, is expressed as follows:
	\begin{equation*}
	\minTimeAllHonestValidatorsInTheSameRound{r}\equiv \max( \sih{v_f} - \sih{v_\ell} + \TimeoutForRoundZero \cdot 2^r,0)
	\end{equation*}
\end{lemma}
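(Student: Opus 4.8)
The plan is to determine, under the natural progression captured by \snround{\cdot}{\cdot} (validators advancing through the rounds only via round timer expiry, which yields the guaranteed worst-case window), the intersection of the time intervals during which each honest validator that started instance $h$ occupies round $r$. First I would observe that such a validator $v$ is in round $r$ precisely on the interval $[\snround{v}{r}, \snround{v}{r+1})$, so the time during which all of them are simultaneously in round $r$ is
\begin{equation*}
	\bigcap_{v} [\snround{v}{r}, \snround{v}{r+1}) = \left[\, \max_{v} \snround{v}{r},\ \min_{v} \snround{v}{r+1} \,\right),
\end{equation*}
whose length, when non-empty, is $\min_{v}\snround{v}{r+1} - \max_{v}\snround{v}{r}$.

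Next I would identify the two endpoints using the closed form $\snround{v}{r} = \sih{v} + \TimeoutForRoundZero \cdot (2^r - 1)$ established just above, which shows that $\snround{v}{r}$ is increasing in the instance-start time $\sih{v}$. Hence $\max_{v}\snround{v}{r}$ is realised by the validator with the largest $\sih{v}$; by \cref{lem:snround-geq-sround-for-last-validator} this validator's natural round-$r$ start dominates the actual round-$r$ start of every honest validator, and under natural progression it is exactly $v_\ell$, the last to start round $r$, so the window opens at $\snround{v_\ell}{r}$. Symmetrically, $\min_{v}\snround{v}{r+1}$ is realised by the validator with the smallest $\sih{v}$, which by \cref{lem:first-validator-to-start-the-h-instance-is-the-first-to-start-any-round-in-that-instance,lem:snround=sround-for-furst-validator} is $v_f$, the first to start round $r$ and therefore the first to leave it for round $r+1$, so the window closes at $\snround{v_f}{r+1}$.

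Then I would substitute the closed form and simplify:
\begin{align*}
	\snround{v_f}{r+1} - \snround{v_\ell}{r}
	&= \bigl(\sih{v_f} + \TimeoutForRoundZero \cdot (2^{r+1}-1)\bigr) - \bigl(\sih{v_\ell} + \TimeoutForRoundZero \cdot (2^{r}-1)\bigr) \\
	&= \sih{v_f} - \sih{v_\ell} + \TimeoutForRoundZero \cdot 2^r .
\end{align*}
If this difference is negative the intersection is empty and there is no instant at which all honest validators share round $r$, so the length is $0$; clamping with $\max(\cdot,0)$ yields precisely the claimed expression for \minTimeAllHonestValidatorsInTheSameRound{r}.

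The step I expect to be the main obstacle is the bookkeeping that reconciles the two ways validators are ordered: the lemma names $v_f$ and $v_\ell$ by the order in which they start round $r$, whereas the closed form orders them by when they start instance $h$. The content of \cref{lem:first-validator-to-start-the-h-instance-is-the-first-to-start-any-round-in-that-instance,lem:snround=sround-for-furst-validator,lem:snround-geq-sround-for-last-validator} is exactly what guarantees that these two orderings place the same validators at the extremes, so the care lies in invoking them to pin down the endpoints of the intersection rather than in the arithmetic itself.
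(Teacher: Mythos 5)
Your proposal is correct and follows essentially the same route as the paper: the paper likewise reduces the quantity to $\max(\snround{v_f}{r+1}-\snround{v_\ell}{r},0)$ by appealing to \cref{lem:first-validator-to-start-the-h-instance-is-the-first-to-start-any-round-in-that-instance,lem:snround=sround-for-furst-validator,lem:snround-geq-sround-for-last-validator} and then substitutes the closed form for $\snround{v}{r}$ to obtain $\sih{v_f}-\sih{v_\ell}+\TimeoutForRoundZero\cdot 2^r$. Your interval-intersection framing and the explicit reconciliation of the two orderings of $v_f$ and $v_\ell$ just spell out what the paper dismisses as obvious.
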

\begin{proof}

	It is obvious from \Cref{lem:first-validator-to-start-the-h-instance-is-the-first-to-start-any-round-in-that-instance,lem:snround=sround-for-furst-validator,lem:snround-geq-sround-for-last-validator} that $	\minTimeAllHonestValidatorsInTheSameRound{r}\equiv \max(\snround{v_f}{r+1}-\snround{v_\ell}{r},0)$.
	
	The following series of equivalences proves the Lemma:
	\begin{align*}
	\snround{v_f}{r+1} &- \snround{v_\ell}{r} \\
	& \equiv \sih{v_f} \cdot \TimeoutForRoundZero \cdot 2^{r+1} - (\sih{v_\ell} + \TimeoutForRoundZero \cdot 2^r)\\
	& \equiv \sih{v_f} - \sih{v_\ell} + \TimeoutForRoundZero \cdot (2^{r+1}-2^r)\\
	&\equiv \sih{v_f} - \sih{v_\ell} + \TimeoutForRoundZero \cdot 2^r
	\end{align*}	
\end{proof}
To be noted that $ \sih{v_\ell}\geq \sih{v_f} $.

\begin{lemma}\label{lem:q+f-leq-n}
	The following inequality is verified for any $n \geq 0$:
	\begin{equation*}
	\quorum{n} + f(n)\leq n
	\end{equation*}
\end{lemma}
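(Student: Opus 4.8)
The plan is to reduce the claim to an elementary fact about the floor and ceiling functions. By the definitions given earlier, $\quorum{n} \equiv \lceil 2n/3 \rceil$ and $f(n) \equiv \lfloor (n-1)/3 \rfloor$, so the inequality to prove is $\lceil 2n/3 \rceil + \lfloor (n-1)/3 \rfloor \leq n$ for every integer $n \geq 0$.

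The cleanest route I would take is to rewrite the ceiling term. Since $n$ is an integer, $\lceil 2n/3 \rceil = \lceil n - n/3 \rceil = n - \lfloor n/3 \rfloor$. Substituting, the left-hand side becomes $n - \lfloor n/3 \rfloor + \lfloor (n-1)/3 \rfloor$, and the claim is therefore equivalent to $\lfloor (n-1)/3 \rfloor \leq \lfloor n/3 \rfloor$. This follows immediately from the monotonicity of the floor function applied to $n-1 \leq n$, which finishes the proof. This derivation also exposes the slack in the bound: equality $\quorum{n} + f(n) = n$ holds precisely when $\lfloor n/3 \rfloor = \lfloor (n-1)/3 \rfloor$, i.e. when $n \not\equiv 0 \pmod{3}$, while for $n \equiv 0 \pmod{3}$ the sum equals $n-1$.

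If a more self-contained argument is preferred, I would instead run a direct case analysis, writing $n = 3q + s$ with $s \in \{0,1,2\}$ and evaluating both terms in each residue class; one checks that $\quorum{n}+f(n)$ equals $3q-1$, $3q+1$, $3q+2$ respectively, each of which is $\leq n$. Either way there is essentially no obstacle here: the statement is a routine arithmetic fact whose only role is to support the later quorum-intersection counting. The only points demanding care are the rounding on non-multiples of $3$ and the boundary value $n=0$, where $f(0) = \lfloor -1/3 \rfloor = -1$ and the inequality reads $0 + (-1) \leq 0$; the case analysis makes both of these explicit.
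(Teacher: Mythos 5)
Your proof is correct, and it is genuinely more informative than what the paper provides: the paper does not prove this lemma at all, it simply defers to Lemma~24 of the earlier IBFT analysis paper \cite{IBFT1-Analysis}. Your reduction via the identity $\lceil 2n/3\rceil = n - \lfloor n/3\rfloor$ is clean and the resulting equivalence with $\lfloor (n-1)/3\rfloor \leq \lfloor n/3\rfloor$ is immediate; the residue-class check ($3q-1$, $3q+1$, $3q+2$ for $n\equiv 0,1,2 \pmod 3$) is also verified correctly, as is the boundary case $n=0$ where $f(0)=-1$. The added observation about when equality holds (exactly when $n\not\equiv 0\pmod 3$) is a nice bonus that the cited reference does not surface, and it makes explicit how tight the quorum-intersection counting in \cref{lem:intersection-of-q-and-} and \cref{lem:intersection-q-f-with-q} actually is. A self-contained argument like yours would arguably improve the paper, since readers currently have to chase an external reference for what is, as you say, a routine arithmetic fact.
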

\begin{proof}
	See Lemma 24 of \cite{IBFT1-Analysis}.
\end{proof}

\begin{lemma}\label{lem:conditions-for-termination}
	Let $\Delta$ be the maximum message delay after GST.
	Let $r$ be a round such that:
	\begin{itemize}
		\item at least \quorum{n_h} honest validators start round $r$ after GST;
		\item the proposer of round  $r$ is honest;
		\item $ 	\sih{v_f} - \sih{v_\ell} + \TimeoutForRoundZero \cdot 2^r \geq 4 \cdot \Delta $\\
		where $v_f$ is the first honest validator to start instance $h$ and $v_\ell$ is the last honest validator of a group of  \quorum{n_h} honest validators to start instance $h$.
	\end{itemize}
	\ibfpmthree{} is guaranteed to produce a valid finalised block at round $r$.
\end{lemma}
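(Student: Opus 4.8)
The plan is to pin down a time interval of length at least $4\Delta$ during which all $\quorum{n_h}$ honest validators of the hypothesis are simultaneously in round $r$, and then to thread the four message phases of the \ibfpmthree{} (Round-Change, Proposal, Prepare, Commit) through this interval, charging one message delay $\Delta$ to each. I would reuse \cref{lem:first-validator-to-start-the-h-instance-is-the-first-to-start-any-round-in-that-instance,lem:snround=sround-for-furst-validator,lem:snround-geq-sround-for-last-validator} exactly as in \cref{lem:min-t-all-honest-validators-in-the-same-round}, and put $t_a \equiv \snround{v_\ell}{r}$ and $t_b \equiv \snround{v_f}{r+1}$. With this choice, every honest validator has entered round $r$ by $t_a$ (by \cref{lem:snround-geq-sround-for-last-validator}), while $v_f$, being the first to start every round, is the first to leave round $r$ and does so only at $t_b$ (by \cref{lem:first-validator-to-start-the-h-instance-is-the-first-to-start-any-round-in-that-instance,lem:snround=sround-for-furst-validator}); hence all honest validators are in round $r$ throughout $[t_a,t_b]$, with $t_b - t_a = \sih{v_f} - \sih{v_\ell} + \TimeoutForRoundZero \cdot 2^r \geq 4\Delta$ by the third hypothesis. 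Because at least $\quorum{n_h}$ honest validators start round $r$ after GST we also get $t_a \geq \mathrm{GST}$, so the $\Delta$ bound governs every transmission in $[t_a,t_b]$.

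The core of the proof is the timeline inside $[t_a,t_b]$. First I would show that the honest proposer broadcasts a valid Proposal for round $r$ no later than $t_a + \Delta$. For this I split on how round $r$ was reached (the only options being enumerated by \cref{lem:move-to-higher-round-if-round-timer-expires-or-receive-new-round-for-later-round}): if some honest validator reached round $r$ by accepting a Proposal, that Proposal is already in flight; otherwise each honest validator reached round $r$ either through round-timer expiry, thereby emitting a round-$r$ Round-Change by $t_a$, or through the reception of $\quorum{n_h}$ round-$r$ Round-Change messages, which then exist and are broadcast by $t_a$, so that in either latter case the honest proposer has collected $\quorum{n_h}$ Round-Change messages, hence a valid Round-Change-Certificate, by $t_a + \Delta$ and proposes. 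The remaining phases then cascade: by $t_a + 2\Delta$ every honest validator has received the Proposal and, being in round $r$, multicast a Prepare; by $t_a + 3\Delta$ every honest validator has received at least $\quorum{n_h}-1$ matching Prepare messages (supplied by the non-proposing honest validators, with a validator's self-multicast counting towards its own total) and multicast a Commit; and by $t_a + 4\Delta \leq t_b$ at least one honest validator, still in round $r$, has collected $\quorum{n_h}$ valid Commit messages and so assembles a finalised block carrying $\quorum{n_h}$ distinct commit seals, which satisfies the predicate at \cref{ln:isValidFinalisedBlock}. Since the proposer is honest, the block it proposes is accepted by every honest validator, so each of these steps genuinely fires.

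The step I expect to be the main obstacle is precisely the first phase when $r>0$: establishing that the honest proposer obtains a valid Round-Change-Certificate \emph{in time}. The difficulty is that validators entering round $r$ by receiving messages do not themselves emit a Round-Change, so one cannot simply assert that all $\quorum{n_h}$ honest validators have broadcast a round-$r$ Round-Change by $t_a$; the case split above is the clean way to route every scenario either to an already-broadcast Proposal or to the existence and timely delivery of $\quorum{n_h}$ Round-Change messages. A secondary point demanding care is the window boundary: one must verify that the finalising event at $t_a + 4\Delta$ lands at or before $t_b$ — which is exactly what the $\geq 4\Delta$ hypothesis provides — and that at least one honest validator is still in round $r$ at that instant, which holds because no honest validator leaves round $r$ before $t_b$.
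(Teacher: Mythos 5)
Your proof follows essentially the same route as the paper's: establish a window of length at least $4\Delta$ during which all $\quorum{n_h}$ honest validators are simultaneously in round $r$, then cascade the Round-Change/Proposal/Prepare/Commit phases through it at one $\Delta$ per phase. You are in fact somewhat more careful than the paper on the first phase --- the paper simply asserts that all honest validators have already sent a round-$r$ Round-Change message by the time the last one starts round $r$, whereas your case split on how round $r$ was entered addresses the fact that validators reaching round $r$ via a received Proposal or a received Round-Change quorum do not themselves emit a Round-Change message.
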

\begin{proof}
	$\sih{v_f} - \sih{v_\ell} + \TimeoutForRoundZero \cdot 2^r \geq 4 \cdot \Delta$ implies that all of the \quorum{n_h} honest validators are in round $r$ for at least $4\cdot \Delta$ time.
	Let $v_{\ell,r}$ be the latest honest validator, of a group of \quorum{n_h} honest validators, to start round $r$.
	The following sequence of events that leads to the creation of a valid finalised block completes in no more than $4\cdot\Delta$ time.
	\begin{enumerate}
		\item $v_{\ell,r}$ sends a Round-Change message for round $r$ at time $t$. 
		All other validators have already sent a Round-Change message for round $r$.
		\item By time $t+\Delta$ the proposer for round $r$, $p_r$, which is honest by assumption, receives at least \quorum{n_h} Round-Change messages for round $r$.
		$p_r$ then sends a Proposal message for round $r$.
		\item By time $t+2\cdot \Delta$ all honest validators receive the Proposal message and therefore send a Prepare message for the block included in the Proposal message.
		\item By time $t+3\cdot \Delta$ all honest validators receive the $ \quorum{n_h}-1 $ Prepare messages sent by all honest validators (except for the proposer) and therefore send a Commit message matching the Prepare and Proposal messages received.
		\item By time $t+4\cdot \Delta$ all honest validators receive the \quorum{n_h} Commit messages for round $r$ which are sufficient to create a valid finalised block.
	\end{enumerate}
\end{proof}

\begin{lemma}\label{lem:conditions-for-termination-are-eventually-met}
	For any instance $h$ of the \ibfpmthree{} there eventually exists a round $r$ that meets the conditions listed in \Cref{lem:conditions-for-termination}:
	\begin{itemize}
		\item at least \quorum{n_h} honest validators start round $r$ sometime after GST;
		\item the proposer of round  $r$ is honest;
		\item $ 	\sih{v_f} - \sih{v_\ell} + \TimeoutForRoundZero \cdot 2^r \geq 4 \cdot \Delta $\\
		where $v_f$ is the first honest validator to start instance $h$ and $v_\ell$ is the last honest validator of a group of  \quorum{n_h} honest validators to start instance $h$.
	\end{itemize}
\end{lemma}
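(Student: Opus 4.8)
The plan is to show that the three conditions of \cref{lem:conditions-for-termination} can be met simultaneously by taking $r$ large enough, treating the first and third conditions as ``holds for all sufficiently large $r$'' statements and the honest-proposer condition as a ``recurs with bounded gaps'' statement. First I would dispose of a trivial case: if instance $h$ ever produces a valid finalised block then there is nothing left to prove, so I assume it does not; under this assumption \cref{lem:validator-moves-to-higher-round-by-time-r+RoundTimerTimeout} guarantees that every honest validator keeps moving to strictly higher rounds within bounded time, so the rounds entered by honest validators grow without bound. I would also record, via \cref{lem:q+f-leq-n}, that $n_h - f(n_h) \geq \quorum{n_h}$, so at least $\quorum{n_h}$ of the instance-$h$ validators are honest; this guarantees that a group of $\quorum{n_h}$ honest validators as required by the conditions always exists, and I would fix such a group consisting of the $\quorum{n_h}$ honest validators with the earliest instance-start times, thereby minimising $\sih{v_\ell}$.

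Next I would handle the conditions one at a time. For the timing condition, $\sih{v_f} - \sih{v_\ell}$ is a fixed, non-positive constant while $\TimeoutForRoundZero \cdot 2^r$ increases without bound, so there is a threshold $R_1$ with $\sih{v_f} - \sih{v_\ell} + \TimeoutForRoundZero \cdot 2^r \geq 4 \cdot \Delta$ for every $r \geq R_1$. For the after-GST condition I would argue that no honest validator can start round $r$ earlier than $\snround{v_f}{r} = \sih{v_f} + \TimeoutForRoundZero \cdot (2^r - 1)$: by \cref{lem:first-validator-to-start-the-h-instance-is-the-first-to-start-any-round-in-that-instance} the first validator $v_f$ is the earliest to start any round, and by \cref{lem:snround=sround-for-furst-validator} it reaches each round exactly at its non-forced start time. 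Since this quantity diverges, there is a threshold $R_2$ such that for all $r \geq R_2$ every honest validator that starts round $r$ does so after GST. Finally, for the honest-proposer condition I would invoke the proposer-selection property, which selects every validator at least once in any $n_h$ consecutive rounds: since at least one validator is honest, at least one round in every block of $n_h$ consecutive rounds has an honest proposer, so honest-proposer rounds recur with gaps of at most $n_h$.

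The step I expect to be the main obstacle is ensuring that at least $\quorum{n_h}$ honest validators actually start the chosen round $r$, because an honest validator may skip rounds when pushed forward by a valid Proposal message or by $\quorum{n_h}$ Round-Change messages for a higher round (\cref{lem:move-to-higher-round-if-round-timer-expires-or-receive-new-round-for-later-round}), so it is not a priori clear that enough honest validators land on the same round. I would resolve this using \cref{lem:min-t-all-honest-validators-in-the-same-round}: its value $\minTimeAllHonestValidatorsInTheSameRound{r} = \max(\sih{v_f} - \sih{v_\ell} + \TimeoutForRoundZero \cdot 2^r, 0)$ is strictly positive for all large $r$, and a positive common overlap forces every honest validator that started instance $h$ to pass through round $r$ simultaneously, so in particular at least $\quorum{n_h}$ of them start it. With each of the three conditions holding either for all sufficiently large $r$ or for rounds recurring with gap at most $n_h$, I would conclude by choosing $r$ to be any honest-proposer round exceeding $\max(R_1, R_2)$ and the overlap threshold; such a round exists because honest-proposer rounds are unbounded, and by construction it satisfies every condition of \cref{lem:conditions-for-termination}.
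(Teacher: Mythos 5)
Your proposal is correct and follows essentially the same route as the paper: show the synchrony inequality holds for all sufficiently large $r$, that round starts eventually occur after GST (via the divergence of $\snround{v_f}{r}$), that honest-proposer rounds recur within every $n_h$ consecutive rounds, and pick a round beyond all thresholds. You are in fact more explicit than the paper on the one delicate point — that validators might skip rounds, so $\quorum{n_h}$ honest validators must be shown to actually start the chosen round — which the paper handles only by gesturing at its items (iii)--(v), while you correctly resolve it via \cref{lem:min-t-all-honest-validators-in-the-same-round}; both proofs likewise set aside the case where a finalised block is produced early.
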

\begin{proof}
	Let $r_{afterGST}$ be the first round of the $h$-th instance of the \ibfpmthree{} that \quorum{n_h} honest validator start after GST.\\
	Let $synchronousRound(r)$ be the condition $\sih{v_f} - \sih{v_\ell} + \TimeoutForRoundZero \cdot 2^r \geq 4 \cdot \Delta$ with $r \geq r_{afterGST}$.
	
	Since
	\begin{enumerate}[label=(\roman*)]
		\item GST eventually occurs;
		\item $\Delta$ is a constant;
		\item in no longer than ${RoundTimerTimeout}(r_{h,v})$ time from starting the current round $r_{h,v}$, any honest validator $v$ moves to a round number higher than the current one except if it can create a valid finalised block in the current round\label{itm:val-move-up-always};
		\item  if honest validators move to a round, then the new round is higher than the current round\label{itm:val-move-only-up};
		\item  \Cref{lem:q+f-leq-n} guarantees that there exist at least \quorum{n_h} honest validators\label{itm:there-are-q-val}; 
		\item $2^x$ is a strictly monotonically increasing function.
	\end{enumerate} 
	the condition $synchronousRound(r)$  will eventually be met for a round $r$.
	Let $r_{firstSynch}$ be the smaller round number of the $h$-th instance of the \ibfpmthree{} for which condition \linebreak $synchronousRound(r_{firstSynch})$ is true.
	Since the propose function guarantees to select all honest validators of the current \ibfpmthree{} instance for any sequence of \nhmacro[] rounds, there exists a round $r \geq r_{firstSynch}$ where the proposer is honest.
	Let $r_{firstHonestAfterSynch}$ be the first round where the proposer is honest such that $r_{firstHonestAfterSynch} \geq r_{firstSynch}$.
	Since $2^x$ is a strictly monotonically increasing function, if the condition $synchronousRound(r_{firstSynch})$ is true, then the condition $synchronousRound(r_{firstHonestAfterSynch})$ is true as well.
	Also, statements \ref*{itm:val-move-up-always}, \ref*{itm:val-move-only-up} and \ref*{itm:there-are-q-val} above imply that at least \quorum{n_h} honest validators start round $r_{firstHonestAfterSynch}  $ at some point while instance $h$ except if a finalised block for height $h$ is produced for a round lower than $ r_{firstHonestAfterSynch} $.
	This proves that round $ r_{firstHonestAfterSynch} $ meets the conditions listed in \Cref{lem:conditions-for-termination}.
\end{proof}

\begin{theorem}\label{lem:proof-of-weak-liveness}
	The \ibfpmthree{} guarantees optimal Weak-Liveness.
\end{theorem}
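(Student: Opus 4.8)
The plan is to obtain this theorem as an immediate consequence of the two preceding lemmas, which between them have already carried out all of the substantive work. Under the standing assumption of this section---that for any instance $h$ of the \ibfpmthree{} no more than $f(n_h)$ validators are Byzantine---I would show that a valid finalised block for height $h$ is eventually produced for every instance $h$, which is exactly $f(n_h)$-Byzantine-fault-tolerant Weak-Liveness; optimality of the threshold then follows separately from the known lower bound under eventual synchrony.

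Concretely, I would proceed in two steps. First, fix an arbitrary instance $h$ and invoke \cref{lem:conditions-for-termination-are-eventually-met} to conclude that there eventually exists a round $r$ satisfying the three conditions of \cref{lem:conditions-for-termination}: at least $\quorum{n_h}$ honest validators start round $r$ after GST, the proposer of round $r$ is honest, and the timing bound $\sih{v_f} - \sih{v_\ell} + \TimeoutForRoundZero \cdot 2^r \geq 4 \cdot \Delta$ holds. Second, apply \cref{lem:conditions-for-termination} to that very round $r$, which guarantees that a valid finalised block for height $h$ is produced at round $r$. Chaining the two, a valid finalised block for height $h$ is eventually produced for every instance $h$, which is precisely the Weak-Liveness guarantee at threshold $f(n_h)$.

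The remaining point is optimality of the threshold. Here I would appeal to the lower bound of \citewithauthor{Dwork:1988:CPP:42282.42283} quoted earlier in this section, which states that under eventual synchrony deterministic consensus among $n_h$ participating nodes is possible only if at most $f(n_h)$ of them are Byzantine; hence no Weak-Liveness threshold larger than $f(n_h)$ is achievable, so the threshold established above is maximal and the protocol meets \cref{def:optimal-byz-fault-tol-weak-liveness-for-ibfpmthree}. I expect no genuine obstacle in this final step: the difficulty has been front-loaded into \cref{lem:conditions-for-termination-are-eventually-met} (that a synchronous round with an honest proposer must eventually arise, using the fair proposer selection and the monotone growth of the round timer) and into \cref{lem:conditions-for-termination} (the $4\Delta$ message-exchange argument). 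The only care needed here is to make explicit that the universal quantification over instances $h$ in the two lemmas matches the per-instance threshold $f(n_h)$ appearing in the definition of optimal Weak-Liveness, so that the conclusion is stated uniformly across all heights of the blockchain.
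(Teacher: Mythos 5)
Your proposal is correct and follows exactly the route the paper takes: the paper's proof is a one-line appeal to \cref{lem:conditions-for-termination,lem:conditions-for-termination-are-eventually-met} together with the definition of optimal Weak-Liveness, which is precisely the two-step chaining you describe. Your additional remarks on threshold optimality and the per-instance quantification merely make explicit what the paper leaves implicit in the word ``obvious.''
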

\begin{proof}
	Obvious from the definition of optimal Weak-Liveness and \Cref{lem:conditions-for-termination,lem:conditions-for-termination-are-eventually-met}.
\end{proof}

\subsection{Robustness Proof}

\begin{lemma}\label{lem:persistence-proof}
	IBFT 2.0 achieves optimal Persistence.
\end{lemma}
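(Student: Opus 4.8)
The plan is to obtain Persistence (\cref{def:robustnes}) as a corollary of the two facts already in hand: optimal Safety of the \ibfpmthree{} (\cref{lem:no-two-differnt-block-can-be-produced}) and optimal Weak-Liveness (\cref{lem:proof-of-weak-liveness}). I would keep the standing hypothesis that for every instance $h$ at most $f(n_h)$ validators are Byzantine, since this is precisely the regime in which both cited results hold, and then prove the two clauses of Persistence separately.

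For clause (i), uniqueness, recall that the position of a transaction is the pair (block height, offset inside the block); hence it is enough to show that for every height $h$ at most one Ethereum block can ever sit at height $h$ in the local chain of any honest node. I would argue this by induction on $h$, matching the inductive assumption underlying the Safety analysis. The base case is the genesis block at height $0$, which is identical for all nodes by assumption. For the step, assuming all honest nodes that reached height $h-1$ agree on the chain up to height $h-1$, the validator set and count $\nhmacro[]$ for instance $h$ coincide across these nodes because they are deterministic functions of ${chain}_v[0:h-1]$; \cref{lem:no-two-differnt-block-can-be-produced} then forbids two valid finalised blocks with different Ethereum blocks at height $h$. Since an honest node writes ${chain}_v[h]$ only after ${isValidFinalisedBlock}$ succeeds, it necessarily stores this unique block, which gives clause (i) for every position at height $h$.

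For clause (ii), eventual inclusion, suppose an honest node adds the unique finalised block \fb{} at height $h$; I would show every honest node $v'$ eventually stores \fb{} at height $h$ as well. Weak-Liveness (\cref{lem:proof-of-weak-liveness}) guarantees a valid finalised block for every instance, so honest validators' heights grow without bound and they keep emitting Proposal, Prepare, Commit and Round-Change messages carrying strictly increasing instance heights over the gossip overlay. A lagging honest node $v'$ at height $h_{v'}\le h$ therefore eventually receives such a message with height $h_m>h_{v'}$ from a peer, which fires the \textbf{upon} block at \cref{ln:request-blocks-when-message-with-higher-height-is-received} and sends a \tfakensmessage{GET-BLOCKS}{h_{v'},h_m} request; the peer answers with the missing finalised blocks. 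After GST messages are delivered within $\Delta$ and are never lost, so each request-response round succeeds, each returned block passes ${isValidFinalisedBlock}$ (it is valid by construction), and $v'$ climbs from $h_{v'}$ up through $h$, storing \fb{} at height $h$; by clause (i) this is the very block in question.

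The step I expect to be the main obstacle is making the propagation argument of clause (ii) watertight. Two points need care: first, that the recovery is actually triggered at $v'$ despite the once-per-height throttling encoded by $\FuncSty{expectedHeight}_v[\cdot]$, i.e. that some honest peer keeps producing height-increasing messages reaching $v'$ so that the guard $\FuncSty{expectedHeight}_v[\cdot]<h_m$ does not permanently suppress requests before $v'$ catches up; and second, that the peer answering a Get-Blocks request for a height range indeed holds and forwards the finalised blocks in that range. Reconciling the unbounded growth supplied by Weak-Liveness with this throttling, and appealing to connectivity of the gossip network after GST, is where the real work lies; by contrast clause (i) is an essentially mechanical induction on top of \cref{lem:no-two-differnt-block-can-be-produced}.
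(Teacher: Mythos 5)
Your strategy is sound but takes a genuinely different route from the paper, which proves this lemma by citation rather than by direct argument: it observes that IBFT 2.0 implements the modification IBFT-M1 of \cite{IBFT1-Analysis} (which is exactly the Get-Blocks synchronisation mechanism of the \textbf{upon} block at \cref{ln:request-blocks-when-message-with-higher-height-is-received}), invokes Theorem~1 of that work --- stating that IBFT-M1 together with optimal Byzantine-fault-tolerant Safety of the block finalisation protocol suffices for optimal Persistence --- and discharges the Safety hypothesis via \cref{lem:no-two-differnt-block-can-be-produced}. Consequently, the entire content of your clause (ii) --- the propagation argument, the interaction with the $\FuncSty{expectedHeight}$ throttling, and the behaviour of the gossip overlay after GST --- is precisely what the paper offloads to the cited theorem, and it is telling that this is exactly the part you flag as the main obstacle and leave unfinished. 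Your clause (i) is a correct unfolding of what \cref{lem:no-two-differnt-block-can-be-produced} buys, and your idea of using Weak-Liveness (\cref{lem:proof-of-weak-liveness}) to defeat the once-per-height throttling (heights grow without bound, so fresh height-increasing messages keep arriving after GST) is a plausible way to close a gap the paper never has to open; note, however, that the paper's proof does not depend on Weak-Liveness at all, so completing your route would establish the lemma under a different set of dependencies. As submitted, yours is the right decomposition with an honestly acknowledged but real hole in clause (ii); the intended shortcut is simply to cite Theorem~1 of \cite{IBFT1-Analysis}.
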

\begin{proof}
	The following considerations prove the Lemma:
	\begin{itemize}
		\item IBFT 2.0 implements the modification IBFT-M1 described in Section 5.1 of \cite{IBFT1-Analysis};
		\item Theorem 1 of \cite{IBFT1-Analysis} proves that modification IBFT-M1 and the guarantee that the \ibfpmthree{} provides optimal Byzantine-fault-tolerant Safety are sufficient conditions to ensure the the IBFT 2.0 protocol guarantees optimal Byzantine-fault-tolerant Persistence;
		\item \cref{lem:no-two-differnt-block-can-be-produced} of this paper proves that \ibfpmthree{} guarantees optimal Byzantine-fault-tolerant Safety.
	\end{itemize}
\end{proof}

The Persistence and Weak-Liveness properties have been proved by relying exclusively on the eventually synchronous network model assumption.
To prove the Liveness property we introduce the following assumption which is a modified version of the Fair Scheduler assumption used by \cite{Bracha:1985:Asynchronous-Consensus-and-Broadcast-Protocols}.
We show in \cref{sec:remove-fairness-network-assumption} how the IBFT 2.0 protocol can be modified to remove the need for this assumption.
\newcommand{\fairnetworkbehaviourassumption}[1]{Fair Network Behaviour Assumption}
\begin{assumption}[\fairnetworkbehaviourassumption{}]
	The \fairnetworkbehaviourassumption{} states that the probability that a message sent by a validator is received by another validator within time $\frac{\TimeoutForRoundZero}{3}$ is higher than 0.

\end{assumption}
\begin{proof}[Justification]
	The following considerations clarify why this assumption is indeed realistic:
	\begin{itemize}
		\item network latency is inherently probabilistic;
		\item it is expected that \TimeoutForRoundZero is set to a value at least 3 times higher that the known lower measured message latency.
	\end{itemize}
\end{proof} 

\begin{lemma}\label{lem:eventually-there-is-a-sequence-of-n-half-blocks-with-no-change-to-the-validator-set}
	Let $\fb_h$ be the height of the finalised block \fb{}.
	For any finalised block \fb{} there eventually exists a finalised block $ \fb' $ with $ {\fb'}_h \geq \fb_h $ such that there is no change to the validator set in the next $ \left \lfloor \frac{n_{{\fb'}_h}}{2} \right \rfloor $ blocks. 
\end{lemma}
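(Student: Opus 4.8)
The plan is to combine the Weak-Liveness guarantee with a counting bound on how densely validator-set changes can occur. First I would apply \cref{lem:proof-of-weak-liveness} (optimal Weak-Liveness) inductively to the successive instances of the \ibfpmthree{}, together with \cref{lem:persistence-proof}: since instance~$1$ eventually finalises a block, instance~$2$ then starts and finalises, and so on, the blockchain grows without bound, and Persistence guarantees that honest nodes agree on their chains so that $n_h$ is well defined and identical across honest validators for every $h$. Hence it suffices to locate, among the infinitely many finalised blocks of height $\geq \fb_h$, one block $\fb'$ that is immediately followed by a run of $\lfloor n_{{\fb'}_h}/2 \rfloor$ consecutive heights carrying no validator-set change.

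Next I would split on whether the validator set changes finitely or infinitely often after height $\fb_h$. If it changes only finitely often, then after the last change the set is fixed at some size $n^{*}$ and every subsequent block carries no change; taking $\fb'$ to be any finalised block after the last change (one exists by Weak-Liveness) immediately yields arbitrarily many --- in particular $\lfloor n^{*}/2 \rfloor$ --- change-free heights, and we are done. The substantive case is that changes occur infinitely often.

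For the infinite case I would bound the density of changes using the validator-voting rule described in \cref{sec:protocol-description}: adding or removing a validator requires strictly more than half of the current validators to cast a consistent vote, each Ethereum block carries exactly one vote cast by its proposer, and once a change for a target takes effect the votes targeting that validator are discarded. Thus each change consumes at least $\lfloor n/2 \rfloor + 1$ distinct-validator votes (one per block) accumulated since those votes were last reset, and each block's single vote is consumed by at most one subsequent change. Counting over a window of $W$ consecutive blocks and writing $n_{\min}$ for the smallest validator-set size occurring in it, the number $K$ of changes in the window satisfies $K\,(\lfloor n_{\min}/2\rfloor+1) \leq W + n_{\min}$, where the additive $n_{\min}$ accounts for the at most one pending vote per current validator carried into the window. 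A pigeonhole argument on the $K+1$ change-free gaps partitioning the remaining $W-K$ heights then forces a gap of length at least $(W-K)/(K+1)$, a quantity that tends to $\lfloor n_{\min}/2 \rfloor$ as $W\to\infty$; choosing $W$ large enough produces a run of length $\geq \lfloor n/2 \rfloor$ for the relevant $n$, and the finalised block $\fb'$ immediately preceding that run is the required block.

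The main obstacle is precisely the carry-over of votes that target validators other than the one currently being changed: because the rule discards only the votes targeting the changed validator, consecutive changes need not individually be $\lfloor n/2\rfloor+1$ blocks apart, so a naive ``two consecutive changes are far apart'' argument fails and the averaging bound above is needed instead. The two delicate points in making this rigorous are (i) justifying that at most one pending vote per validator is carried into the window --- so the pre-window contribution is $O(n)$ rather than unbounded, as holds for the Clique-style voting this mechanism is based on --- and (ii) tracking how $n$ itself varies across the window, since each change alters $n$ by $\pm 1$, so that the extracted gap is measured against the correct $n_{{\fb'}_h}$; here the strict ``more than half'' inequality supplies exactly the slack needed to convert the asymptotic $\lfloor n_{\min}/2\rfloor$ into the required $\lfloor n_{{\fb'}_h}/2\rfloor$.
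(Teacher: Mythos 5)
Your proposal rests on exactly the two invariants the paper itself invokes --- one vote per block, and a strict majority of consistent votes needed for a change to take effect --- but the paper's own ``proof'' stops there: it declares the lemma a ``direct consequence'' of those invariants and explicitly defers a detailed argument to a future version. You have therefore gone considerably further than the source, and your window-counting strategy is a reasonable way to make the density bound precise: $K$ changes inside a window of $W$ blocks must be fed by roughly $K(\lfloor n/2\rfloor+1)$ vote-carrying blocks, so the average change-free gap is forced toward $\lfloor n/2\rfloor$ as $W\to\infty$, and pigeonhole extracts a single gap of the required length.

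Two steps do not hold as written, though both are repairable. First, your additive $n_{\min}$ for votes carried into the window is justified by ``at most one pending vote per current validator'', but the discard rule clears only the votes targeting the validator whose change just took effect; a validator can therefore hold pending votes for many distinct targets simultaneously, and an adversary can pre-load $j$ targets each to one vote below threshold and then trigger $j$ changes in $j$ consecutive blocks, so the pre-window contribution is not $O(n)$ in the way you claim. The correct (and sufficient) observation is that every pending vote was cast in some block before the window, so the carry-in is bounded by the window's starting height --- a constant once that height is fixed --- which still gives the asymptotic conclusion. Second, your extracted gap is measured against $n_{\min}$ while the lemma demands $\lfloor n_{{\fb'}_h}/2\rfloor$ for the specific $\fb'$ preceding the gap; you flag this, but the assertion that the strict majority ``supplies exactly the slack needed'' is not an argument --- one actually has to charge each change $\lfloor n/2\rfloor+1$ at its local value of $n$ (a weighted average) and then relate the $n$ at the head of the long gap to the charges, which takes a little more work. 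Neither issue is fatal, and your argument is substantially more complete than the sketch the paper provides.
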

\begin{proof}
	Direct consequence of the following invariants of the algorithm for modifying the validator set:
	\begin{itemize}
		\item only one vote for adding or removing validators can be cast per block;
		\item more than half of the validators must cast a consistent vote for the vote to have effect.
	\end{itemize}
	A more detailed proof may be provided in a future version of this work.
\end{proof}

\begin{lemma}\label{lem:if-proposer-for-round-0-is-honest-and-fair-network-behaviour-assumption-is-true-then-block-proposed-is-finalised-in-round-0}
	If the proposer for round 0 of the $h$-th instance of the \ibfpmthree{} is honest and all messages sent by honest validators in round 0 are delivered within time $\frac{\TimeoutForRoundZero}{3}$  then the block proposed by the proposer is finalised within round 0.
\end{lemma}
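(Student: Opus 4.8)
The plan is to trace one Proposal--Prepare--Commit pass through round 0 and show it finishes before the round-0 timer, whose duration is ${RoundTimerTimeout}(0)=\TimeoutForRoundZero$, can fire. First I would pin down the honest population: since at most $f(n_h)$ validators are Byzantine, \cref{lem:q+f-leq-n} gives at least $n_h-f(n_h)\geq\quorum{n_h}$ honest validators for instance $h$, one of which is the honest proposer $p$. On entering round 0, $p$ sets its ${acceptedPB}$ and multicasts a Proposal message for round 0 carrying a valid Ethereum block for height $h$ with round field $0$; because $p$ is honest this message meets every clause of the acceptance guard at \cref{ln:upon-pre-prepare} (sent by the round-0 proposer, matching Keccak hash, $r_p=r_{EB}=0$, and a block valid for height $h$), so every honest non-proposing validator will accept it.

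Next I would push the three phases forward, charging each at most $\frac{\TimeoutForRoundZero}{3}$ by hypothesis. Within $\frac{\TimeoutForRoundZero}{3}$ of $p$ sending the Proposal, every honest non-proposing validator has accepted it and multicast a Prepare message for round 0 with hash $\kec{\pb{}}$, while $p$ already holds $\pb{}$ in its ${acceptedPB}$. As there are at least $\quorum{n_h}-1$ honest non-proposing validators, within a further $\frac{\TimeoutForRoundZero}{3}$ every honest validator (including $p$) has collected the $\quorum{n_h}-1$ valid Prepare messages demanded by the guard at \cref{ln:received-prepare} and multicasts a Commit message with its own commit seal. After one more $\frac{\TimeoutForRoundZero}{3}$ every honest validator has received the $\quorum{n_h}$ Commit messages sent by the honest set, meeting the guard at \cref{ln:received-commit}, and so assembles and broadcasts a valid finalised block for \eb{}. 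I would stress that all three quorum thresholds are met exactly by honest validators alone, so the argument needs no Byzantine cooperation.

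Summing the phases yields $3\cdot\frac{\TimeoutForRoundZero}{3}=\TimeoutForRoundZero$, exactly the length of the round-0 timer, which is precisely why the hypothesis caps delays at $\frac{\TimeoutForRoundZero}{3}$. The delicate part, which I expect to be the main obstacle, is the timer bookkeeping against \cref{lem:validator-moves-to-higher-round-by-time-r+RoundTimerTimeout}: I must guarantee the finalised block is produced while its creator is still in round 0, i.e. before that validator's round-0 timer expires. Measuring the delay chain from the instant $p$ enters round 0 and sends its Proposal, finalisation completes no later than $\TimeoutForRoundZero$ after that instant, coinciding with the expiry of $p$'s own timer; since a timer fires only strictly after its duration elapses, the block is finalised within round 0. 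The residual care is to bound the spread between the times at which honest validators enter round 0 of instance $h$, so that every Prepare and Commit is emitted by a sender still in round 0; pinning down this spread, and thereby confirming each sender is in round 0 when it acts, is where the real work of the proof lies.
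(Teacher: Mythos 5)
Your proof follows essentially the same route as the paper's: a three-phase Proposal--Prepare--Commit cascade through round 0, each phase charged at most $\frac{\TimeoutForRoundZero}{3}$ by hypothesis, summing to the round-0 timer duration. The ``residual care'' you flag about the spread of round-0 start times (and whether each sender is still in round 0 when it acts) is a genuine subtlety, but the paper's own proof does not address it either --- it simply asserts that the four-step sequence completes within $3\cdot\frac{\TimeoutForRoundZero}{3}$ of the proposer sending its Proposal.
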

\begin{proof}
	Let $\delta$ be $\frac{\TimeoutForRoundZero}{3}$ and $p_0$ be the proposer for round 0 of the $h$-th instance of the \ibfpmthree{}.
	The following sequence of events leads to the creation of a valid finalised block completes in no more than $3\cdot\delta$ time.
	\begin{enumerate}
		\item At time $t$, $p_0$ sends a Proposal message for round 0.
		\item By time $t+\delta$ all honest validators receive the Proposal message and therefore send a Prepare message for the block included in the Proposal message.
		\item By time $t+2\cdot \delta$ all honest validators receive the $ \quorum{n_h}-1 $ Prepare messages sent by all honest validators (except for the proposer) and therefore send a Commit message matching the Prepare and Proposal messages received.
		\item By time $t+3\cdot \delta$ all honest validators receive the \quorum{n_h} Commit messages for round $r$ which are sufficient to create a valid finalised block.
	\end{enumerate}
\end{proof}

\begin{lemma}\label{lem:probability-one-of-the-next-f-block-is-prposed-by-an-honest-validator-is-higher-than-0}
	Let \fb{} be a generic valid finalised block with height $h$.
	If the selected proposer logic is Round-Robin (defined at page \hyperlink{def:round-roubin-selection-logic-definition}{ \pageref*{sec:description-of-the-ibfpmthree-protocol}} of \cref*{sec:description-of-the-ibfpmthree-protocol}) and the validator set in the next $f(n_h)+1$ blocks does not change, then the probability that at least one of the next $f(n_h)+1$ finalised blocks contains an Ethereum block created by an honest validator is higher than 0.
\end{lemma}
\begin{proof}
	Assume that a finalised block \fb{} with height $h$ has been created by a validator $v$ while in round $r$.
	Based on the Round-Robin proposer selection logic, the validator $v'$ selected to be the proposer for the first round of the \ibfpmthree{} for the next height, $h+1$, is the same validator that would have been selected as proposer for round $r+1$ for the $h$-th instance of the \ibfpmthree{}.\\
	It is obvious that at least one proposer, say $v''$, out of the next $f(n_h)+1$ validators selected by the proposer selection function is honest.
	The \fairnetworkbehaviourassumption{} implies that the probability that $v''$ receives \quorum{n_h} Round-Change messages and can therefore create a valid Proposal message is higher than 0.
	Also, \Cref{lem:if-proposer-for-round-0-is-honest-and-fair-network-behaviour-assumption-is-true-then-block-proposed-is-finalised-in-round-0} and the \fairnetworkbehaviourassumption{} imply that the probability that the block proposed by $v''$ is finalised is higher than 0.
	This concludes the proof.
\end{proof}

\begin{lemma}\label{lem:a-block-proposed-by-an-honest-validator-will-be-eventually-finalised}
	If the proposer selection logic is Round-Robin, then for any Ethereum block \eb{} included in a valid finalised block, an Ethereum block $\eb'$ with height higher than the height of \eb{} and created by an honest validator will be eventually finalised.
\end{lemma}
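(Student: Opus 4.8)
The plan is to chain together the three probabilistic ingredients already in place: \Cref{lem:eventually-there-is-a-sequence-of-n-half-blocks-with-no-change-to-the-validator-set}, which furnishes windows of consecutive blocks over which the validator set is frozen; \Cref{lem:probability-one-of-the-next-f-block-is-prposed-by-an-honest-validator-is-higher-than-0}, which shows that inside such a window the probability of finalising an honestly-created Ethereum block is strictly positive; and the Weak-Liveness guarantee of \Cref{lem:proof-of-weak-liveness}, which ensures the chain never stops growing so that such windows keep recurring. Fix an Ethereum block $\eb$ at some height $h$ contained in a valid finalised block. I would show that, almost surely, some later height carries a finalised block whose Ethereum block was created by an honest validator, which is exactly the claimed $\eb'$.

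First I would construct an infinite sequence of disjoint \emph{stable windows}. Starting from the finalised block at height $h$, \Cref{lem:eventually-there-is-a-sequence-of-n-half-blocks-with-no-change-to-the-validator-set} yields a later finalised block $\fb'$ at some height $h' \geq h$ over whose next $\lfloor n_{h'}/2 \rfloor$ blocks the validator set does not change. Since $f(n)+1 \leq \lfloor n/2 \rfloor$ for every $n \geq 2$, this window is long enough to invoke \Cref{lem:probability-one-of-the-next-f-block-is-prposed-by-an-honest-validator-is-higher-than-0}. Applying \Cref{lem:eventually-there-is-a-sequence-of-n-half-blocks-with-no-change-to-the-validator-set} again past the end of this window produces a further window, and iterating gives disjoint windows $W_1, W_2, \dots$ at strictly increasing heights. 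Weak-Liveness (\Cref{lem:proof-of-weak-liveness}) is what makes the iteration legitimate: it guarantees that every intermediate height is eventually finalised, so the construction never stalls before reaching the next window.

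Next, within each window $W_k$ the validator set is constant and the Round-Robin logic applies, so \Cref{lem:probability-one-of-the-next-f-block-is-prposed-by-an-honest-validator-is-higher-than-0} supplies a probability $p_k > 0$ that $W_k$ contains an honestly-created finalised Ethereum block, necessarily at a height strictly greater than $h$. Because the windows are disjoint in height, the probability that none of the first $k$ windows succeeds is bounded by $\prod_{i=1}^{k}(1-p_i)$; if the $p_i$ admit a common lower bound $p > 0$ this product tends to $0$, so with probability $1$ some window eventually succeeds, producing the desired $\eb'$.

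The main obstacle is this last step: converting ``positive probability in each window'' into ``success with probability $1$''. This requires ruling out that the $p_k$ decay fast enough to keep $\prod_k (1-p_k)$ bounded away from $0$. I would address it by observing that inside a frozen window $n_h$ is fixed and the \fairnetworkbehaviourassumption{} gives a timely-delivery probability depending only on $\TimeoutForRoundZero$ and the bounded message pattern used in \Cref{lem:if-proposer-for-round-0-is-honest-and-fair-network-behaviour-assumption-is-true-then-block-proposed-is-finalised-in-round-0}; this yields a lower bound on $p_k$ expressible purely in terms of $n_h$, after which one argues that $n_h$ over the relevant windows is bounded, furnishing the uniform $p>0$. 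A secondary subtlety, which I would flag but treat as routine, is the (in)dependence of the window events: the disjointness in height together with a per-window conditional lower bound suffices to run the standard Borel--Cantelli-style estimate $\Pr[\text{first }k\text{ fail}] \leq (1-p)^k$ without needing full independence.
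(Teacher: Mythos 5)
Your proposal follows the same route as the paper's own proof: it chains \Cref{lem:eventually-there-is-a-sequence-of-n-half-blocks-with-no-change-to-the-validator-set}, \Cref{lem:probability-one-of-the-next-f-block-is-prposed-by-an-honest-validator-is-higher-than-0} and the Weak-Liveness result \Cref{lem:proof-of-weak-liveness}, using the inequality $\left\lfloor \frac{n}{2} \right\rfloor \geq f(n)+1$ to repeat the window argument until an honestly-created block is finalised. If anything you are more careful than the paper, which simply asserts that ``the probability that the Lemma is false approaches 0 as the blockchain height goes to infinity'' without addressing the uniform lower bound on the per-window success probabilities or the independence of the window events that you correctly flag as the points needing justification.
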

\begin{proof}
	Let $\fb_h$ be the height of the generic finalised block \fb{}.
	Based on \cref{lem:eventually-there-is-a-sequence-of-n-half-blocks-with-no-change-to-the-validator-set}, there eventually exists a finalised block $ \fb' $ with $ {\fb'}_h \geq \fb_h $ such that there is no change to the validator set in the next $ \left \lfloor \frac{n_{{\fb'}_h}}{2} \right \rfloor $ blocks. 
	Based on \cref{lem:probability-one-of-the-next-f-block-is-prposed-by-an-honest-validator-is-higher-than-0}, for any block \fb{} such that there is no change to the validator set in the next $ f(n_{\fb_h}) +1$ blocks, the probability that none of the Ethereum blocks included in the next $f(n_{\fb_h}) +1$ blocks is created by an honest validator is lower than 1.
	Since $\left \lfloor \frac{n}{2} \right \rfloor \geq f(n)+1$, this implies that the probability that the Lemma is false approaches 0 as the blockchain height goes to infinity.
	This last implication and the Weak-Liveness property of \ibfpmthree{} (\Cref{lem:proof-of-weak-liveness}) which ensures that eventually a new block is added to the chain prove the Lemma.
\end{proof}

\begin{lemma}\label{lem:liveness-proof}
	The IBFT 2.0 protocol with Round-Robin proposer selection logic guarantees optimal Liveness.
\end{lemma}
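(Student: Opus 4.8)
The plan is to argue by contradiction, reducing Liveness to the already-established fact that honest validators get to produce finalised blocks infinitely often. First I would unpack \cref{def:optimal-Byzantine-fault-tolerance-liveness}: it suffices to show that, whenever at most $f(n_h)$ validators are Byzantine in every instance $h$, any transaction $T$ submitted to all honest validators is eventually placed in the local ledger of at least one honest node. So I fix such a $T$ and suppose, for contradiction, that $T$ is never added to any honest node's ledger. Since an honest node appends every valid finalised block it receives, $T$'s absence from all honest ledgers means $T$ has not been finalised on the chain seen by any honest validator; in particular $T$ is still a pending transaction at every honest validator whenever that validator is called to propose a block.

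Next I would invoke \cref{lem:a-block-proposed-by-an-honest-validator-will-be-eventually-finalised}: because the proposer selection logic is Round-Robin, starting from the finalised block current at the time $T$ is submitted there is always eventually a strictly higher finalised block whose Ethereum block was created by an honest validator. Iterating this yields an infinite increasing sequence of finalised blocks $\fb_1, \fb_2, \dots$ whose Ethereum blocks were each created by some honest proposer. The key remaining step is to connect the honest-proposer property to the inclusion of $T$: since each such block is built by an honest validator using the fair transaction selection algorithm, and $T$ is, by the contradiction hypothesis, perpetually pending at that validator, fairness forbids $T$ from being starved across this infinite sequence. Hence some $\fb_k$ must contain $T$. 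But $\fb_k$ is a valid finalised block that its honest creator broadcasts and adds to its own local ledger, so $T$ lands in the ledger of at least one honest node, contradicting our hypothesis.

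I would close by noting that \cref{lem:proof-of-weak-liveness} (optimal Weak-Liveness) is what guarantees the chain keeps growing so that the blocks $\fb_k$ actually come into existence, and that the Persistence of \cref{lem:persistence-proof} is not required here, since Liveness only asks for inclusion at one honest node (though it would in fact propagate $T$ to all of them). The main obstacle is the middle step: the paper deliberately leaves the fair transaction selection algorithm undefined, so the proof hinges on articulating precisely the fairness guarantee I rely on, namely that no pending transaction is postponed indefinitely by honest proposers, which is what lets me pass from ``infinitely many honest-created finalised blocks'' to ``one of them contains $T$''. Care is also needed because \cref{lem:a-block-proposed-by-an-honest-validator-will-be-eventually-finalised} only yields \emph{some} honest-created finalised block at each stage rather than a fixed proposer, so the fairness argument must be made over the whole infinite sequence rather than at a single block.
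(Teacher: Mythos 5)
Your proof is correct and takes essentially the same route as the paper: the paper's own (much terser) argument likewise combines \cref{lem:a-block-proposed-by-an-honest-validator-will-be-eventually-finalised} with the fairness of the honest validators' transaction selection logic to conclude that the transaction is eventually included. Your contradiction framing and your explicit observation that the fairness property must be applied across the whole infinite sequence of honest-created blocks (since the fair selection algorithm is left undefined) merely make precise what the paper leaves implicit.
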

\begin{proof}
	If a transaction is sent to all validators, since (i) as proved in \cref{lem:a-block-proposed-by-an-honest-validator-will-be-eventually-finalised}, for any finalised block \fb{} there eventually exists a finalised block that has height higher than \fb{} and that includes an Ethereum block created by an honest validator and (ii) honest validators employee a fair transaction selection logic, the transaction will be eventually included in a finalised block.
\end{proof}

\begin{theorem}
	IBFT 2.0 is a robust blockchain protocol when operating in an eventually synchronous network with no more than $f(n_h)$ Byzantine validators for each blockchain height $h$.
\end{theorem}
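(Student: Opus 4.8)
The plan is to establish robustness by verifying both properties required by \cref{def:robustnes}---Persistence and Liveness---at the optimal Byzantine-fault-tolerance threshold $f(n_h)$ for every height $h$. Since robustness is by definition the conjunction of these two properties, almost all of the work has already been carried out in the preceding lemmas, and this final theorem is essentially an assembly step.

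First I would invoke \cref{lem:persistence-proof}, which shows that IBFT 2.0 achieves optimal Persistence, i.e.\ it guarantees the Persistence property of \cref{def:robustnes} provided no more than $f(n_h)$ validators are Byzantine for each instance $h$. This in turn rests on the Safety result for the \ibfpmthree{} (\cref{lem:no-two-differnt-block-can-be-produced}) together with the IBFT-M1 persistence argument imported from \cite{IBFT1-Analysis}. I would then invoke \cref{lem:liveness-proof}, which establishes optimal Liveness under the Round-Robin proposer selection logic: any transaction submitted to all honest validators is eventually included in the local ledger of at least one honest node despite up to $f(n_h)$ Byzantine validators. Combining the two, the conjunction of optimal Persistence and optimal Liveness is precisely the statement that IBFT 2.0 implements a robust distributed permissioned transaction ledger with $f(n_h)$-Byzantine-fault-tolerance for each height $h$, which concludes the proof.

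The one point requiring care is bookkeeping rather than mathematics. I would make explicit that the Liveness branch relies on the Round-Robin selection logic and on the probabilistic \fairnetworkbehaviourassumption{}, so the theorem as stated should be read as applying to the Round-Robin configuration; otherwise the liveness guarantee would be overclaimed relative to what \cref{lem:liveness-proof} actually delivers. I would also confirm that the Byzantine threshold is the same $f(n_h)$ in both branches---it is, since each property is established at its respective \emph{optimal} threshold, and \cite{Dwork:1988:CPP:42282.42283} fixes this common value $f(n_h)=\lfloor (n_h-1)/3 \rfloor$ as the best achievable in the eventually synchronous model. With these hypotheses aligned, no further argument is needed beyond citing the two lemmas and \cref{def:robustnes}.
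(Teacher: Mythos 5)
Your proposal is correct and follows exactly the paper's own route: the paper's proof is a one-line appeal to \cref{lem:persistence-proof,lem:liveness-proof}, which is precisely the assembly you carry out. Your additional bookkeeping note --- that the Liveness branch implicitly requires the Round-Robin proposer selection and the \fairnetworkbehaviourassumption{}, so the theorem should be read with those hypotheses attached --- is a fair observation about a caveat the paper's terse statement leaves implicit, but it does not change the argument.
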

\begin{proof}
	Obvious from \cref{lem:persistence-proof,lem:liveness-proof}.
\end{proof}

\section{Improvements}\label{sec:improvements}
In this section we discuss a series of possible improvements applicable to the IBFT 2.0 protocol described in \cref{sec:protocol-description}.
We call this protocol 

\subsection{Reduction of the number of assumptions required for the correctness of the protocol}
\subsubsection{\improvement: Remove the \fairnetworkbehaviourassumption{}}\label{sec:imp:remove-prob-assumption}
\paragraph{Modification}
In order to remove the dependency of the robustness proof on the \fairnetworkbehaviourassumption{} it is sufficient to require that the proposer selection function for height $h$ only selects validators that did not propose any of the latest $f(n_h)$ blocks where $n_h$ is the number of validators for the block with height $h$.

This modification corresponds to replacing the \Proposer{$\cdot, \cdot$} function of the IBFT 2.0 protocol with the $ FairProposer $ function defined below:

\begin{algorithm}[H]
	\setcounter{AlgoLine}{0}
	\caption{$ FairProposer(chain, r) $}
	\Fn{$ FairProposer(chain, r) $}
	{
		$\Let\: n \equiv \N{chain}$\;
		$\Let\: latestfProposers \equiv \text{proposers of the lastest } f(n) \text{ blocks of } chain$\;
		$ i \gets 0 ;$
		$ p \gets \bot$
		\;

		\While{$ i < r+1$}{
			$p \gets \Proposer{chain,i}$\;
			\If{$\Not (p \In latestfProposers) $}
			{
				$i \gets i +1$\;
			}
		}		
		\Return $p$\;
	}
\end{algorithm}

\paragraph{Justification}
This modification reduces the number of possible proposers for the next block from $n_h$ down to $n_h - f(n_h)$.
Which implies that the minimum number of honest proposers for the next block is reduced from $n_h - f(n_h)$ down to $n_h - 2 \cdot f(n_h)$.
It is quite easy to prove that $n_h - 2 \cdot f(n_h) > 0$ which means that this modification guarantees that at least one of the possible proposers for the next block is honest.
It is quite easy to see how this in turn implies that this modification does not affect the validity of the Weak-Liveness and Liveness properties proved in \cref{sec:robustness-analysis}.

For any finalised block \fb{} with height $h$, provided that there is no change to the validator list in the next $f(n_h)+1$ blocks, since we assume that no more than $f(n_h)$ validators are Byzantine, this modification ensures that at least one out of the next $f(n_h)+1$ finalised blocks includes an Ethereum block created by an honest validator.
It should be noted how this last statement can be used as sketch proof for \cref{lem:a-block-proposed-by-an-honest-validator-will-be-eventually-finalised} without depending on the \fairnetworkbehaviourassumption{} any more.

\subsection{Reduction of the block finalisation latency}
\subsubsection{\improvement: Reduce the minimum number of communication phases from three down to two}\label{sec:remove-fairness-network-assumption}
\paragraph{Modification}
In the IBFT 2.0 protocol, the minimum number of communication phases required to finalise a block is three even if all validators are honest and the message transmission latency is less than $\frac{\TimeoutForRoundZero}{3}$:
\begin{enumerate}
	\item The proposer for round 0 and instance $h$ of the \ibfpmthree{} sends a Proposal message to all validators;
	\item All validators of height $h$, excluding the proposer for round 0, reply by sending a Prepare message to all validators;
	\item All validators of height $h$, once they receive the Proposal message for round 0 from the proposer for round 0 and $ \quorum{n_h} -1 $ Prepare messages from non-proposing validators for round 0, they send a Commit message to all validators.
	\item[] Once a validator for height $h$ receives \quorum{n_h} Commit messages for round 0 from the validators of height $h$, it creates a finalised block;
\end{enumerate}

The modification described in this section, which is based on the  very fast learning protocol presented by \citewithauthor{best-case-complexity-asynch-byz-consnesus}, allows reducing the minimum number of communication phases from three down to two:
\begin{enumerate}
	\item The proposer for round 0 and instance $h$ of the \ibfpmthree{} sends a Proposal message to all validators;
	\item All validators of height $h$, excluding the proposer for round 0, reply by sending a Prepare message to all validators;
	\item[] Once a validator for height $h$ receives the Proposal message for round 0 from the proposer for round 0 and $ n_h -1 $ Prepare messages from all the non-proposing validators for round 0 it creates a finalised block.
\end{enumerate}
Block finalisation in only two communication phases can be accomplished only if all the following conditions are met:
\begin{itemize}
	\item  all validators are honest; 
	\item the network latency is less than $\frac{\TimeoutForRoundZero}{2}$.
\end{itemize}
If we assume the same latency for the different communication phases, then when the conditions listed above are met, this modification reduces the block finalisation latency by 33\%.
To be noted that when the conditions listed above are not achieved, then the protocol degrades back to the performance of the original IBFT 2.0 protocol with no overhead.

This improvement requires applying the following modifications to the IBFT 2.0 protocol:
\begin{itemize}
	\item If \cref{imp:remove-seals} is not applied, then a ``Prepare seal'' mast be added to the Prepare messages.
	If \cref{imp:remove-seals} is applied, then the Prepare messages do not need to be modified;
	\item If a block is finalised in only two phases, then the commit seals include the signatures (or Prepare seals, if \cref{imp:remove-seals} is not applied) of the $n_h -1 $ Prepare messages;
	\item During the validation of a finalised block, if the number of commit seals is $n_h -1$ then the seals are interpreted as signature (Prepare seals, if \cref{imp:remove-seals} is not applied) of Prepare messages, otherwise if the number of seals is \quorum{n_h} then they are interpreted as signature (Commit seals, if \cref{imp:remove-seals} is not applied) of Commit messages;
	\item When a validator sends a Round-Change message, if the validator has received a valid Proposal message for round 0, but it has never \nprepared{}, then it must include the Proposal message in the Round-Change message.
	After the first time that a validator \nprepare{s} while running instance $h$ of the \ibfpmthree{}, the validator will only include the latest Prepared-Certificate in any Round-Change message that it sends;
	\item When a validator sends a Proposal message for a round number higher than 0, if there exists an Ethereum block \eb{} such that: 
	\begin{itemize}
		\item \eb{} is the only one Ethereum block such that at least $f(n_h)+1$ of the Round-Change messages forming the Round-Change-Certificate (included in the Proposal message) include a Proposal message for round 0 and for this block;
		\item no Round-Change message include a valid Prepared-Certificate
	\end{itemize}
	then the proposed block included in the Proposal message must include the Ethereum block \eb{}.\\
	If the conditions listed above are not met, then the Ethereum block to be included in a Proposal message with height higher than 0 must be determined as specified by the original IBFT 2.0 protocol;
	\item On the reception of a Proposal message for a round higher than 0, the same calculation must be performed to validate the Proposal message.
\end{itemize}

As per the description above, this modification allows reducing the minimum number of communication phases from three down to two only for round 0.
While it is possible to modify the IBFT 2.0 protocol to reduce the minimum number of communication phases down to two for any round, because of the following reasons, we believe that this optimisation should be applied only to round 0:
\begin{itemize}
	\item applying the optimisation for rounds higher than 0 as well requires potentially increasing the size of the Round-Change message by the size of one block as the latest accepted Proposal message and the latest Prepared-Certificate can refer to two different blocks;
	\item if the conditions required to achieve block finalisation in two phases are not met for round 0, then either (i) one or more validators are Byzantine and can therefore prevent block finalisation in two phases at any round or (ii) the network latency is higher than $\frac{\TimeoutForRoundZero}{2}$ which means that block finalisation latency is already higher than \TimeoutForRoundZero and therefore whether block finalisation is reached in two or three phases at the next round adds only minimal improvement to the block finalisation latency.
\end{itemize} 

\paragraph{Justification}
In this paragraphs we provide a sketch proof to show that this modification does preserve the Safety property of the \ibfpmthree{}.\\
Since any valid Proposal message for round higher than 0 includes \quorum{n_h} Round-Change messages,  if the Ethereum block \eb{} is finalised in two phases, then at least $ \quorum{n_h} - f(n_h) $ of the Round-Change messages included in the Round-Change-Certificate included in the Proposal message for round 1 have been sent by honest validators and they will therefore include the Proposal message received by the proposer of round 0.
Since $ \quorum{n_h} - f(n_h) \geq f(n_h)+1$ (quite easy to prove), the number of Proposal messages for the Ethereum block \eb{} is $\geq f(n_h)+1$.
Also, since all honest validators will include the Proposal message for the Ethereum block \eb{} in the Round-Change that they send for round 1 and there are at most $f(n_h)$ Byzantine validators, then it is impossible that there exists another Ethereum block $\eb'$ different from \eb{} for which $f(n_h)+1$ Round-Change messages including matching Proposal messages are sent.
These two considerations imply that the only Ethereum block that a valid Proposal message for round 1 can include is \eb.
Because honest validators can only \nprepare{} on the Ethereum block included in a valid Proposal message, the only block that any honest validator can \nprepare{} on in round 1 is \eb.
Therefore, any Round-Change for round 2 sent by honest validators will either include a Proposal message for the Ethereum block \eb, or a Prepared-Certificate for the same Ethereum block \eb.
This argument can be generalised to any round by induction on the round number.

\subsubsection{\improvement: Reduce the latency required for \quorum{n_h} validators to be in the same round for sufficient time to create a finalised block}
\paragraph{Modification}
This improvement requires applying the following modifications to the IBFT 2.0 protocol.
When a validator running the $h$-th instance of the \ibfpmthree{} receives $f(n_h)+1$ Round-Change messages for height $h$ and round number higher than the current round number, it (i) starts the round number corresponding the lowest round number amongst the set of $f(n_h)+1$ Round-Change messages received that have round number higher than the current one  and (ii) sends a Round-Change messages for this round number.

\paragraph{Justification}
Since there are no more than $f(n_h)$ Byzantine validators, if a validator $v$ receives $f(n_h)+1$ Round-Change messages for height $h$ and round number higher than the current round number, then at least one honest validator is in a round higher than the current one.
Specifically, at least one honest validator is in a round equal to or higher than the lowest round number amongst the set of $f(n_h)+1$ Round-Change messages received.
Therefore moving to this round allows validator $v$ to start this higher round sooner than in the original IBFT 2.0 protocol.
It can be seen how this modification does not affect any of the Lemmas used to prove the robustness of the IBFT 2.0 protocol.

\subsection{Decrease message size}
\subsubsection{\improvement: Replace the Round-Change messages included in the Round-Change-Certificate included in Proposal messages with the hashes of the Round-Change messages}
\paragraph{Modification}
This improvement requires replacing the Round-Change messages included in the Round-Change-Certificate included in Proposal messages with the hashes of the Round-Change messages.
We call the set of hashes of Round-Change messages included in a Proposal message as effect of this modification Round-Change-Hash-Certificate.
This improvement requires validators to cache Round-Change messages that they receive and then verify a Proposal message by using a Round-Change-Certificate composed of the received Round-Change messages for which Keccak hash matches one of the hashes included in the Round-Change-Hash-Certificate.

If this modification is applied, then the block hash field can be removed from the Proposal message and the proposed block field can be moved inside the signed portion of the message.
The reason why Proposal messages in the original IBFT 2.0 protocol include the block hash in the signed portion of the message is to reduce the size of Proposal messages with round higher than 0.
This because in the original IBFT 2.0 protocol a signed Proposal message with enough information to ``bind'' it to a proposed block must be included in Proposal messages for round higher then 0 as part of the Prepared-Certificates included in the Round-Change-Certificate.
It should be obvious why if this improvement is applied, then the size of Proposal messages for rounds higher than 0 becomes independent of the block size.

Similarly, if this improvement is applied then the size of Proposal messages for rounds higher than 0 becomes independent of the Round-Change message size.
Therefore, there is no more need to split the Round-Change message between a signed portion and an unsigned portion and all message fields can be moved within the signed portion of the message.

\paragraph{Justification}
The justification for this modification is based on how the Gossip protocol works.
Specifically, when a validator receives a message from one of its peers, it transmits that message to all other peers.
This is essentially similar to how the transmission of a message works: a message is sent to all peers.
The only difference between the reception and the transmission is that when a message is received the message is not transmitted to the peer that sent it as that peer already has the message.
Hence, in the IBFT 2.0 protocol, when a validator sends a Proposal message for a round higher than 0, the Round-Change messages included in the Round-Change-Certificate have already been transmitted to the same peers that the Proposal message will be transmitted to.
Since a validator sends a Proposal message as soon as it receives \quorum{n_h} Round-Change messages, the Round-Change messages included in the Proposal message are transmitted to the peers not long before the Proposal message is transmitted as well.\\
To be noted that the Round-Change-Certificate cannot be removed from the Proposal message as validators receiving a Proposal message must be able to determine whether any valid block  or only a \nprepared{} block could have been included as proposed block in the Proposal message.
If the Round-Change-Certificate is completely removed then the only way for validators to validate Proposal messages is to receive a Round-Change message from each of the $n_h$ validators which could impair the liveness of the protocol as $f(n_h)$ of the validators may be Byzantine and never send a Round-Change message.


\subsubsection{\improvement:\label{imp:remove-seals} Remove the commit seal from the Commit message and replace the commit seals included in a block with the signatures of the Commit messages received}
\paragraph{Modification}
This improvement requires the following modifications to the IBFT 2.0 protocol:
\begin{itemize}
	\item remove the commit seal from the Commit message;
	\item when composing a finalisation proof, collate the signatures of \quorum{n_h} valid Commit messages sent by distinct validators;
	\item when validating a finalised block, reconstruct the body of the Commit message that would have been sent for the block under validation and verify that
	 each of the signatures included in the finalisation proof is a valid signature of the reconstructed Commit message by one of the validators. 
	This can be done via the Elliptic Curve Signature Recovery function.
\end{itemize}

\paragraph{Justification}
Finalised blocks include all of the information required to reconstruct the body of the Commit message sent for that specific block, namely: the block height, the round number at which the \quorum{n_h} Commit messages where received and the hash of the block.

\subsubsection{\improvement: Reduce the number of messages included in  Prepared-Certificates and the number of commit seals included in finalisation proofs to the minimum required to guarantee the robustness of the protocol}
\paragraph{Modification}
This improvement requires to apply the following modifications to the IBFT 2.0 protocol:
\begin{itemize}
	\item for any Prepared-Certificate included in a Round-Change message, include only $ \quorum{n_h} - 1 $ Prepare messages for the current round, instance of the \ibfpmthree{} and block hash matching the accepted Proposal, even if more Prepare messages with the same characteristics have been received;
	\item for any finalisation proof, include only \quorum{n_h} commit seals included in valid Commit messages for the current round, instance of the \ibfpmthree{} and block hash matching the accepted Proposal, even if more commit seals included in valid Commit messages have been received.
\end{itemize}

\paragraph{Justification}
As per IBFT 2.0 protocol definition, all of the received Prepare messages for the current round, instance of the \ibfpmthree{} and block hash matching the accepted Proposal message are included in a Prepared-Certificate.
Similarly, as per IBFT 2.0 protocol definition, all of the available commit seals received as part of Commit messages for the current round, instance of the \ibfpmthree{} and block hash matching the accepted Proposal are included in a finalisation proof.
As it can be seen from the robustness analysis in \cref{sec:robustness-analysis}, the number of messages included in Prepared-Certificates and the number of commit seals included in finalisation proofs can be reduced as indicated by this improvement without affecting the robustness of the protocol.

\subsubsection{\improvement: Remove the Proposal message from the Prepared-Certificate}
\paragraph{Modification}
The Proposal message included in the Prepared-Certificates can be removed without impacting the robustness of the protocol.
While this improvement reduces the message size only marginally, it allows simplifying a bit the Prepared-Certificate validation logic.

\paragraph{Justification}
In this paragraph we provide a sketch proof to show how this modification does not affect the robustness of the protocol.
Specifically, we show that \cref{lem:no-two-valid-pc-can-be-created}, which states that	``\emph{no two Prepared-Certificates for the same round and for different block hashes can be created}'',  still holds if this improvement is applied.

\begin{proof}
	By contradiction assume that the Lemma is false and two Prepared-Certificates, say $PC$ and $PC'$ are created for the same round $r$ but for different block hashes, say $H$ and $H'$ respectively.
	We distinguish two cases: the proposer for round $r$ is honest and the proposer for round $r$ is Byzantine.\\
	\textbf{Case 1: the proposer for round $r$ is honest.} 
	In each Prepared-Certificate there are at least \linebreak $\quorum{n_h} - f(n_h) -1$ Prepare messages sent by honest validators.
	It is easy to verify that this number is always higher than 0.
	Since honest validators send only Prepare messages matching the accepted Proposal message, the honest validators that sent the Prepare messages included in $ PC $ must have received a Proposal message for block hash $ H $, whereas the honest validators that sent the Prepare messages included in $ PC' $ must have received a Proposal message for block hash $ H' $.
	This is in clear contradiction with the invariant (easily deductible from the pseudocode) that honest proposers never send different Proposal messages for the same round number.\\
	\textbf{Case 2: the proposer for round $r$ is Byzantine.} 
	Since the Prepare messages included in Prepared-Certificates are sent by non-proposing validators, no more than $ f(n_h)-1 $ Prepare messages in any Prepared-Certificate are sent by Byzantine validators.
	Conversely, at least $ \quorum{n_h} - 1 - (f(n_h)- 1) \equiv \quorum{n_h} - f(n_h)$ of the Prepare messages are sent by honest validators.
	Since any set of \quorum{n_h} validators contains at least $ \quorum{n_h} - f(n_h) $ honest validators, \Cref{lem:intersection-of-q-and-} implies that the intersection of any two sets of $ \quorum{n_h} -f(n_h) $ honest validators is not empty and therefore includes at least one honest validator.
	From here the proof proceeds like in \cref{lem:no-two-valid-pc-can-be-created}.	
\end{proof}

\subsubsection{\improvement:\label{imp:remove-repeated-message-bodies} Remove repeated message bodies from the Prepared-Certificate included in a Round-Change message}
\paragraph{Modification}
Replace the set of Proposal and Prepare messages included in a Prepared-Certificate with a set including only the following pieces of information:
\begin{itemize}
	\item round number of the messages included in the Prepared-Certificate;
	\item block hash of the block included in the Proposal and Prepare messages included in the Prepared-Certificate;
	\item the signature of the Proposal message included in the Prepared-Certificate;
	\item all the signatures of the Prepare messages included in the Prepared-Certificate.
\end{itemize}
A Prepared-Certificate constructed as described above can be validated by reconstructing the body of the Proposal and Prepare messages included in the original Prepared-Certificate by using the pieces of information listed above (the height is provided as part of the Round-Change message) and verifying that (i) the signature for the Proposal message is a valid signature of the proposer (for the round number included in the modified version of the Prepared-Certificate)  over the Proposal message body and (ii) each Prepare signature is a correct signature of one of the non-proposing validators (for the round number included in the modified version of the Prepared-Certificate)  over the Prepare message body.

\paragraph{Justification}
The body of all the Prepare messages included in a Prepared-Certificate is identical.
The body of a Proposal message varies from the body of a Prepare message only in the message type.

\subsubsection{\improvement: Use an aggregate signature scheme for the Prepared-Certificate}
\paragraph{Modification}
The size of the Prepared-Certificate can be further reduced by employing an aggregate signature scheme which allows replacing the $ \quorum{n_h}-1 $ signatures of the Prepare messages with a single signature.

\subsection{Improve the semantics of the messages }
\subsubsection{\improvement: Do not consider the round number when computing the block hash included in IBFT 2.0 messages}
\paragraph{Modification}
As per definition of the IBFT 2.0 protocol, the block hash included in the IBFT 2.0 messages is calculated over a tuple composed of the Ethereum block and the current round number.
This improvement requires replacing this hash with the hash of the Ethereum block only.
To be noted that it is still required to include the current round number in the finalisation proof.

\paragraph{Justification}
The purpose of the $h$-th instance of the \ibfpmthree{} is to decide on the Ethereum block to be added at height $h$.
Therefore, using the PBFT \cite{Castro:1999:PBF:296806.296824} terminology, the \emph{value} to be decided by each instance of the \ibfpmthree{} is the Ethereum block, not the tuple composed by the Ethereum block and the current round number.
The round number is added to the finalisation proof exclusively because the finalisation proof composed of \quorum{n_h} commit seals is valid only if these commit seals were included in Commit messages targeting the same round number.


%
%
%

\section*{Acknowledgements}
Thanks to Robert Coote(PegaSys) who reviewed this work and provided insightful comments on how to improve its clarity and readability.
The author also wishes to thank Shahan Khatchadourian and Ben Edgington who reviewed some early pieces of work that were fundamental in solidifying the thinking required for this paper.
This work has been carried out as part of the author's research activity at PegaSys (ConsenSys).

\printbibliography

\end{document}